\documentclass[11pt,dvipsnames]{article}

\usepackage{setspace}
\usepackage{jheppub} 
\usepackage[utf8]{inputenc}
\usepackage{epsfig}
\usepackage{bbm}
\usepackage{bbding}
\usepackage{amsfonts}
\usepackage{amssymb}
\usepackage{mathtools}
\usepackage{dsfont}
\usepackage{bm}
\usepackage{graphicx}
\usepackage{longtable}
\usepackage{pdflscape}
\usepackage{subcaption}
\usepackage{psfrag}
\usepackage{amsthm}
\usepackage[capitalise]{cleveref}
\usepackage{tikz}
\usetikzlibrary{backgrounds}
\usepackage{pgfplots}
\usepackage{pgfplotstable}
\usepackage{subfiles}
\usepackage{longtable}
\usepackage{tabularx}
\usepackage{ltablex}
\usepackage{booktabs}
\usepackage[export]{adjustbox}
\usepackage{listings}
\usepackage{multirow} 
\usepackage{comment}
\usepackage{cancel,slashed}
\usepackage{bbm}
\usepackage{graphicx}
\usepackage{latexsym}
\usepackage{transparent}
\usepackage{mathtools}
\usepackage{array}
\usepackage{makecell}
\usepackage{graphics,psfrag}
\usepackage{placeins}
\usepackage{nowidow}
\usepackage{listings}
\usepackage[normalem]{ulem}
\usepackage{environ}
\usepackage{tikz}
\usepackage{enumitem}
\usetikzlibrary{positioning,matrix,arrows.meta,calc}
\usetikzlibrary{positioning,trees,decorations.pathmorphing,decorations.markings,decorations.pathreplacing,calc,shapes,shapes.geometric,shapes.symbols,patterns,arrows}

\usetikzlibrary{fadings}
\usetikzlibrary{decorations.shapes}
\usepackage{amsmath}
\usepackage{amssymb}
\usepackage{amsthm}
\usepackage{algorithm}
\usepackage{algpseudocode}
\usepackage{placeins}

\usepackage{quiver}

\newcommand{\newshortstack}[1]
{\begingroup\renewcommand{\arraystretch}{1.1}
\ifmmode
\begin{array}{c}#1\end{array}%
\else
\begin{tabular}{c}#1\end{tabular}%
\fi
\endgroup}

\pgfplotsset{
        compat=1.9,
        compat/bar nodes=1.8,
    }

\theoremstyle{definition} 
\newtheorem{theorem}{Theorem}
\newtheorem{prop}[theorem]{Proposition}
\newtheorem{lem}[theorem]{Lemma}
\newtheorem{corollary}[theorem]{Corollary}
\newtheorem{definition}[theorem]{Definition}
\newtheorem{conjecture}[theorem]{Conjecture}

\newtheorem*{theorem*}{Theorem}
\newtheorem*{prop*}{Proposition}
\newtheorem*{lem*}{Lemma}

\crefname{theorem}{Th.}{Ths.}
\crefname{definition}{Def.}{Defs.}
\crefname{corollary}{Cor.}{Cors}
\crefname{prop}{Prop.}{Props.}
\crefname{conjecture}{Conj.}{Conjs.}
\crefname{lem}{Lem.}{Lems.}
\crefformat{section}{#2§#1#3}

\Crefname{theorem}{Th.}{Ths.}
\Crefname{definition}{Def.}{Defs.}
\Crefname{corollary}{Cor.}{Cors}
\Crefname{prop}{Prop.}{Props.}
\Crefname{conjecture}{Conj.}{Conjs.}
\Crefname{lem}{Lem.}{Lems.}

\newcommand{\be}{\begin{equation}}
	\newcommand{\ee}{\end{equation}}
\newcommand{\bea}{\begin{eqnarray}}
	\newcommand{\eea}{\end{eqnarray}}

\renewcommand{\epsilon}{\varepsilon}

\newcommand{\ben}{\begin{enumerate}}
	\newcommand{\een}{\end{enumerate}}
\newcommand{\bei}{\begin{itemize}}
	\newcommand{\eei}{\end{itemize}}
	
	\makeatletter
\tikzset{
    dot diameter/.store in=\dot@diameter,
    dot diameter=3pt,
    dot spacing/.store in=\dot@spacing,
    dot spacing=10pt,
    dots/.style={
        line width=\dot@diameter,
        line cap=round,
        dash pattern=on 0pt off \dot@spacing
    }
}

\usepackage{setspace}
\usepackage{graphicx} 
\usepackage{lmodern}
\usepackage{xspace}
\usepackage{array} 
\usepackage{etoolbox}
\usetikzlibrary{positioning,trees,decorations.pathmorphing,decorations.markings,decorations.pathreplacing,calc,shapes,shapes.geometric,shapes.symbols,patterns,arrows}
	
\tikzset{decorate sep/.style 2 args=
{decorate,decoration={shape backgrounds,shape=circle,shape size=#1,shape sep=#2}}}
\usepackage{tikz}
\usepackage{latexsym}
\usepackage{graphicx}
\usepackage{svg}
\usepackage{tikz-3dplot}
\usetikzlibrary{arrows.meta}
\usetikzlibrary{patterns}

\definecolor{bluetto}{HTML}{0088ff}
\definecolor{snowfl}{HTML}{00ace6}
\definecolor{dgreen}{HTML}{298A08}
\usepackage[most]{tcolorbox}

\newcommand{\cO}{\mathcal{O}}

\newcommand{\cE}{\mathcal{E}}

\newcommand{\cN}{\mathcal{N}}

\usepackage{setspace}
\makeatletter
\renewcommand*\env@matrix[1][\arraystretch]{%
  \edef\arraystretch{#1}%
  \hskip -\arraycolsep
  \let\@ifnextchar\new@ifnextchar
  \array{*\c@MaxMatrixCols c}}
\makeatother

\usepackage{hyperref}

\begin{document}

	\pagestyle{plain}

	\makeatletter
	\@addtoreset{equation}{section}
	\makeatother
	\renewcommand{\theequation}{\thesection.\arabic{equation}}
	\pagestyle{empty}

\vspace{2cm}

\begin{center}
\phantom{a}\\
\vspace{0.8cm}
\scalebox{0.90}[0.90]{{\fontsize{24}{30} \bf{Holes in Calabi--Yau Effective Cones}}}\\
\end{center}

\vspace{0.4cm}
\begin{center}
\scalebox{0.95}[0.95]{{\fontsize{12}{30}\selectfont  
Naomi Gendler,$^{a}$
Elijah Sheridan,$^{b}$
Michael Stillman,$^{c}$ and
David H. Wu$^{a}$
}}
\end{center}

\begin{center}
\vspace{0.25 cm}

\textsl{$^{a}$Jefferson Physical Laboratory, Harvard University, Cambridge, MA 02138, USA}\\  
\textsl{$^{b}$Department of Physics, Cornell University, Ithaca, NY 14853, USA}\\
\textsl{$^{c}$Department of Mathematics, Cornell University, Ithaca, NY 14853, USA}\\

  \vspace{1.1cm}
	\normalsize{\bf Abstract} \\[8mm]
\end{center}

\begin{center}
	\begin{minipage}[h]{15.0cm}

Motivated by their role in non-perturbative potentials in string theory, we study divisors in effective cones of Calabi--Yau threefolds. We give examples of geometries for which some divisor classes in the effective cone are not themselves effective: i.e., they have no global sections. We call these non-holomorphic divisor classes ``holes,'' and characterize their behavior in an ensemble of toric hypersurface Calabi--Yau threefolds. We prove some necessary and sufficient conditions for the existence of holes, show consequences of holes that follow from the minimal model program, and demonstrate that a class of holes come in semigroups (with this class conjectured to constitute all holes). Furthermore, we provide moduli-dependent bounds on the volumes of four-cycles representing holes.

\end{minipage}
\end{center}
	\newpage
	\setcounter{page}{1}
	\pagestyle{plain}
	\setcounter{footnote}{0}
\tableofcontents
        \newpage
\section{Introduction}

In many theories, extended objects generically give rise to important non-perturbative physics. In supersymmetric string compactifications, extended objects wrapped on closed cycles in compact extra dimensions are solitonic states; when the cycle is holomorphic, these are Bogomol'nyi–Prasad–Sommerfeld (BPS) states, which give rise to well-controlled non-perturbative phenomena. In particular, BPS states have masses or tensions given by exact supersymmetric quantities readily computable from the geometry of the cycles, while the masses or tensions of non-BPS states remain elusive. Therefore, to characterize the spectrum in a given compactification, one must determine which states are BPS and which are not.

An important class of examples is the set of effective quantum gravitational theories arising from compactifications of type IIB string theory on Calabi--Yau orientifolds. These theories provide a testing ground for connecting string theory with real-world phenomenology and have been extensively studied (see, e.g.,~\cite{Cicoli:2012sz, Gendler:2023kjt, Demirtas:2021nlu, McAllister:2024lnt,Sheridan:2024vtt, Reece:2025thc,Maharana:2012tu,Marchesano:2024gul,vandeHeisteeg:2023uxj,Jain:2025vfh,Mehta:2021pwf,McAllister:2025qwq}). Much of the low-energy physics can be determined by the superpotential $W$ and the K\"ahler potential $\mathcal{K}$ of the underlying theory. A crucial task in characterizing the set of $\mathcal{N}=1$ effective theories is to determine what non-perturbative effects produce corrections to $W$ and $\mathcal{K}$.\footnote{The role of such potentials along with their non-perturbative effects in quantum gravity have been analyzed recently in light of the Swampland program via new holographic constraints \cite{Bedroya:2025fie,Bedroya:2025ltj}. To this end, there has also been significant recent progress made in studying the non-geometric flux vacua of type IIB string theory~\cite{Chen:2025rkb,Mohseni:2025tig,Becker:2022hse,Becker:2024ijy,Bardzell:2022jfh,Rajaguru:2024emw}.} One such set of corrections is that of Euclidean D3-branes wrapping four-cycles in the Calabi--Yau threefold, giving rise to instantons in the 4d $\cN=1$ theory. A necessary (but certainly not sufficient) condition for a particular Euclidean D3-brane to contribute to $W$ is that it be BPS. Geometrically, such contributions arise only when the Euclidean D3-branes wrap holomorphic four-cycles in the compact manifold. On the other hand, Euclidean D3-branes that wrap non-holomorphic four-cycles are non-BPS instantons and can contribute only to the K\"ahler potential. The task of categorizing divisor classes into holomorphic and non-holomorphic is therefore a crucial step in studying the four-dimensional physics arising from compactifications of string theory on Calabi--Yau threefolds. The purpose of this work is to make progress on this objective.

Given a Calabi--Yau threefold $X$, holomorphic (i.e., effective) divisors live in a cone, $\mathcal{E}_X \subset H^2(X, \mathbb{R})$. In this work we study the divisor classes in $\mathcal{E}_X \cap H^2(X,\mathbb{Z})$ that are not themselves effective. We refer to such classes as \textit{holes}. We provide examples of holes, and find that they admit nice structures. While the existence of holes has been previous noted,\footnote{For example, holes are accounted for in \cite{lazarsfeld2017positivity} with the notion of the ``semigroup of a divisor'' (not to be confused with the semigroups of holes we will discuss later). In particular, holes are those divisors whose semigroups are neither $\{0\}$ nor $\mathbb{Z}$ (see Def. 2.1.1 in \cite{lazarsfeld2017positivity}).} to our knowledge this is the first systematic study of their properties.

A primary focus will be placed on Calabi--Yau threefolds constructed as hypersurfaces in toric varieties from the Kreuzer--Skarke database~\cite{Kreuzer:2000xy}. In this context, we will predominantly focus on the subset of $\mathcal{E}_X$ that is generated by effective, or holomorphic, divisors inherited from the ambient toric variety $V$. We refer to the generating set of effective divisors that are inherited from effective divisors on the ambient toric variety as \textit{prime toric divisors}. We denote the cone that is generated by prime toric divisors as $\mathcal{E}_V$. 
Previous studies employing the Kreuzer--Skarke database~\cite{Gendler:2023kjt, Demirtas:2021nlu, McAllister:2024lnt, Sheridan:2024vtt,Gendler:2024adn,Cheng:2025ggf, Mehta:2021pwf} have often assumed that the most dominant four-dimensional instanton contributions come from Euclidean D3-branes wrapped on prime toric divisors, while \textit{a priori} any element of the Hilbert basis of $\mathcal{E}_V$ could furnish a leading instanton contribution.\footnote{The true leading instantons should arise from the Hilbert basis of $\mathcal{E}_X$, rather than $\mathcal{E}_V$ (the latter cone being a subset of the former), but $\mathcal{E}_X$ is difficult to compute, especially for $h^{1,1} \gtrsim 4$, so we largely content ourselves with $\mathcal{E}_V$ in this work.}
In \cref{fig:additional elements}, we illustrate how the number of Hilbert basis elements of a given effective cone greatly outnumbers the prime toric divisors, motivating the scrutiny of the assumed dominance of prime toric divisors.
Hence, one of the main goals of this work is to study the elements of the Hilbert basis of $\cE_V$ that are not prime toric divisors, which we will call \textit{non-trivial Hilbert basis elements}. In particular, we will compute their holomorphicity (determining which physical quantities they contribute to) and estimate their volumes (determining the magnitude of their contributions).

However, before restricting to toric hypersurfaces, we begin by presenting some general results on holes. In particular, we demonstrate that for any smooth Calabi--Yau threefold, holes cannot occur in a distinguished subcone of $\mathcal{E}_X$ (the cone of big movable divisors, introduced in \cref{sec:review-geometry} and related to holes by \cref{th:big_movable_CY}) and that a class of holes come in semigroups (\cref{cor:semigroup}, which holds for non-movable holes, though we discuss how the physics-inspired \cref{conj:nonbig_movable_CY} of \cite{Katz:2020ewz} implies that holes must be non-movable). We also use methods from the minimal model program to show that holes correspond to nef holes on singular birational Calabi--Yau varieties (\cref{th:CY_MMP_holes}), and can indicate the presence of torsion on these birational models (\cref{prop:torsion}). Finally, for toric hypersurface Calabi--Yau threefolds in particular, we demonstrate that non-trivial Hilbert basis elements are holes if they lie in the interior of $\mathcal{E}_V$ (\cref{prop:big}).

To complement these formal results, we conduct a computational study of Calabi--Yau threefolds in the Kreuzer--Skarke database. To achieve this, we use both \verb+CYTools+~\cite{Demirtas:2022hqf} and \verb+Macaulay2+~\cite{M2} to evaluate holomorphicity by computing line bundle cohomology using the methods described in \cref{sec:direct_comp}. In particular, we develop Python code that extends \verb+CYTools+ to compute if a divisor class on a Calabi--Yau threefold from the Kreuzer--Skarke database is effective: this code is publicly available, and can be found in the ancillary data attached to this paper. Empirically, we find that \textit{no} non-trivial Hilbert basis elements are themselves effective, extending \cref{prop:toric_big_movable} and motivating \cref{conj:non_trivial_hb_are_holes}. This is striking: effectiveness frequently differs between divisors on $V$ and on $X$ --- e.g., there are countless examples where $\mathcal{E}_X$ is strictly larger than $\mathcal{E}_V$~\cite{Demirtas:2018akl} --- so \textit{a priori}, there is no reason why non-trivial Hilbert basis elements should never be effective on the Calabi--Yau threefold.

Finally, we consider the volumes of the non-holomorphic four-cycles representing non-trivial Hilbert basis elements. Although in principle, these volumes can be directly computed from the Calabi--Yau metric, this metric is notoriously difficult to calculate in practice~\cite{Yau1978,Yau1990,Yau1991}, though progress has recently been made numerically~\cite{Ashmore:2019wzb, Douglas:2020hpv, Larfors:2021pbb, Gerdes:2022nzr, Butbaia:2024xgj}. Additionally, attempts have been made to bound volumes of non-holomorphic cycles with arguments from physics~\cite{Demirtas:2019lfi, Long:2021lon}. In this work, we study a method to put rigorous bounds on the volumes of these non-holomorphic four-cycles living in effective cones. We demonstrate the method in two explicit examples and find that there are regions in moduli space where the non-holomorphic cycle is among one of the smallest volumes, compared to the prime toric divisors. In particular, we exhibit limits in K\"ahler moduli space where the volumes of non-holomorphic cycles are exactly determined. Bounds on these volumes are important because they determine the action of instantons contributing to the K\"ahler potential and play a role in the Weak Gravity Conjecture~\cite{Arkani-Hamed:2006emk}.

In sum, in this work we come to understand the basic structure of holes in Calabi--Yau effective cones. In the toric hypersurface context in particular, our results on the non-holomorphicity of non-trivial Hilbert basis elements supports the continuation of the practice of excluding them from supersymmetric quantities, such as the superpotential. Furthermore, the methods we exhibit for bounding non-holomorphic four-cycle volumes allow one to assess the importance of non-trivial Hilbert basis elements relative to prime toric divisors in unprotected quantities, such as the K\"ahler potential.

The paper is organized as follows. In \cref{sec:review}, we review the pertinent facts about divisors in Calabi--Yau threefolds and toric varieties, defining all relevant concepts. We also explain the relevance of these geometric objects for physics in the context of string/M-theory. In \cref{sec:holeprops}, we present some formal results on the existence and structure of holes in Calabi--Yau threefold effective cones; the results of this section are summarized at its beginning, and the methods used to prove them are, for the most part, not strictly required to understand the remaining text, though they aid in the interpretation of some subsequent examples. 
In \cref{sec:direct_comp}, we lay out an algorithm for directly computing the number of global sections that a particular divisor has on a toric hypersurface. In \cref{sec:ksholes}, we use this algorithm to systematically scan Calabi--Yau geometries from the Kreuzer--Skarke database for holes in effective cones, and present some illustrative examples. In \cref{sec:vol_bounds}, we describe a method for providing upper and lower bounds on non-holomorphic cycles, and implement this method in particular geometries. Finally, we conclude and discuss future directions in \cref{sec:conclusions}.

\section{Review: Divisors in Calabi--Yau Threefolds and Toric Varieties} \label{sec:review}

\subsection{Geometry}\label{sec:review-geometry}

In this section, we review the relevant aspects of divisors in Calabi--Yau threefolds, defining all relevant quantities and pointing out the salient aspects related specifically to Calabi--Yau threefolds constructed as hypersurfaces in toric varieties.

\paragraph{Divisor Generalities.} 
\begin{definition}
    Let $X$ be a normal variety. A (Weil) \textit{divisor} $D$ in $X$ is a finite integral linear combination of \textit{prime} divisors, or closed integral subschemes of codimension one.
\end{definition}
Addition of divisors can be thought of as the union operation. If $A$ is a divisor and $D$ a prime divisor, we let $\nu_D(A)$ --- the \textit{degree of vanishing} of $D$ on $A$ --- denote the coefficient of $A$ in $D$, such that $D = \sum \nu_B(D) B$, with the sum being over prime divisors $B$. A divisor is \textit{principal} if it is the zero locus of a rational function $f$ minus its poles, denoted $\mathrm{div}(f)$. The \textit{group of divisor classes}, or \textit{class group}, is the group of divisors modulo principal divisors, and its elements are known as divisor classes. For the varieties we are most interested in, smooth Calabi--Yau varieties $X$, the group of divisor classes is isomorphic to the singular cohomology group $H^2(X, \mathbb{Z})$. Consequently, we will often refer to (co)homology classes and divisor classes interchangeably. Given a divisor $D$, we will write its class as $[D]$; conversely, for a fixed class $[D]$, we let $D$ denote a generic divisor in the class. We will also often refer to divisor classes as \textit{charges}, especially when we write them in some basis, as the components of a class written in-basis are often readily interpreted as the $U(1)$ charges of an extended object wrapped on a representative of that class.

We will be especially interested in the divisor classes which are represented by actual subschemes of the variety, or non-negative linear combinations of prime divisors.
\begin{definition}\label{def:eff-comb}
    An \textit{effective} divisor $D$ in $X$ is an integral linear combination of prime divisors with non-negative coefficients. An effective divisor class is one that is represented by at least one effective divisor.
\end{definition}
That is, an effective divisor $D$ satisfies $\nu_B(D) \geq 0$ for all prime divisors $B$: we denote this by $D \geq 0$. Effective divisors and their classes are also known as \textit{holomorphic}~\cite{Witten:1996bn}. We will use these terms interchangeably. Divisors $D$ are in general associated to rank-one \textit{reflexive sheaves} $\mathcal{O}_X(D)$. If $D$ is \textit{Cartier}, meaning it is locally principal, then $\mathcal{O}_X(D)$ is a line bundle, and if $X$ is smooth then all divisors are Cartier. We will focus mostly on smooth Calabi--Yau threefolds, so divisors will most often be Cartier. However, we occasionally will have cause to consider singular varieties (e.g., the toric fourfolds of the Kreuzer--Skarke database), so we will at times work with non-Cartier divisors. This distinction will not play any significant role in our analysis, though, and we will proceed by referring to the line bundles associated to divisors with the knowledge that these are sometimes merely sheaves. Two divisors in the same class have isomorphic line bundles, meaning classes are associated to isomorphism classes of line bundles: we will let the line bundle of a divisor class just be that of some representative. We will also almost always work with \textit{$\mathbb{Q}$-factorial} varieties, where all Weil divisors $D$ are \textit{$\mathbb{Q}$-Cartier}: that is, $mD$ is Cartier for some positive integer $m$.

An alternative characterization of an effective divisor class is that the associated line bundle admits a global section. Global sections of $\mathcal{O}_X(D)$ are rational functions $\varphi$ on $X$ which induce effective divisors in the class $[D]$: i.e., they obey $\mathrm{div}(\varphi) + D \geq 0$. A local section of $\mathcal{O}_X(D)$ on some open set $U$ --- i.e., an element of $\mathcal{O}_X(D)(U)$ --- is a rational function obeying the same inequality for all prime divisors in $U$. The sheaf cohomology functors $H^i(-)$ are the right derived functors of the global sections functor, so global sections are elements of the zeroth sheaf cohomology group $H^0(X,\mathcal{O}_X(D)) := \mathcal{O}_X(D)(X)$. Thus, a divisor is effective if $h^0(X,\mathcal{O}_X(D)) := \mathrm{dim} \, H^0(X,\mathcal{O}_X(D)) \geq 1$. We will discuss the computation of sheaf cohomology of the sheaves associated to divisor classes on toric varieties in \cref{sec:direct_comp}, as this will be necessary to compute the global sections --- and thus, the effectiveness --- of divisor classes on toric hypersurface Calabi--Yau threefolds.

The group of divisor classes may have torsion, but we can quotient this out by tensoring with $\mathbb{R}$, yielding a vector space $H^2(X,\mathbb{R})$. Non-negative linear combinations of effective divisors are effective, so it is natural to consider the cone in $H^2(X,\mathbb{R})$ generated by effective divisors.
\begin{definition}
    The \textit{effective cone} of a variety $X$ is the cone generated by the classes of prime divisors of $X$ (i.e., all non-negative real linear combinations of prime divisor classes). The \textit{pseudoeffective cone} is the closure of the effective cone.
\end{definition}
The crucial distinction that we will focus on in this work is that between an effective divisor class and a divisor class contained in the effective cone, i.e., a divisor class $[D] \in H^2(X,\mathbb{R})$ such that 
\begin{align}
    [D] \in \mathcal{E}_X \cap H^2(X,\mathbb{Z})\,.
\end{align}
One upshot of this work is that one can construct smooth Calabi--Yau threefolds featuring divisor classes that live in $\mathcal{E}_X$ but are \textit{not effective}. We refer to such classes as \textit{holes}. For any $[D]$ contained in the effective cone, there is always a positive integer $m$ such that $m[D]$ is effective: merely express it as a $\mathbb{Q}$-linear combination of effective divisor classes (which can always be done) and choose $m$ such that denominators are cleared. However, $[D]$ itself need not be expressible as a non-negative $\mathbb{Z}$-linear combination of effective divisor classes: indeed, holes do not admit such a representation.

There are several other general properties of divisors that will be useful for us in this work. In particular, divisors endowed with different combinations of these properties will correspond to various subcones of the effective cone (some of which are summarized in \cref{fig:cone}). A divisor class is \textit{nef} if it has a non-negative intersection with all effective curves in $X$ and that nef divisor classes generate the closed \textit{nef cone}. More stringently, a divisor class is \textit{ample} if it has strictly positive intersection with all effective curves $C$ and all effective divisors $D_i$ in $X$
\begin{equation}
    D^3> 0\,,\qquad D^2\cdot D_i> 0\,,\qquad D\cdot C> 0\,,
\end{equation}
and such divisor classes generate the open \textit{ample} cone. In particular, the interior (closure) of the nef (ample) cone is the ample (nef) cone. Because the ampleness condition is preserved under positive rescaling of the class $[D]$, we avoid the subtlety we encountered with effectiveness:  if a divisor is nef or ample, so are all of its multiples and roots. 

The geometry of a Calabi--Yau threefold is determined in part by a choice of K\"ahler form, $J$, which fixes the volumes of all holomorphic cycles as follows.
\begin{align}
    \text{vol}(C) = \int_{[C]} J\,, \ \ \ \text{vol}(D) = \frac{1}{2}\int_{[D]} J \wedge J\,, \ \ \ \text{vol}(X) = \frac{1}{6}\int_{X} J \wedge J \wedge J\,.
    \label{eq:calibrated_vols}
\end{align}
Here, $C$ is any holomorphic curve in $X$, and $D$ is any holomorphic divisor in $X$. 
\begin{definition}
    The \textit{K\"ahler cone} $K_X$ is the cone of K\"ahler forms $J \in H^2(X,\mathbb{R})$ for which all curve volumes $\int_{C} J$ are strictly positive for all effective curves $C$.
\end{definition}
This coincides with the ample cone generated by ample divisor classes, and its closure (i.e., the K\"ahler cone along with its boundaries) is the nef cone. That is, the definition of the K\"ahler cone does not include the boundaries of the cone.

We now turn our attention to a few additional properties of divisors which are less frequently discussed, e.g., in the string compactification literature, but which will be important for us in this work.
\begin{definition}
    A divisor class $[D]$ on an $n$-dimensional variety $V$ is \textit{big} if the number of global sections of $m[D]$ as a function of $m$ are polynomial of degree $n$ for sufficiently large $m$.
\end{definition}
There are several other useful characterizations of big divisor classes. A nef divisor class $[D]$ is big if and only if the self-intersection number $[D]^n$ is positive; additionally, $[D]$ is big if and only if it falls on the interior of the pseudoeffective cone of $V$. The interior of the pseudoeffective cone is consequently sometimes called the big cone. See \S2.2 in~\cite{lazarsfeld2017positivity} for further discussion.

Nef cones encode birational geometry, a fact that will play a key role in this paper. Let us now briefly introduce some birational geometric vocabulary. Given two varieties $X, Y$ and a dense open subset $U$ of $X$, a morphism $U \to Y$ is called a \textit{rational map} $X \dashrightarrow Y$. A rational map is called a \textit{birational map} if it is an isomorphism between dense open subsets of $X$ and $Y$. Two varieties are \textit{birational} if they are related by a birational map. We say a birational map is \textit{small} if it is an isomorphism in codimension one: that is, the loci where the morphism and its inverse are not defined are both in codimension at least two. Small birational maps of smooth Calabi--Yau threefolds are flops \cite{katz1992gorenstein, Hori:2003ic}, for example. The locus where a morphism associated to a rational map is not defined can be thought of intuitively as being contracted or ``shrunk'' by that map. For example, the morphism associated to a flop is not defined on the curves that shrink to points and are then blown back up in the process of performing the flop. 
Given a birational morphism $f : X \dashrightarrow Y$ and a divisor $D$ on $X$, there is an induced \textit{pushforward} $f_* D = f(D)$ to a divisor on $Y$, and this operation lifts to divisor classes.

Birational varieties have a lot in common, especially those related by small birational maps. Crucially, varieties related by small birational maps have canonically isomorphic groups of divisor classes, with the number of global sections being preserved by this isomorphism (this fails for higher sheaf cohomology groups). Thus, for example, for a Calabi--Yau threefold $X$, we can freely evaluate the effectiveness of a divisor class $[D]$ in any other Calabi--Yau threefold $X'$ related to $X$ by flops. We often want to do this, because sometimes the computation of effectiveness is easier on a particular such $X'$, such as the one where $[D]$ is nef. We will discuss and exploit this at length in \cref{sec:holeprops}: in particular, we will employ the philosophy of the minimal model program from algebraic geometry and exploit that even birational maps which are not small can preserve properties like effectiveness in some cases (see \cref{th:MMP_global_sections}). The nef cones of varieties related by small birational maps sew together along their faces to form a larger cone. The union of all nef cones (i.e., closures of K\"ahler cones) of Calabi--Yau threefolds related by flops is the closure of the \textit{extended K\"ahler cone} $\mathcal{K}_X$.
In this way, the K\"ahler moduli parameterize the birational geometry of a Calabi--Yau threefold. The extended K\"ahler cone is a particular case of the algebraic geometric notion of the movable cone.
\begin{definition}
    An effective divisor class $[D]$ is \textit{movable} if its basepoint locus (i.e., the common zero-locus of all sections of the associated line bundle) has codimension at least two: that is, its basepoint locus is not a divisor. The cone generated by movable divisors is called the \textit{movable cone}.
\end{definition}
Equivalently, the movable cone consists of the divisor classes $[D]$ with no codimension-one \textit{stable base locus}, or base locus which persists for $m[D]$ for all $m \geq 0$. Intuitively, because the representatives of a movable divisor class are obligated to contain only codimension $\geq 2$ subvarieties, any fixed representative can be rather freely ``moved'' elsewhere in the variety by replacing it with a different representative of the same class: in particular, the deformations of any representative cover the variety.

As we've discussed, divisor classes in the effective cone need not be effective, while this problem doesn't occur for nefness and amplitude\footnote{The quality of being ample.} because these latter properties are invariant under overall rescaling while the former is sensitive to it. The movable property is also sensitive to scaling --- divisor classes in the movable cone need not be movable --- but movability, per se, actually will not play an important role for us: we care about divisor classes which belong to the movable cone because we will be focusing on toric varieties and Calabi--Yau threefolds in this work, for which we have \cref{th:movable_is_nef_somewhere}. This is the result that the cone generated by effective movable divisors is the union of the nef cones of all varieties related to $V$ by small birational maps (i.e., those that do not contract/shrink any divisors). Indeed, as we noted, the movable cone of a Calabi--Yau threefold is its extended K\"ahler cone $\mathcal{K}_X$. We will repeatedly exploit that divisor classes in the movable cone are nef in some birational model, but whether or not a class itself is literally movable will not be relevant. Consequently, we will perform an abuse of terminology and refer to any class in the movable cone as movable. 
\begin{figure}
    \centering
    \begin{tikzpicture}[scale=2.5, every node/.style={font=\small}]
      \def\extfactor{1.18} 
      \def\s{1}
    
      \definecolor{movable}{RGB}{67,129,193}
      \definecolor{nonmovable}{RGB}{232, 116, 97}
      \definecolor{nonbig}{RGB}{13, 27, 30}
      \definecolor{nonbigV}{RGB}{168, 180, 165}
    
      \foreach \x in {0,1,2}{
        \foreach \y in {0,1,2}{
          \coordinate (v\x\y) at ({\x*\s},{\y*\s});
        }
      }
    
      \coordinate (O) at (-0.9,-0.9);
      \fill (O) circle (1.5pt) node[below left] {$0$};
    
      \begin{scope}[on background layer]
        \foreach \x in {v00,v01,v02,v10,v20}{
          \draw[->, thick, black!70] (O) -- ($(O)!\extfactor!(\x)$);
        }
      \end{scope}
    
      \begin{scope}[on background layer]
        \foreach \x in {v12,v22,v21,v11}{
          \draw[->, dotted, thick, black!70] (O) -- ($(O)!\extfactor!(\x)$);
        }
      \end{scope}

      \fill[white] (v00) -- (v10) -- (v11) -- (v01) -- cycle; 
      \fill[nonmovable]        (v00) -- (v10) -- (v11) -- (v01) -- cycle; 
    
      \fill[white] (v10) -- (v20) -- (v21) -- (v11) -- cycle;
      \fill[movable] (v10) -- (v20) -- (v21) -- (v11) -- cycle;
    
      \fill[white] (v01) -- (v11) -- (v12) -- (v02) -- cycle;
      \fill[nonmovable] (v01) -- (v11) -- (v12) -- (v02) -- cycle;
      \fill[pattern=north east lines, pattern color=black, opacity=0.5]  (v01) -- (v11) -- (v12) -- (v02) -- cycle;  
    
      \fill[white] (v11) -- (v21) -- (v22) -- (v12) -- cycle;
      \fill[movable] (v11) -- (v21) -- (v22) -- (v12) -- cycle;
      \fill[pattern=north east lines, pattern color=black, opacity=0.5] (v11) -- (v21) -- (v22) -- (v12) -- cycle;  
    
      \draw[very thick, nonbig] (0,0) rectangle (2*\s,2*\s);
      \draw[very thick, dashed, nonbigV] (0,\s) rectangle (2*\s,2*\s);
    
      \begin{scope}[shift={(3,0.6)}]
        \draw[nonbig, line width=1.3pt] (0,1.375) -- (0.45,1.375);
        \node[right] at (0.55,1.375) {$\partial \mathcal{E}_X$ (not big on $X$)};
    
        \draw[nonbigV, dashed, line width=2pt] (0,0.775) -- (0.45,0.775);
        \node[right] at (0.55,0.775) {$\partial \mathcal{E}_V$ (not big on $V$)};
      
        \draw[fill=movable] (0,0) rectangle (0.45, 0.35);
        \node[right, text width=2.5cm] at (0.55, 0.175) {$\mathcal{K}_X$ (movable divisors)};
    
        \draw[fill=nonmovable] (0,-0.6) rectangle (0.45,-0.25);
        \node[right, text width=3cm] at (0.55,-0.425) {$\mathcal{E}_X \setminus \mathcal{K}_X$ (non-movable divisors)};
    
        \draw[fill=white, pattern=north east lines, pattern color=black, opacity=0.5] (0,-1.2) rectangle (0.45,-0.85);
        \node[right] at (0.55,-1.025) {$\mathcal{E}_V$ (effective on $V$)};

      \end{scope}
    
      \draw[->, thick, black!70] (O) -- ($(O)!\extfactor!(v01)$);
      \draw[->, thick, black!70] (O) -- ($(O)!\extfactor!(v00)$);
      \draw[->, thick, black!70] (O) -- ($(O)!\extfactor!(v10)$);

    \end{tikzpicture}
    \caption{3D cartoon of the effective cone $\mathcal{E}_X$ of a Calabi--Yau hypersurface $X$ in a toric variety $V$. In blue is the extended K\"ahler cone $\mathcal{K}_X$, containing the movable divisors on $X$; its complement, the non-movable divisors, is in red. Hashed is the subcone $\mathcal{E}_V$ of the $\mathcal{E}_X$ inherited from effective divisors of the ambient variety $V$. The extended K\"ahler cone itself decomposes as a union of K\"ahler/nef cones (see, e.g., Fig. 1 in~\cite{Gendler:2022ztv}). The (open) cone of big divisors on $X$ is the interior of $\mathcal{E}_X$, and the (open) cone of divisors inherited from big divisors on $V$ is the interior of $\mathcal{E}_V$. The boundaries of these open cones are the divisors that are effective but not big on $X$ and $V$, respectively.}
    \label{fig:cone}
\end{figure}
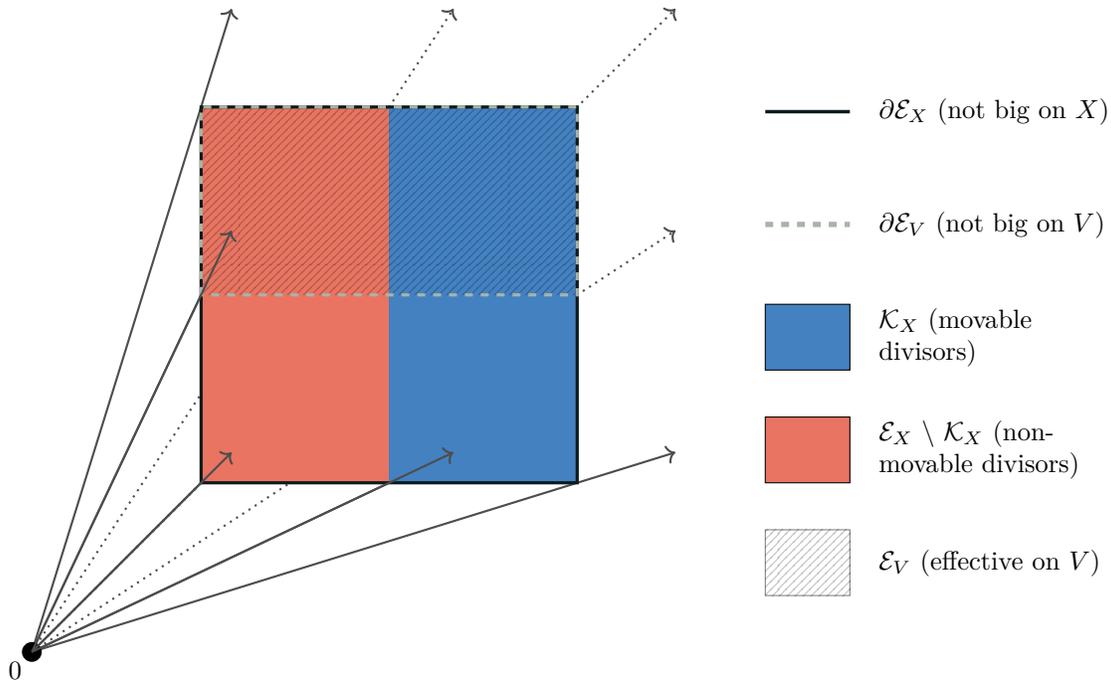

At this point, we have introduced many properties that divisors may have: to aid intuition and conceptualization, we present a cartoon of the effective cone $\mathcal{E}_X$ of a toric hypersurface Calabi--Yau threefold in \cref{fig:cone} to depict how divisors with different properties organize themselves. In particular, we show how the movable cone, or extended K\"ahler cone $\mathcal{K}_X$, is a subcone of $\mathcal{E}_X$, while the big divisors constitute the interior of the effective cone, with the boundary consisting of the non-big divisors. We will elaborate further on this figure upon explaining some specifics of toric hypersurface Calabi--Yau threefolds, but the features discussed here are universal to all Calabi--Yau threefolds.

\paragraph{Toric Varieties and Calabi--Yau Hypersurfaces.} We now turn to a review of the specific details of Calabi--Yau threefolds constructed as hypersurfaces in toric varieties, which will be our focus in this investigation. For a more general review of toric geometry we refer the reader to~\cite{cls, Hori:2003ic}: in this section, we will only recall basic results directly relevant for our purposes. 

We will restrict our attention to toric varieties arising from \textit{fans}, or sets of pointed cones closed under taking intersections and faces. These fans live in $N_\mathbb{R} := N \otimes \mathbb{R}$ for $N$ a lattice, and an important role is played by the minimal generators $u_i \in N$ which generate the one-dimensional semigroups given by intersecting individual one-cones (or rays) of the fan with $N$. For our purposes, it is important that to a fan $\Sigma$ and its associated toric variety $V$ we can associate a GLSM charge matrix $Q$~\cite{Witten:1993yc} and a Stanley--Reisner (SR) ideal $I$, which together determine $\Sigma$ and $V$. The rows of the GLSM charge matrix furnish a $\mathbb{Z}$-basis for the linear relations among the $u_i$, and we will define the SR ideal shortly.

The Batyrev construction of Calabi--Yau threefolds~\cite{batyrev1993dualpolyhedramirrorsymmetry} utilizes the fact that a generic anticanonical hypersurface in a toric variety with big nef anticanonical class and sufficiently mild singularities is a smooth Calabi--Yau threefold. In particular, the toric fans for such varieties are induced by \textit{fine}, \textit{regular}, and \textit{star} triangulations of four-dimensional reflexive polytopes $\Delta^\circ \subset N$. We denote the toric variety defined by such a fan as $V$, and the generic anticanonical hypersurface therein as $X$. In this work, we will restrict ourselves to \textit{favorable} reflexive polytopes: those for which every two-face with interior points is dual to a one-face with no interior points. This restriction results in all K\"ahler moduli of the Calabi--Yau hypersurface (i.e., all of $H^2(X, \mathbb{Z}) \cong H_4(X, \mathbb{Z})$) being inherited from the ambient variety. We do not expect significant qualitative differences in hypersurfaces originating from non-favorable polytopes, but leave this investigation for future work. 

An important thread throughout this work will be the distinction between objects that live in $V$ and those that live in $X$. We will denote objects that live in $V$ with hats: i.e., the effective cone of $V$ is $\hat{\mathcal{E}}_V$, and divisors in $V$ are written as $\hat{D}$, while the effective cone of $X$ is $\mathcal{E}_X$ and divisors in $X$ are written as $D$. 

In general, properties of the toric variety $V$ are much more straightforward to derive than the analogous properties of $X$. In particular, for favorable four-dimensional reflexive polytopes $\Delta^\circ \subset N$, the associated toric varieties contain $h^{1,1} + 4$ prime torus-invariant divisors, one for each non-zero point $u_i \in \Delta^\circ$ not interior to a facet (the divisors corresponding to points in facets do not intersect the hypersurface $X$). Here, $h^{1,1}$ is a Hodge number of the anticanonical Calabi--Yau hypersurface $X$. These divisors are known as the \textit{prime toric divisors}, $\hat{D}_i$, with $i \in \{1,\ldots, h^{1,1}+4\}$. Their classes generate all divisor classes on $V$ over $\mathbb{Z}$ and generate the effective cone $\hat{\mathcal{E}}_V$ over $\mathbb{R}_{\geq 0}$. In fact, they generate all effective divisor classes over $\mathbb{Z}_{\geq 0}$: i.e., every holomorphic divisor class $[\hat{D}]$ in $V$ can be written as
\begin{align}
    [\hat{D}] = \sum_i a_i [\hat{D}_i], \ \ \ a_i \in \mathbb{Z}_{\geq 0}.
\end{align}
so any divisor class in the effective cone that is not generated over $\mathbb{Z}_{\geq 0}$ by prime toric divisors is not effective on $V$ (i.e., is a hole on $V$). 

As an example of the computability of toric varieties, global sections $H^0(V,[\hat{D}])$ of a divisor class are conceptually straightforward to enumerate. All global sections of all line bundles are generated by monomials in the Cox ring, or the homogeneous coordinate ring, $S = \mathbb{C}[x_1, \dots, x_{h^{1,1}+4}]$: in particular, the prime toric divisor $\hat{D}_i$ is the zero-locus of the global section $x_i$ of $\mathcal{O}_X(\hat{D}_i)$. More generally, the torus-invariant divisor $\hat{D} = \sum_i a_i \hat{D}_i$ is given by the zero-locus of the monomial $\prod_i x_i^{a_i}$, which we denote by the shorthand $x^a$. The global sections of a class $[\hat{D}]$ are generated by the monomials associated to all torus-invariant divisors in that class. We will discuss this more systematically in \cref{sec:direct_comp} when we explain how to compute the higher line bundle cohomologies of a class $[\hat{D}]$, but for now we will note that the global sections of a class $[\hat{D}]$ with representative $\hat{D} = \sum_i a_i \hat{D}_i$ are in bijection with the lattice points of the \textit{Newton polytope} $\Delta_{\hat{D}}$ in the dual lattice $M = N^*$, defined as follows
\begin{equation}
    \label{eq:newton_poly}
    \Delta_{\hat{D}} = \{ m \in M \; | \; \langle m, u_i \rangle + a_i \geq 0 \}\,.
\end{equation}
In particular, $m \in \Delta_{\hat{D}}$ corresponds to the monomial $\prod_i x_i^{\langle m, u_i \rangle + a_i}$. The inequalities merely impose that all exponents of the monomials are non-negative, to avoid poles, and the addition of $a_i$ in each exponent ensures that the monomial belongs to the line bundle $\mathcal{O}_V(\hat{D})$ (monomials of the form $\prod_i x_i^{\langle m, u_i \rangle}$ belong to the trivial bundle).

With the Cox ring in hand, we can comment further on the GLSM charge matrix and Stanley--Reisner ideal. The GLSM charge matrix fixes a basis for the class group of divisors and maps a vector $a \in \mathbb{Z}^{h^{1,1} + 4}$, which we identify with the torus-invariant divisor $\hat{D} = \sum_i a_i \hat{D}_i$, to the class $[\hat{D}]$. The charge matrix is non-unique precisely due to our freedom to choose a basis for the group of divisor classes (indeed, a fixed choice of matrix determines a basis). In particular, the columns of the charge matrix are the classes of the associated prime toric divisors, so the effective cone is generated by these columns. For a fan $\Sigma$, the SR ideal $I$ is defined by
\begin{equation}
    I = \left( \prod_{\rho \in \tau} x^\rho \; \Big| \; \tau \notin \Sigma \right)\,.
\end{equation}
Toric varieties are quotients of a complex vector space, with some linear subspaces removed, by a subgroup of a torus (in particular, the dual group of the class group of divisors): the union of the vanishing loci of the generators of the SR ideal gives the linear subspaces removed before taking the quotient. The SR ideal will also play a central role in our computation of line bundle cohomology in \cref{sec:direct_comp}.

Holomorphic divisors $\hat{D} \subset V$ often descend to holomorphic divisors $D \subset X$ through intersection with $X$. That is
\begin{align}
    D = \hat{D} \cap X\,,
\end{align}
is often a divisor on $X$ (though edge cases can occur where this intersection is all of $X$). In particular, all effective divisor classes furnish at least one such representative, so any effective class $[\hat{D}]$ on $V$ induces an effective class $[D]$ on $X$. In particular, this means that $\mathcal{E}_X$ contains all non-negative linear combinations of the classes $[D_i] = [\hat{D}_i \cap X]$; as an abuse of notation, we will call both $D_i$ and $\hat{D}_i$ \textit{prime toric divisors}, with the hat clarifying which space we are referring to. The cone $\mathcal{E}_V$ generated by the $[D_i]$ furnishes a subcone of $\mathcal{E}_X$, which we denote $\mathcal{E}_V$ (with no hat). This is the part of $\mathcal{E}_X$ that is inherited from the ambient toric variety. Importantly, however, the set of holomorphic divisors on $X$ in general contains divisor classes which are \textit{not} effective on $V$. In~\cite{Demirtas:2018akl}, such divisors were referred to as \textit{autochthonous} divisors, and we utilize the same terminology here.
\begin{definition}
    Let $V$ be a suitable toric variety and $X$ a smooth Calabi--Yau hypersurface therein. An \textit{autochthonous} divisor class on $X$ is a class $[D]$ such that
    \begin{align}
        h^{0}(V, \mathcal{O}_V(\hat{D})) = 0 \ \ \text{but} \ \ h^{0}(X, \mathcal{O}_X(D)) >0\,.
    \end{align}
\end{definition}
Therefore, in general we have that $\mathcal{E}_V \subset \mathcal{E}_X$. Note further that in principle, there could be autochthonous divisors in $\mathcal{E}_V$: it is possible that there exist divisor classes in the effective cone of $V$ that are \textit{not} effective on $V$ but are indeed holomorphic in $X$. However, as we will discuss in \cref{sec:ksholes}, we do not find any examples of this phenomenon.

This discussion is illustrated through our cartoon of the effective cone $\mathcal{E}_X$ of a toric hypersurface Calabi--Yau threefold in \cref{fig:cone}. In particular, we see there how a subcone $\mathcal{E}_V$ of $\mathcal{E}_X$ is inherited from the ambient variety: its complement $\mathcal{E}_X \setminus \mathcal{E}_V$ is exactly the set of autochthonous divisors. 

A useful notion that we will refer to repeatedly in this work is that of the \textit{Hilbert basis} of a cone:
\begin{definition}
    The \textit{Hilbert basis} $\mathcal{H}(\mathcal{C})$ of a cone $\mathcal{C} \subset \mathbb{R}^n$ is the minimal set of integral vectors $\{\mathbf{v}_i\}$ needed to generate every integral point in $\mathcal{C}$ over $\mathbb{Z}_{\geq 0}$. That is, $\mathcal{H}(\mathcal{C})$ is the minimal set of $\{\mathbf{v}_i\}$ such that
    \begin{align}
        \mathbf{p} = \sum_i c_i \mathbf{v}_i\,, \ \ c_i \in \mathbb{Z}_{\geq 0}.
    \end{align}
    for any $\mathbf{p} \in \mathcal{C} \, \cap \, \mathbb{Z}^n$.
\end{definition}
Note that the Hilbert basis of the effective cone of a toric variety is not the same as the set of its prime toric divisors. In particular, neither is a subset of the other in general. In the context of Calabi--Yau hypersurfaces in toric varieties, we refer to elements of the Hilbert basis of $\mathcal{E}_V$ that are not prime toric divisors as \textit{non-trivial Hilbert basis elements}. A main purpose of this work is to argue that non-trivial Hilbert basis elements are holes. Indeed, in \cref{sec:ksholes} we search through all non-trivial Hilbert basis elements of toric hypersurface Calabi--Yau threefolds from the Kreuzer--Skarke database with $h^{1,1} \leq 6$ (and randomly sample many polytopes with larger $h^{1,1}$), finding none of them to be effective (i.e., autochthonous). In \cref{sec:holeprops}, we will prove that this must be true for non-trivial Hilbert basis elements on the strict interior of $\mathcal{E}_V$ (\cref{prop:toric_big_movable}), leaving the more general statement as a conjecture in \cref{conj:non_trivial_hb_are_holes}.

We conclude by noting that the K\"aher cone of the ambient toric variety provides some information that descends to the K\"ahler cone of the hypersurface Calabi--Yau. In particular, letting $K_V$ denote the cone generated by classes $[D]$ induced by ambient classes $[\hat{D}]$ that are ample,
\begin{align}
    K_V \subset K_X.
\end{align}
Intuitively, all K\"ahler forms assigning positive volumes on the ambient variety will continue to do so upon restriction to any hypersurface (including $X$) but need not exhaust the K\"ahler forms assigning positive volume on any given hypersurface. 

\paragraph{Hypersurface Line Bundle Cohomology.} Let us recall how the line bundle cohomology of divisors on $X$ which are inherited from $V$ can be computed from the line bundle cohomology of $V$. We presently take the computation of toric line bundle cohomology for granted: this will be discussed in detail in \cref{sec:direct_comp}.

Fix a divisor class $[\hat{D}]$ in $V$ descending to $[D]$ on $X$. We will focus on a Calabi--Yau hypersurface given by the zero locus of a global section $G$ of the anticanonical class $-\hat{K}$ of $V$, but what follows holds just as well for any other hypersurface. Letting $\iota$ denote the inclusion of $X$ into $V$, we have the following short exact sequence of sheaves, known as the Koszul sequence.
\begin{equation}
    0 \to \mathcal{O}_V(\hat{D}+\hat{K}) \xrightarrow{F} \mathcal{O}_V(\hat{D}) \to \iota_*\mathcal{O}_X(D) \to 0 \, .
\end{equation}
By applying the right derived functor $H^{\bullet}(\text{--})$ to this Koszul sequence, we arrive at a long exact sequence
\begin{align}
    \begin{split}
        0 & \to H^0(V,\mathcal{O}_V(\hat{D} + \hat{K})) \xrightarrow{F} H^0(V, \mathcal{O}_V(\hat{D})) \xrightarrow{\alpha} H^0(X,\mathcal{O}_X(D)) \\
        & \xrightarrow{\beta} H^1(V,\mathcal{O}_V(\hat{D} + \hat{K})) \xrightarrow{\tilde{F}} H^1(V,\mathcal{O}_V(\hat{D})) \to {\dots} \,.
    \end{split}
\end{align}
Here, $\alpha$ is the restriction of sections to $X$, $\beta$ is the connecting homomorphism arising from application of the snake lemma~\cite{weibel1994introduction}, and $F, \tilde{F}$ are maps induced by multiplication by the section $G$ defining $X$. From the long exact sequence, $H^0(X,\mathcal{O}_X(D))$ decomposes as 
\begin{equation}
    \label{eq:coker_ker}
    H^0(X,\mathcal{O}_X(D)) = \mathrm{coker} \, F \oplus \mathrm{ker} \, \Tilde{F} \, .
\end{equation}
Each summand admits an interpretation. The cokernel is the restriction of global sections $H^0(V, \mathcal{O}_V(\hat{D}))$ to $H^0(X,\mathcal{O}_X(D))$ modulo sections with an overall factor of $G$, parameterized as $F \cdot H^0(V,\mathcal{O}_V(\hat{D} + \hat{K}))$. That is, if a global section of $[\hat{D}]$ contains an overall factor of $G$, it vanishes identically upon restricting to $D$. The kernel describes global sections of $[D]$ which do not lift to $[\hat{D}]$.

In particular, we have that
\begin{equation}
    \label{eq:CY_global_section}
    h^0(X, \mathcal{O}_X(D)) = h^0(V, \mathcal{O}_V(\hat{D})) - h^0(V, \mathcal{O}_V(\hat{D}+\hat{K})) + \mathrm{dim}(\mathrm{ker}(\tilde{F})) \, .
\end{equation}
In this way, $h^0(X, \mathcal{O}_X(D))$ is fully determined by a toric calculation: toric line bundle cohomology groups and the rank of a map between such groups. We will review this calculation in \cref{sec:direct_comp}.

We now make a few useful observations. First, as noted earlier, holomorphic divisors on $V$ will always be holomorphic on $X$.\footnote{In our setting, the anticanonical class always has more than one section, so $h^0(V, \mathcal{O}_V(\hat{D})) > h^0(V, \mathcal{O}_V(\hat{D}+\hat{K}))$.} Second, we can see explicitly how non-holomorphic divisor classes $[\hat{D}]$ on $V$ may well still be holomorphic on $X$, resulting in autochthonous divisors. There will be no $\mathrm{coker} \, F$ contribution (because $H^0(V,\mathcal{O}_V(\hat{D} + \hat{K}))$ injects into $H^0(V,\mathcal{O}_V(\hat{D}))$, so the former vanishes when the latter does), but $\mathrm{ker} \, \Tilde{F}$ can still contribute. In this way, while holes on $V$ are natural candidates for holes on $X$, they may in fact become holomorphic upon restricting to $X$. Third, continuing in this direction, a sufficient (but unnecessary) condition for an inherited $D$ to be non-holomorphic is that $[\hat{D}]$ is non-holomorphic on $V$ and that $h^1(V, \mathcal{O}_V(\hat{D} + K)) = 0$, as this latter condition implies the vanishing of $\mathrm{ker} \, \Tilde{F}$. We will exploit this sufficient condition in the ensuing sections \cref{sec:holeprops} to prove additional results about holes for toric hypersurfaces, and we will further find in \cref{sec:ksholes} that this sufficient condition is realized by all non-trivial Hilbert basis elements we study. 

\subsection{Physical Interpretation}

In this section, we will provide further physical context and motivation for the study of effective divisors. We will begin with discussing their role in compactifications of M-theory, and then describe them in the context of Calabi--Yau orientifold compactifications of type IIB string theory.

Consider M-theory compactified on a Calabi--Yau threefold, $X$. The resulting effective theory is a five-dimensional theory endowed with $\mathcal{N}=1$ supersymmetry. The low-energy theory contains $h^{1,1}(X)-1$ vector multiplets with scalar components $\phi^a$, and the bosonic part of the action for the vector multiplets takes the form
\begin{equation}
    \begin{aligned}
        S &= \frac{2\pi}{\ell_5^3} \int \text{d}^5 x \sqrt{-g} \left(R - \frac{1}{2} f_{ij}  \partial_a t^i \partial_b t^j \partial_\mu \phi^a \partial^\mu \phi^b \right) \\
        &\qquad\qquad -\frac{1}{4\pi \ell_5} \int f_{ij}(\phi) F^i \wedge \star F^j-\frac{1}{24 \pi^2} \int \kappa_{ijk} A^i \wedge F^j \wedge F^k\,.
    \end{aligned}
\end{equation}
Here, $\ell_5$ is the five-dimensional Planck length, $R$ is the Ricci scalar of the metric $g$, $\kappa_{ijk}:=\int_X[D_i]\wedge [D_j]\wedge [D_k]$ are the triple-intersection numbers of $X$, and the $A^i$ are the $h^{1,1}(X)$ gauge potentials with field strengths $F^i$. Furthermore, the $\phi^a$ parameterize the hypersurface defined by
\begin{align}
   \mathcal{F} := \frac{1}{6} \kappa_{ijk} t^i t^j t^k =1\,,
\end{align}
where the $t^i$ are the K\"ahler parameters of $X$, and
\begin{align}
    f_{ij} = \partial_i \mathcal{F} \partial_j \mathcal{F} - \partial_i \partial_j \mathcal{F}\,.
\end{align}
The BPS spectrum is given by M2-branes and M5-branes wrapped on effective 2-cycles and 4-cycles, respectively. Focusing on string states constructed from M5-branes, when the 4-cycle $D$ is either ample or nef, then the resulting string in the 5d $\cN=1$ theory is either a black string~\cite{Maldacena:1997de} or a supergravity string~\cite{Katz:2020ewz}. Additionally, in the infinite-distance limit of the K\"ahler moduli space where we can encounter a \textit{fundamental} string becoming tensionless, the string is constructed from an M5-brane wrapping a 4-cycle that is non-big~\cite{Lee:2019wij}. Nevertheless, as all these string states are supersymmetric, their tension formula is exact and takes on the following form
\begin{align}
    T_{\text{BPS}} \propto \text{vol}(D)\,,
\end{align}
throughout the K\"ahler moduli space.

On the other hand, non-BPS states come from these branes wrapped on non-holomorphic cycles. Again, focusing on string states, the tensions are also given, to leading order, by the volumes of 4-cycles, but can be corrected by quantum effects. The tension of a string state coming from an M5-brane wrapped on a non-holomorphic irreducible cycle $D'$ is then
\begin{align}
    T_{\text{non-BPS}} \propto \text{vol}(D') + \text{quantum corrections}\,.
\end{align}
When the non-holomorphic cycle is reducible, the tension of the resulting string can additionally receive classical corrections, such as those arising from self-interactions.

Thus, characterizing the spectrum of BPS and non-BPS states in compactifications of M-theory on a Calabi--Yau threefold requires knowledge of
\begin{enumerate}
    \item which sites in $H^2(X, \mathbb{Z})$ are holomorphic and which are non-holomorphic, and
    \item the volumes of all cycles.
\end{enumerate}

Apart from determining the spectrum of solitonic objects in M-theory compactifications, the information above determines which instantons can give rise to corrections of $\mathcal{N}=1$ theories coming from type IIB compactifications to four dimensions. Consider type IIB string theory compactified on a Calabi--Yau (orientifold) $X$. As stated in \cref{sec:review-geometry}, we will assume that the orientifold projects out no K\"ahler moduli, so that the four-dimensional $\mathcal{N}=1$ theory contains $h^{1,1}(X)$ massless K\"ahler moduli. 

For simplicity, we assume that the complex structure moduli and axio-dilaton are stabilized at some high scale. Then, the low-energy theory is characterized by an F-term scalar potential that takes the following form.
\begin{align}
    V_F  = e^K (K^{i\bar{j}} D_i W D_{\bar{j}} \overline{W} - 3 |W|^2).
\end{align}
Here, the indices run over the $h^{1,1}(X)$ K\"ahler moduli, the K\"ahler potential $K$ is given at tree-level by
\begin{align}
    K = -2 \log(\mathcal{V})\,,
\end{align}
where $\mathcal{V}$ is the overall volume of $X$, and the superpotential $W$ takes the form
\begin{align}
    W = W_0 + \sum_{I} A_I e^{-2\pi q^I_i T^i}\,.
    \label{eq:superpotential}
\end{align}
Here, $W_0$ is a (generally complex) constant, and the $A_I$ are one-loop determinants, which can be zero if the associated instanton has more than the two universal zero modes. The $T^i$ are the K\"ahler moduli
\begin{align}
    T^i = \text{vol}(D^i) + i \int_{D^i} C_4 =: \tau^i + i \theta^i\,,
\end{align}
with $D^i$ a basis of holomorphic divisors, and $C_4$ the Ramond-Ramond four-form potential in ten dimensions, with $\tau^i$ and $\theta^i$ being four-dimensional saxion and axion fields, respectively. In \eqref{eq:superpotential}, the $q^I_i$ are instanton charges: importantly, these charges range only over sites in the $H^2(X, \mathbb{Z})$ lattice that correspond to effective divisors. This is necessary so that $W$ remains a holomorphic function of the moduli. The quantity $q^I_i T^i$ measures the volume of a holomorphic divisor with charge $q^I_i$ in $H^2(X, \mathbb{Z})$. Note that for every such holomorphic divisor, the volume is given by \eqref{eq:calibrated_vols}. 

By specifying a basis, we can write the volumes explicitly in terms of the triple intersection numbers $\kappa_{ijk}$ of $X$:
\begin{align}
    \text{vol}(D^i) = \frac{1}{2} \kappa_{ijk} t^j t^k\,,
\end{align}
where the $t^i$ specify a point in $K_X$.

\begin{figure}
    \centering
    \begin{subfigure}{.48\textwidth}
    \includegraphics[width=\linewidth]{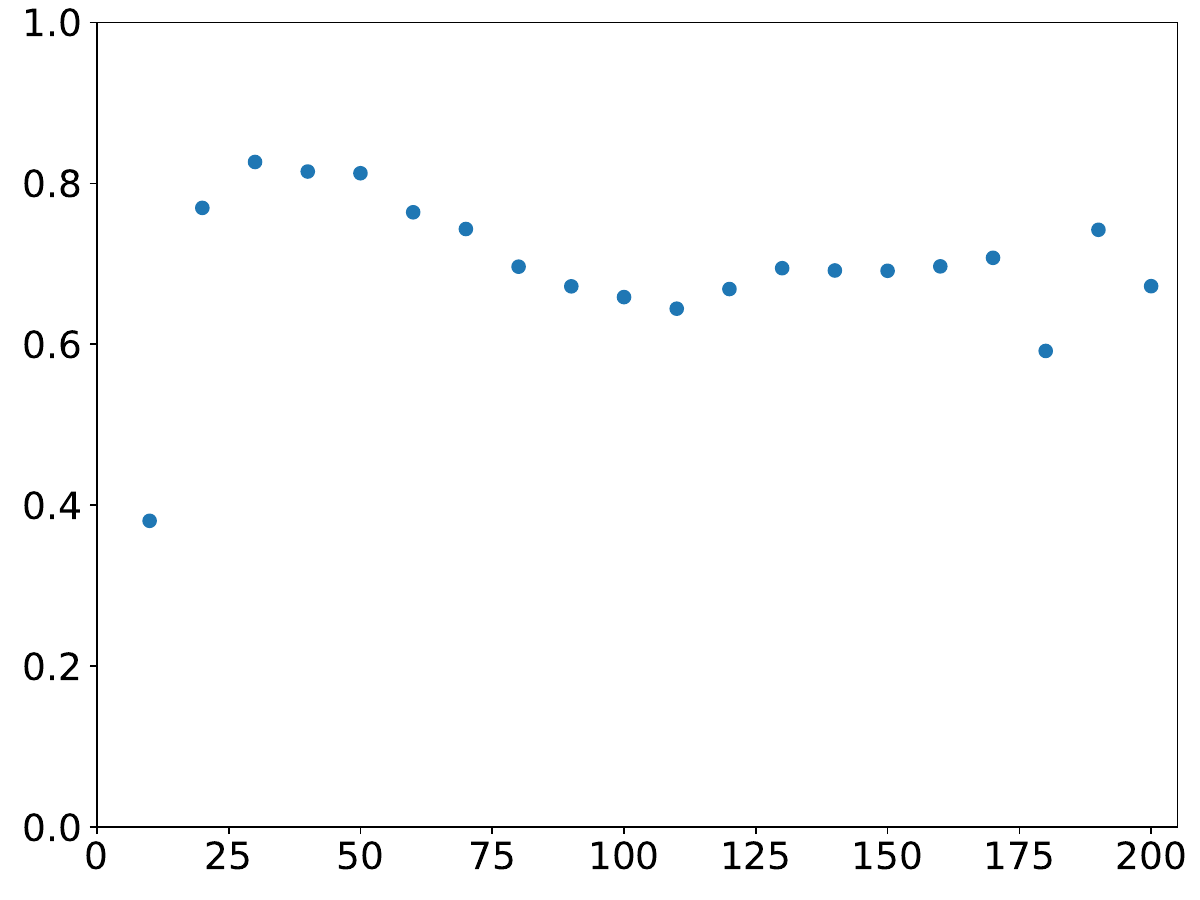}
    \begin{picture}(-210,0)\vspace*{-1.2cm}
    \put(-100,-10){\footnotesize $h^{1,1}$}
    \put(-225,65){\rotatebox{90}{\footnotesize Fraction}}
    \end{picture}\vspace{0.5cm}
    \caption{Fraction of birational classes of Calabi--Yau threefolds with non-trivial Hilbert basis elements}
    \end{subfigure}
    \hfill
    \begin{subfigure}{.48\textwidth}
    \includegraphics[width=\linewidth]{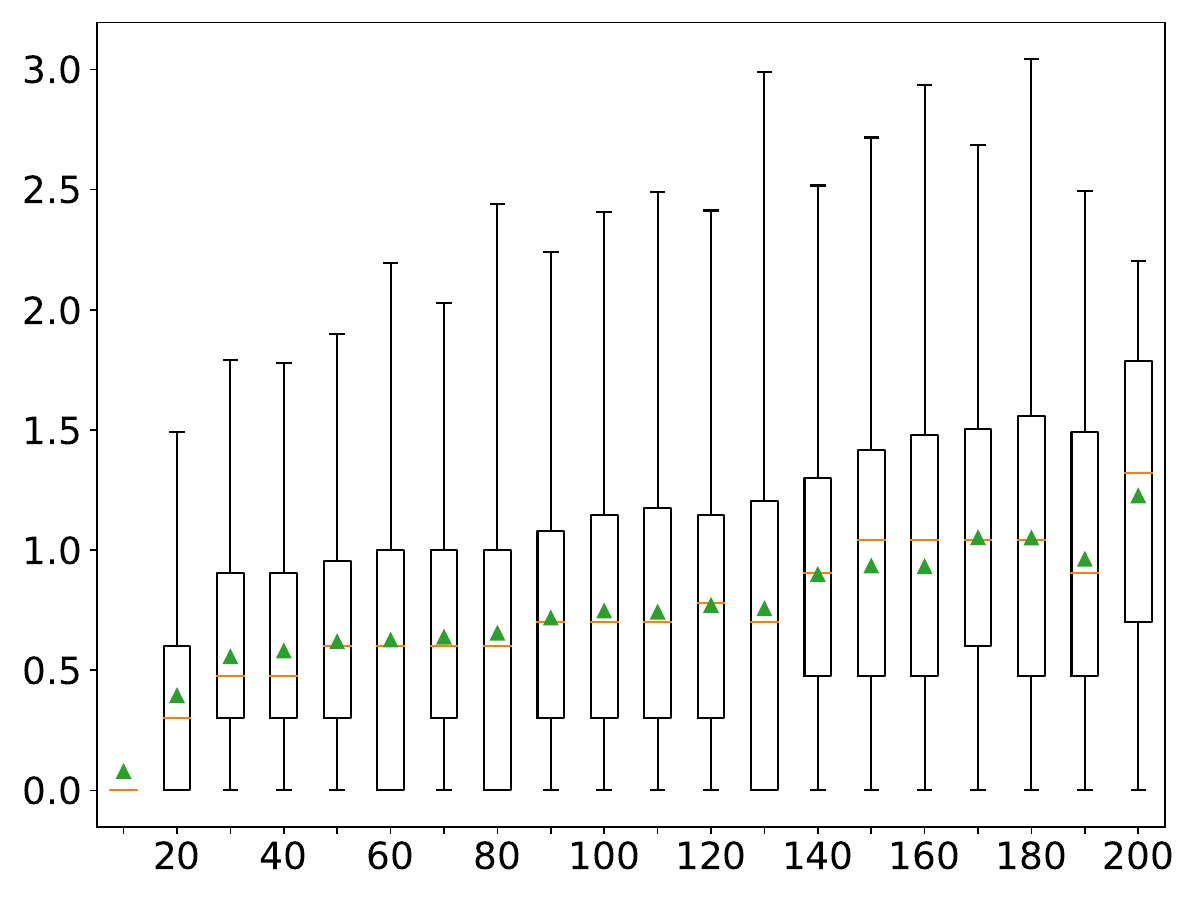}
    \begin{picture}(-200,0)\vspace*{-1.2cm}
    \put(-100,-10){\footnotesize $h^{1,1}$}
    \put(-225,10){\rotatebox{90}{\footnotesize $\log_{10}(\#\text{ of non-trivial Hilbert basis})$}}
    \end{picture}\vspace{0.5cm}
    \caption{Distribution of number of non-trivial Hilbert basis elements}
    \end{subfigure}\vspace{0.cm}
    \caption{A scan through the Kreuzer--Skarke database for Calabi--Yau threefolds with non-trivial Hilbert basis elements up to $h^{1,1}=200$. Here, for each given $h^{1,1}$, we randomly sample $\sim 10^4$ polytopes within the Kreuzer--Skarke database using CYTools~\cite{Demirtas:2022hqf}.
    In the right figure, the box indicates 50\% of the number of non-trivial Hilbert basis elements, the whiskers indicate the minimum and maximum value of the number of non-trivial Hilbert basis elements within $3/2$ of the interquartile range, the orange horizontal line indicates the median, and the green triangle indicates the mean.}
    \label{fig:additional elements}
\end{figure}
An important question in the study of $\mathcal{N}=1$ compactifications of type IIB string theory is that of which corrections to \eqref{eq:superpotential} are the leading ones. A standard assumption for phenomenological studies with toric hypersurfaces~\cite{Demirtas:2018akl, Demirtas:2021gsq, Cheng:2025ggf, Yin:2025amn, Gendler:2023kjt, Gendler:2023hwg, Gendler:2024adn, McAllister:2024lnt, Demirtas:2021ote, Demirtas:2021nlu, Mehta:2021pwf} is that the contributions from Euclidean D3-branes wrapping prime toric divisors are the most dominant contributions. However, if any non-trivial Hilbert basis elements are effective, then it is conceivable that this assumption is misguided: effective non-trivial Hilbert basis elements could easily constitute the most relevant non-perturbative effects (while divisor classes in the effective cone outside the Hilbert basis always have calibrated volumes larger than some elements of the Hilbert basis, and consequently yield subleading physical effects). The danger that this possibility presents is illustrated in \cref{fig:additional elements}, where we exhibit the statistics of non-trivial Hilbert basis elements in the Kreuzer--Skarke database. Here we see two facts: first, most Calabi--Yau threefolds contain non-trivial Hilbert basis elements, and second, the number of non-trivial Hilbert basis elements can be quite high, even $\mathcal{O}(1000)$. We will see, however, that the assumption that prime toric divisors constitute the leading contributions may indeed be justified: in \cref{sec:holeprops}, we will prove that any such non-trivial Hilbert basis elements that are interior to $\mathcal{E}_V$ are not effective (\cref{prop:big}), and in \cref{sec:ksholes} we will give empirical evidence that the same is true for non-trivial Hilbert basis elements on the boundary of $\mathcal{E}_V$, motivating \cref{conj:non_trivial_hb_are_holes}.

\section{Properties of Holes} \label{sec:holeprops}

In this section, we discuss some general properties of holes on Calabi--Yau threefolds. Our discussion will begin by considering general Calabi--Yau threefolds, though we will eventually restrict to our primary focus in \cref{sec:toric_hyper_holes}: namely, hypersurfaces in toric varieties, and their non-trivial Hilbert basis elements. This section has some mathematically technical aspects, so a reader willing to accept our main results --- summarized below --- will not need most of what is discussed for the remaining paper. However, in this section we will present some useful perspectives which will facilitate the interpretation of our examples in \cref{sec:ksholes}. 

For $X$ a smooth Calabi--Yau threefold, we show the following.
\begin{itemize}
    \item (\cref{th:big_movable_CY}) Holes cannot be big and movable, meaning they cannot belong to $\mathcal{K}_X \setminus \partial \mathcal{E}_X$ (i.e., the extended K\"ahler cone aside from its intersection with the boundary of the effective cone).
    \item (\cref{cor:semigroup}) Non-movable holes $[D]$ are not isolated, but rather come in semigroups of the form $[D] + \sum_i a_i [E_i]$ for $E_i$ the prime components of the stable base locus of $[D]$.\footnote{Note that the semigroup operation is $([D] + \sum_i a_i [E_i]) + ([D] + \sum_j b_j [E_j]) = [D] + \sum_i (a_i + b_i) [E_i]$.} 
    \item (\cref{th:CY_MMP_holes}) Using the minimal model program, holes $[D]$ correspond to nef holes $[f_* D]$ for $f : X \dashrightarrow Y$ a birational morphism to a (singular) $[D]$-minimal model $Y$.
    \item (\cref{prop:torsion}) In particular, if a multiple of $[D]$ is a non-negative linear combination of prime exceptional divisors of $f$, then $Y$ has torsion in its class group, with $[f_* D]$ being a torsion class.
    \item (\cref{prop:big}) If $X$ is a hypersurface in a toric variety $V$, then non-trivial Hilbert basis elements descending from big divisors on $V$ (i.e., those on the interior of $\mathcal{E}_V$) must be holes.
\end{itemize}
The ideas employed in this section are closely related to those of \cite{constantin_inprep}, written concurrently to this paper, which studies the structure of global sections of divisors more broadly.

\subsection{Holes and Movability}

We begin with the following result, which is straightforward to show but is nevertheless a nice starting point for our conversation.
\begin{prop}
    For $[D]$ a big divisor class in the movable cone of a Calabi--Yau threefold $X$, $[D]$ is effective. \label{th:big_movable_CY}
\end{prop}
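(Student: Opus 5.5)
Since $[D]$ lies in the movable cone, the first step is to pass to a birational model on which $[D]$ is nef, and then to prove effectiveness there using Riemann--Roch and vanishing. By the description of the movable cone of a Calabi--Yau threefold as the extended K\"ahler cone (\cref{th:movable_is_nef_somewhere}), there is a Calabi--Yau threefold $X'$, related to $X$ by a sequence of flops, such that the image $[D']$ of $[D]$ lies in the nef cone of $X'$. Flops are small birational maps. They therefore identify the divisor class groups, preserve $h^0$, and preserve bigness, since bigness is read off from the growth of $h^0(m[D])$. It thus suffices to show that a nef and big integral class $[D']$ on a smooth Calabi--Yau threefold $X'$ has $h^0(X',\mathcal{O}_{X'}(D'))\geq 1$. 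One caveat: $X'$ should be smooth, or at worst have terminal Gorenstein singularities. Flops of smooth Calabi--Yau threefolds in the sense used in the paper stay smooth, and the argument below also tolerates terminal singularities.

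On $X'$ the canonical class is trivial, so $D' = K_{X'} + D'$ with $D'$ nef and big. Kawamata--Viehweg vanishing (valid for smooth varieties, and more generally for klt ones) then gives $h^i(X',\mathcal{O}_{X'}(D'))=0$ for all $i>0$. Consequently $h^0(\mathcal{O}_{X'}(D')) = \chi(\mathcal{O}_{X'}(D'))$. Hirzebruch--Riemann--Roch on a Calabi--Yau threefold, where $c_1=0$ and $\chi(\mathcal{O})=0$, gives
\begin{equation}
    \chi(\mathcal{O}_{X'}(D')) = \frac{1}{6}D'^3 + \frac{1}{12} D'\cdot c_2(X')\,.
\end{equation}
It remains to show that this quantity is positive.

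Bigness together with nefness gives $D'^3>0$. For the second term we use Miyaoka's pseudoeffectivity of $c_2$ on minimal models: $c_2(X')\cdot H \geq 0$ for every nef class $H$. This gives $D'\cdot c_2(X')\geq 0$, hence $h^0(X',\mathcal{O}_{X'}(D'))\geq D'^3/6>0$. Pulling back along the flops then shows that $[D]$ is effective on $X$.

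The main obstacle I expect is the justification of the reduction step, not the final computation. One must confirm that every class in the movable cone, and not merely every class in its interior, is nef on some flop-equivalent model; on the boundary faces this requires closure of the nef cones, which is where rational polyhedrality or finiteness of models could matter. One must also confirm that the model $X'$ keeps enough regularity for Kawamata--Viehweg vanishing and Miyaoka's inequality to apply. I would handle this by citing \cref{th:movable_is_nef_somewhere} for the Calabi--Yau case directly, and by noting that both tools hold for $\mathbb{Q}$-factorial terminal models, which flops preserve.
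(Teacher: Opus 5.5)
Your proposal is correct and follows essentially the same route as the paper: flop $X$ to a model $X'$ on which $[D]$ is nef (via \cref{th:movable_is_nef_somewhere}), kill the higher cohomology with Kawamata--Viehweg since $[D]-[K]=[D]$ is nef and big, and conclude $h^0=\chi=\tfrac{1}{6}D^3+\tfrac{1}{12}c_2\cdot D>0$ from $D^3>0$ and Miyaoka's inequality. Your extra caveats about the smoothness of the flopped model and about classes on the boundary of the movable cone are reasonable, and the cited theorem absorbs them: for a big movable class, the $[D]$-MMP consists of finitely many flops and ends with $[D]$ nef.
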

To prove the above proposition, we will first need to establish the following result for movable divisors in Calabi--Yau threefolds. This following theorem will also be useful in showing the results for toric varieties in \cref{prop:toric_klt}.
\begin{theorem}
    If $[D]$ is a movable divisor in the effective cone of $X$ a toric variety or smooth Calabi--Yau threefold, then there exists a $[D]$-minimal model related to $X$ by small birational maps. In particular, the cone generated by effective movable divisor classes on $X$ decomposes as the union of nef cones of varieties related to $X$ by small birational maps.
    \label{th:movable_is_nef_somewhere}
\end{theorem}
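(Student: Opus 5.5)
The plan is to treat the two cases in parallel, using the minimal model program in the form appropriate to each: the theory of Mori dream spaces for toric varieties, and Kawamata's results on flops for Calabi--Yau threefolds. In both cases the strategy is to run a $[D]$-MMP, i.e.\ an MMP for the pair $(X,\epsilon D)$ with $\epsilon$ small rational. We then argue that movability of $[D]$ forces every step to be small, and that the program terminates at a model $X'$ on which the pushforward of $[D]$ is nef. That $X'$ is the desired $[D]$-minimal model. The decomposition statement then follows formally. Each nef cone of a small modification $X'$ pulls back to a subcone of the movable cone of $X$, since nef divisors on $X'$ have empty codimension-one stable base locus and small maps preserve this. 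Conversely, every effective movable class lands in some such nef cone by the first part.

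For $X$ a ($\mathbb{Q}$-factorial, projective) toric variety, I would invoke that toric varieties are Mori dream spaces (Hu--Keel). Their effective cone has a finite decomposition into Mori chambers. The chambers meeting the interior of the movable cone are exactly the nef cones of the small $\mathbb{Q}$-factorial modifications $f_i:X\dashrightarrow X_i$. Concretely, these are the GIT chambers of the secondary fan, i.e.\ the regular triangulations using all the rays. So for $[D]$ movable and effective, $[D]$ lies in the closure of one such chamber and is nef on the corresponding $X_i$. Alternatively, one can run the toric MMP directly: any $D$-negative extremal ray whose contraction is divisorial would contract a divisor contained in the stable base locus of $D$, contradicting movability. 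So only flips occur, and termination holds in the toric setting.

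For $X$ a smooth Calabi--Yau threefold, I would take $D$ effective and movable and choose $\epsilon>0$ small so that $(X,\epsilon D)$ is klt, which is possible since $X$ is smooth. Since $K_X\sim 0$, the $(K_X+\epsilon D)$-MMP is the $D$-MMP. If $D$ is not nef, the cone theorem gives a $D$-negative extremal ray $R$. Curves in $R$ lie in $\mathrm{Bs}|mD|$, which has codimension at least two, so the contraction of $R$ cannot be divisorial. Because $K_X\cdot R=0$, it is a flopping contraction, and the flop exists by Kawamata/Koll\'ar. The flopped variety is again a smooth (or terminal $\mathbb{Q}$-factorial) Calabi--Yau threefold with $D$ still movable, so the process iterates. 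Termination of flops for klt pairs in dimension three (Shokurov/Kawamata) ends it at a model where $D$ is nef. The resulting union of nef cones of flop-related Calabi--Yau threefolds is then the (effective) movable cone, which is Kawamata's description of the extended K\"ahler cone.

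The main obstacle I expect is the Calabi--Yau termination step, together with checking that movability is preserved along the sequence of flops. Preservation should hold because small maps identify codimension-one loci and global sections. For termination, I would simply cite Kawamata's theorem on termination of three-dimensional log flips, rather than attempt a direct difficulty-function argument.
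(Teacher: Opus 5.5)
Your proposal follows the same route as the paper. The paper's proof simply cites Th.~15.1.10 of \cite{cls} and \cite{hu2000mori} for the toric case and Th.~2.3 of \cite{kawamataCone} for Calabi--Yau threefolds, and your two arguments (the Mori-chamber decomposition, and the $(X,\epsilon D)$ log MMP in which movability forces every step to be a flop and threefold log-flip termination ends the process) are the content of those references. One step needs tightening: the reverse inclusion requires effective nef classes on each small modification to be semiample, which holds automatically on Mori dream spaces and, on Calabi--Yau threefolds, follows from log abundance for threefolds \cite{keel1994log}; nefness alone does not rule out a divisorial stable base locus (cf.\ Zariski's example).
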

\begin{proof}
    For toric varieties, this follows from Th. 15.1.10 in~\cite{cls} (in particular, see the discussion following the proof of this result), and also holds broadly for Mori dream spaces~\cite{hu2000mori}. For Calabi--Yau threefolds, this is Th. 2.3 in~\cite{kawamataCone}.
\end{proof}
That is, movable divisor classes in the effective cone can be rendered nef by a series of small birational maps (e.g., on a Calabi--Yau threefolds, by flops). We are now prepared to prove the original result.
\begin{proof}[Proof of \cref{th:big_movable_CY}]
    By \cref{th:movable_is_nef_somewhere}, because we assume $[D]$ is movable, we can render it nef by flopping $X$ some number of times, yielding $X'$. Global sections (and thus effectiveness) as well as bigness are preserved under flops. The holomorphic Euler characteristic of a divisor $[D]$ inside a Calabi--Yau threefold $X'$ is 
    \begin{equation}
        \chi(\mathcal{O}_{X'}(D)) = \sum_{i=0}^3 (-1)^i h^i(X', \mathcal{O}_{X'}(D)) \, .
    \end{equation}
    By employing the Hirzebruch--Riemann--Roch theorem~\cite{Hirzebruch1966}, we can write that 
    \begin{equation}
        \chi(\mathcal{O}_{X'}(D)) = \frac{1}{6}D^3 + \frac{1}{12}c_2(X') \cdot D \, .
    \end{equation}
    The first term is non-zero because $[D]$ is big and nef, and the second term is non-negative by Miyaoka's theorem~\cite{miyaoka1987chern}. Finally, by the Kawamata--Viehweg vanishing theorem \cite{kawamata1982generalization, viehweg1982vanishing}, $h^i(\mathcal{O}_{X'}(D)) = 0$ for $i > 0$ because $[D] - [K] = [D]$ is big and nef on $X'$ (the canonical class $[K]$ of $X$ being trivial). All together, we have
    \begin{equation}
        h^0(\mathcal{O}_X(D)) = h^0(\mathcal{O}_{X'}(D)) = \chi(\mathcal{O}_{X'}(D)) > 0 \, ,
    \end{equation}
    concluding the proof.
\end{proof}
For the case of (very) ample divisors in Calabi--Yau threefolds, from the attractor mechanism of black strings in M-theory, these divisors are naturally expected to be effective~\cite{Maldacena:1997de}.

What about non-big movable divisors (i.e., divisors on the boundary of both the effective and movable cones)? We are free to iteratively flop $X$ to $X'$ such the non-big movable divisor $D$ in question is nef, but Kawamata--Viehweg is unavailable. 
Previously, the following physics-informed conjecture has been posed.\footnote{Because nef divisors are semiample on smooth Calabi--Yau threefolds (a corollary of log abundance for threefolds \cite{keel1994log}), nef and non-big divisor classes $[D]$ define fibrations $f : X \to B$ with $\dim X > \dim B$ such that $D = f^* A$ for $A$ ample on $B$ and $h^0(X, \mathcal{O}_X(D)) = h^0(B, \mathcal{O}_B(Y))$ (see, e.g., Lemma 2.1.13 in \cite{lazarsfeld2017positivity} and the subsequent discussion of semiample fibrations). Thus, a counterexample of \cref{conj:nonbig_movable_CY} exists if and only if there is a smooth fibered Calabi--Yau threefold whose base features an ample, non-effective divisor. Note that the base would not be Calabi--Yau and need not even be smooth: indeed, two of the smooth Calabi--Yau threefolds in the Krezuer--Skarke database with $h^{1,1} = 2$ are elliptic fibrations over the singular weighted projective space $\mathbb{P}_{112}$ (these arise from the polytopes with IDs $22$ and $33$, respectively). In fact, it is not difficult to find varieties with ample, non-effective divisors: we showcase a simple weighted projective space example in \cref{sec:minimal_model}, which subsequently appears in \cref{sec:h11=4}. One could imagine attempting to engineer a counterexample to \cref{conj:nonbig_movable_CY} by constructing fibrations over such bases; the challenge is achieving a smooth fibration.}
\begin{conjecture}[\S3.1 in \cite{Katz:2020ewz}]
    Divisor classes contained in the cone of movable divisors on smooth Calabi--Yau threefolds are effective. 
    \label{conj:nonbig_movable_CY}
\end{conjecture}
For discussion related to this conjecture in the math literature, see~\cite{lazarsfeld2017positivity, Oguiso}. 
Here, we will review the interesting case where $[D]^2 = 0$ and $c_2(X) \cdot [D] = 0$ for which effectiveness is currently only motivated by physical arguments~\cite{Katz:2020ewz}. To start, let us consider M5-branes wrapping these divisors in smooth Calabi--Yau threefolds. The resulting object in the 5d $\cN=1$ theory is a magnetic monopole string preserving at least 4 chiral supercharges on its 2d string worldsheet. By computing the gauge anomaly inflow along the 2d string worldsheet, we can deduce the right-moving and left-moving central charge of the 2d CFT~\cite{Maldacena:1997de} 
\begin{equation}
    c_R=6D^3+\frac12 c_2(X) \cdot D - c_R^{\rm com}\,,\qquad c_L=6D^3+c_2(X) \cdot D-c_L^{\rm com}\,,
\end{equation}
where $c_{R,L}^{\rm com}$ are integral constants counting the center-of-mass modes in each right- and left-moving sector. We can also compute the levels of the Kac--Moody current algebras of the symmetries in the 2d CFT from the following data
\begin{equation}
    k_R=c_2(X) \cdot D\,,\qquad k_{ij} =D \cdot D_i \cdot D_j\,,
\end{equation}
where $k_R$ is the level associated to the $SU(2)_R$ symmetry while the levels associated to e.g., the remaining abelian global symmetries can be found upon diagonalizing $k_{ij}$. For the former case where $D^3=0$ and $D^2=0$,~\cite{Katz:2020ewz} argued that when combining the results of~\cite{Oguiso} along with the completeness of spectrum hypothesis~\cite{Polchinski:2003bq,Banks:2010zn}, $D$ is expected to be effective. For the latter case when $D^3=0$ and $c_2 \cdot D=0$, we have $c_R=c_L=0$ and $k_R=0$. Then, this implies that the 2d CFT has no gravitational anomaly, no right-moving current, and no $SU(2)_R$ anomaly. However, these properties could only lead to supersymmetry enhancements, due to the absence of chiral anomalies. Namely, the resulting 2d CFT may either be enhanced to $\cN=(4,4)$ or even $\cN=(8,8)$.
Hence, from the worldsheet perspective, branes wrapped on these divisors can only be more supersymmetric, and are thus also expected to be effective. 

\subsection{Holes Have Friends}
\label{sec:friends}

In this section, we show that non-movable holes are not isolated, but rather come in semigroups. We begin with the following technical result, which characterizes a situation in which local sections lift to global sections.
\begin{lem}
    \label{lem:lift_to_global}
    Let $X$ be a normal projective variety with a Weil divisor $D$ satisfying $[D] \in \mathcal{E}_X$. If $U \subset X$ is an open subset such that the prime codimension-one components of $X \setminus U$ are contained in the stable base locus of $[D]$, then $H^0(X, \mathcal{O}_X(D)) \cong \mathcal{O}_X(D)(U)$.
\end{lem}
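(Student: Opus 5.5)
The map in question is the restriction $r \colon H^0(X,\mathcal{O}_X(D)) \to \mathcal{O}_X(D)(U)$, and the plan is to show that it is bijective. Throughout I work with the description of sections as rational functions: $\mathcal{O}_X(D)(W) = \{\varphi \in K(X)^\times : (\mathrm{div}(\varphi)+D)|_W \geq 0\}\cup\{0\}$. Since $X$ is normal, $\mathcal{O}_X(D)$ is reflexive, and any such $\varphi$ is a rational function on the irreducible variety $X$.

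\textbf{Injectivity.} This is immediate. If a global section $\varphi$ restricts to zero on the dense open set $U$, then $\varphi = 0$ as a rational function.

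\textbf{Surjectivity.} Take a local section $\varphi \in \mathcal{O}_X(D)(U)$, and set $E := \mathrm{div}(\varphi) + D$. This divisor is effective along every prime divisor meeting $U$. Its only possible negative coefficients are therefore along the prime divisors $B_1,\dots,B_k$ that form the codimension-one components of $X\setminus U$. By hypothesis each $B_j$ lies in the stable base locus $\mathbf{B}([D])$.

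Choose $m \geq 1$ with $mD$ effective, which is possible because $[D]\in\mathcal{E}_X$ (as noted in the text). Pick any $0 \ne \psi \in H^0(X,\mathcal{O}_X(mD))$. Because $B_j \subset \mathbf{B}([D])$, every effective divisor in any class $\ell[D]$ with $\ell\geq 1$ contains $B_j$. The key object is the rational function $\varphi^{m}$: it gives a divisor $m\,\mathrm{div}(\varphi) + mD = mE$ in the class $m[D]$, effective away from the $B_j$.

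The idea is to compare with the asymptotic order of vanishing $\sigma_j := \inf_{\ell} \frac{1}{\ell}\min\{\nu_{B_j}(F) : F \in |\ell D|\} > 0$. Since every member of $|\ell D|$ vanishes along $B_j$, one would like to conclude that $\nu_{B_j}(E)\geq 0$. The cleanest route I see avoids asymptotics: take a section $\psi$ of $\mathcal{O}_X(mD)$ and form $\varphi^{a}\psi^{b}$ appropriately, or more simply argue by contradiction.

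\textbf{Contradiction argument for the key step.} Suppose $\nu_{B_1}(E) = -c < 0$. For $N \gg 0$, the rational function $\varphi\cdot\psi^{N}$ is a section of $\mathcal{O}_X((1+Nm)D)$ over $U$, with divisor $E + N F$, where $F = \mathrm{div}(\psi)+mD \geq 0$ and $\nu_{B_j}(F) > 0$ by the base-locus hypothesis. Taking $N$ large makes $E + NF$ effective everywhere, so $\varphi\psi^N$ is a global section of $(1+Nm)D$.

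This alone does not return to $D$. The real content must be a minimality argument. Among all global sections of multiples $\ell D$, the multiplicity along $B_j$ is bounded below by $\ell\sigma_j$, and one wants to show that removing $B_j$-components is forced.

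The honest route I would take is the standard fact that $\mathcal{O}_X(D)(U) = H^0(X, \mathcal{O}_X(D + \sum_j n_j B_j))$ for $n_j \gg 0$. The problem then reduces to showing that
\[
H^0\Bigl(X,\mathcal{O}_X\bigl(D+\textstyle\sum_j n_jB_j\bigr)\Bigr)=H^0(X,\mathcal{O}_X(D)),
\]
that is, that each $B_j$ is a fixed component of $|D+\sum n_j B_j|$ with multiplicity at least $n_j$. I would prove this via the $\sigma$-decomposition (Nakayama's Zariski decomposition). Since $B_j \subset \mathbf{B}([D])$ as a divisorial component, $B_j$ appears in the negative part $N_\sigma(D)$. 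Adding multiples of components of $N_\sigma$ does not change the positive part, so $N_\sigma(D+\sum n_j B_j) = N_\sigma(D)+\sum n_jB_j$, and consequently $h^0$ is unchanged.

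The main obstacle is making this last step rigorous. Specifically, one must show that for divisorial components $B$ of the stable base locus, $\sigma_B(D+nB) = \sigma_B(D)+n$. This is Nakayama's lemma that the $B_j$ are linearly independent, exceptional-type divisors; over $\mathbb{Q}$ one must also handle integer rounding, which I expect to do via $\lfloor\cdot\rfloor$ and the fact that $\nu_{B_j}$ of any member of $|D + \sum n_jB_j|$ is at least $n_j$.
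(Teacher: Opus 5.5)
You have the setup, injectivity, and the right object ($\varphi\psi^N$, with divisor $E+NF$), but the route you then commit to has a genuine gap. The $\sigma$-decomposition step rests on the claim that a divisorial component $B_j$ of $\mathbf{B}(D)$ lies in $\mathrm{Supp}\,N_\sigma(D)$, equivalently that your asymptotic order $\sigma_j$ is positive. That claim is false. The condition $B_j\subset\mathbf{B}(D)$ only says that every member of every $|\ell D|$ vanishes along $B_j$; it does not say the vanishing order grows linearly in $\ell$.

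Zariski's example shows this. Blow up $\mathbb{P}^2$ at twelve general points of a smooth cubic, let $C$ be the strict transform of the cubic, and set $L=4H-\sum_i E_i$. Then $L$ is nef and big, and $L|_C$ is a non-torsion degree-zero bundle, so $C\subset\mathrm{Bs}|\ell L|$ for every $\ell$. Yet $N_\sigma(L)=0$ and $\operatorname{ord}_C\|L\|=0$.

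So ``adding components of $N_\sigma$'' says nothing in such cases. The identity $\sigma_B(D+nB)=\sigma_B(D)+n$ that you flag as the main obstacle is not a rounding technicality; it is essentially the content of the lemma itself. The $\sigma$-decomposition is also numerical and needs $\mathbb{Q}$-Cartier divisors, whereas the lemma concerns Weil divisors on a normal variety, with a hypothesis about actual linear systems.

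A minor point: $\mathcal{O}_X(D)(U)$ is a priori only the increasing union of the spaces $H^0(X,\mathcal{O}_X(D+\sum_j n_jB_j))$. It is not equal to one of them for fixed $n_j\gg0$. This is harmless if you prove equality with $H^0(X,\mathcal{O}_X(D))$ for all $n_j\ge 0$.

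The missing idea is already inside your contradiction argument: do not take $N$ large, take it at the threshold. For positive integers $a,b$, the function $\varphi^a\psi^b$ has divisor $aE+bF$. This divisor is linearly equivalent to $(a+bm)D$ and effective away from the $B_j$, with $\nu_{B_j}(aE+bF)=a\,\nu_{B_j}(E)+b\,\nu_{B_j}(F)$ and $\nu_{B_j}(F)\ge 1$.

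Take $b/a$ equal to the maximum of $-\nu_{B_j}(E)/\nu_{B_j}(F)$ over the $j$ with $\nu_{B_j}(E)<0$; this is a positive rational number. Then $aE+bF$ is effective, with coefficient exactly zero along a maximizing $B_{j_0}$. It is therefore an effective member of $|(a+bm)D|$ not containing $B_{j_0}$, contradicting $B_{j_0}\subset\mathbf{B}(D)$.

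There is no need to ``return to $D$'', because the stable base locus hypothesis constrains every multiple of $D$. This is precisely the paper's proof, where the threshold appears as the zero of the concave piecewise-linear function $g(t)$.
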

\begin{proof}
    Let $E_1, \dots, E_n$ denote the prime codimension-one components of $X \setminus U$. Certainly $H^0(X, \mathcal{O}_X(D)) \subset \mathcal{O}_X(D)(U)$, as any rational function $f$ obeying $\mathrm{div}(f) + D \geq 0$ on all prime divisors will do so for all prime divisors on $U$. To show the other inclusion, we will assume by way of contradiction that $\psi \in \mathcal{O}_X(D)(U) \setminus H^0(X, \mathcal{O}_X(D))$ and use $\psi$ to construct a global section of some multiple of $[D]$ which does not contain some $E_j$ in its base locus. By construction, the divisor $D_\psi := \mathrm{div}(\psi) + D$ has a pole, but it can only be on one of the $E_i$.

    Choose $m > 0$ such that $m[D]$ is effective, and let $D_\varphi := \mathrm{div}(\varphi) + mD \geq 0$ (i.e., $\varphi \in H^0(X, \mathcal{O}_X(mD))$). It will suffice to choose $k,\ell \in \mathbb{Z}_+$ such that the divisor 
    \begin{equation}
        A = k(mD_\psi) + \ell(D_\varphi) = \mathrm{div}(\psi^{mk}\varphi^{\ell}) + m(k+\ell)D
    \end{equation}
    with class $m(k+\ell)[D]$ is effective and does not contain some $E_j$. Note that because $A$ is the sum of a multiple of $D_\varphi$ and an effective divisor, it too can only have poles on the $E_i$. 
    
    To this end, define a function $g : [0,1] \to \mathbb{R}$ by\footnote{One can naturally think of this as the smallest degree of vanishing attained by the $\mathbb{R}$-divisor 
    \begin{equation}
        (1 - t)(m\mathrm{div}(\psi) + mD) + t(\mathrm{div}(\varphi) + mD) = (1 - t)m\mathrm{div}(\psi) + t\mathrm{div}(\varphi) + mD
    \end{equation} across all of the $E_i$, as a function of $t$.}
    \begin{equation}
        g(t) = \mathrm{min}_i\Big((1-t)\nu_{E_i}(mD_\psi) + t \, \nu_{E_i}(\mathrm{div}(D_\varphi)\Big).
    \end{equation}
    We note that $g(0) < 0$, because $\psi^m$ has a pole on some $E_i$. Additionally, $g(1) > 0$, as each $E_i$ belongs to the base locus of $mD$, so the degree of vanishing of $D_\varphi$ on each $E_i$ is at least one. Thus, there is a unique rational $s \in (0,1)$ such that $g(s) = 0$. Pick an integer $c$ that clears denominators such that $k = c(1-s), \ell = cs$ are integral. For these choices of $k, \ell$, the divisor $A$ defined above is effective and doesn't contain some $E_j$. Indeed, as noted, for effectiveness we only need to check for poles on the $E_i$, and $\nu_{E_i}(A) \geq \mathrm{min}_i(\nu_{E_i}(A)) = c g(s) = 0$, meaning $A \geq 0$ but also that $\nu_{E_j}(A) = 0$ for some $j$.
\end{proof}
\begin{corollary}
    \label{cor:general_sublattice}
    Let $X$ be a normal projective variety with a Weil divisor $D$ satisfying $[D] \in \mathcal{E}_X$. If $E$ is a prime divisor contained in the stable base locus of $[D]$, then $H^0(X, \mathcal{O}_X(D)) = H^0(X, \mathcal{O}_X(D + E))$.
\end{corollary}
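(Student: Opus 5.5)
The plan is to derive this directly from \cref{lem:lift_to_global}, applied twice, once to $D$ and once to $D+E$, with the same open set $U = X \setminus E$. The point is that $\mathcal{O}_X(D)$ and $\mathcal{O}_X(D+E)$ agree on $U$, since $D$ and $D+E$ restrict to the same divisor there. Concretely, a rational function $\psi$ satisfies $\mathrm{div}(\psi) + D \geq 0$ at every prime divisor meeting $U$ if and only if $\mathrm{div}(\psi)+D+E \geq 0$ at those primes, because $\nu_B(E) = 0$ for $B \neq E$. Hence $\mathcal{O}_X(D)(U) = \mathcal{O}_X(D+E)(U)$ as subspaces of the function field.

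First step: apply \cref{lem:lift_to_global} to $D$ with $U = X\setminus E$. The only prime codimension-one component of $X\setminus U$ is $E$, which lies in the stable base locus of $[D]$ by hypothesis. This gives $H^0(X,\mathcal{O}_X(D)) = \mathcal{O}_X(D)(U)$.

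Second step: apply the lemma to $D+E$ with the same $U$. We need $[D+E] \in \mathcal{E}_X$, which holds because $\mathcal{E}_X$ is a cone and $[E]$ is effective. We also need $E$ to lie in the stable base locus of $[D+E]$; this is the step needing care. Take $m$ with $m[D]$ effective. Every section of $m(D+E)$ restricts on $U$ to a local section of $mD$ (by the agreement above), and the lemma applied to $mD$ says it is a global section of $mD$. Its divisor as a section of $m(D+E)$ is therefore its divisor as a section of $mD$ plus $mE$, which contains $E$. Hence $E$ is in the base locus of $m(D+E)$ for every $m$ with $m[D]$ effective. For the remaining multiples, $m[D+E]$ either has no sections or the same argument applies, since $h^0(m(D+E))>0$ forces $\mathcal{O}_X(mD)(U)\neq 0$, and the lemma then makes $mD$ effective. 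So $E$ is in the stable base locus of $[D+E]$, and the lemma gives $H^0(X,\mathcal{O}_X(D+E)) = \mathcal{O}_X(D+E)(U)$.

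Actually, the second step can be bypassed. The inclusion $H^0(\mathcal{O}_X(D)) \subseteq H^0(\mathcal{O}_X(D+E))$ holds trivially, since $E \geq 0$. For the reverse, $H^0(\mathcal{O}_X(D+E)) \subseteq \mathcal{O}_X(D+E)(U) = \mathcal{O}_X(D)(U) = H^0(\mathcal{O}_X(D))$, using only the first application of the lemma. I would present this shorter chain. The only real point to verify is the equality of local sections on $U$, which is immediate from the definition of $\mathcal{O}_X(D)(U)$ in terms of vanishing orders at primes meeting $U$. Consequently the main obstacle is already absorbed into \cref{lem:lift_to_global} itself.
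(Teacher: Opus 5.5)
Your shorter chain is the paper's proof. The inclusion $H^0(X,\mathcal{O}_X(D))\subseteq H^0(X,\mathcal{O}_X(D+E))$ is immediate, and a section of $D+E$ restricts on $X\setminus E$ to a local section of $D$, which \cref{lem:lift_to_global} lifts to a global section of $D$; your detour checking that $E$ lies in the stable base locus of $[D+E]$ is correct but, as you note, unnecessary.
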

\begin{proof}
    Certainly $H^0(X, \mathcal{O}_X(D)) \subset H^0(X, \mathcal{O}_X(D + E))$ because global sections of $D$ obey a strictly stronger inequality. To see the converse, note that if $\varphi$ is a rational function satisfying $\mathrm{div}(\varphi) + D + E \geq 0$, restricting the inequality to $X \setminus E$ reveals that $\varphi \in \mathcal{O}_X(D)(X \setminus E)$, meaning we can apply \cref{lem:lift_to_global} and conclude that $\varphi \in H^0(X, \mathcal{O}_X(D))$. 
\end{proof}
\begin{corollary}
    Let $X$ be a Calabi--Yau threefold with a non-movable hole $[D]$ whose stable base locus has prime components $E_1, \dots, E_n$. Then $h^0(X, \mathcal{O}_X(D + \sum_i a_i E_i)) = h^0(X, \mathcal{O}_X(D))$ for any $a_i \geq 0$. Moreover, if \cref{conj:nonbig_movable_CY} holds, then this result holds for any hole.
    \label{cor:semigroup}
\end{corollary}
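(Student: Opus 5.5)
The plan is to apply \cref{cor:general_sublattice} repeatedly, adding one prime component of the stable base locus at a time, and to check at each step that the hypotheses are still satisfied. Adding one divisor $E_j$ at a time, we get a chain
\[
D \;\to\; D+E_{j_1} \;\to\; D+E_{j_1}+E_{j_2} \;\to\; \cdots \;\to\; D+\textstyle\sum_i a_iE_i ,
\]
in which each step preserves $H^0$.

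For this chain to work, two things are needed at each intermediate stage $D' = D + \sum_i b_i E_i$ with $0\le b_i\le a_i$.

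First, $[D']\in\mathcal{E}_X$. This is clear: $[D]\in\mathcal{E}_X$, the $E_i$ are effective prime divisors, and the cone is closed under addition.

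Second, every $E_i$ still lies in the stable base locus of $[D']$. This is the main obstacle. Suppose $m[D']$ had a section not vanishing along $E_j$. Write $mD' = mD + \sum_i m b_i E_i$. I would argue that the stable base locus of $D'$ contains that of $D$ as follows. Choose $N$ large enough that $N[D]$ is effective with every effective representative containing each $E_i$ with multiplicity at least one; this holds for suitable multiples because the $E_i$ are in the stable base locus. Then compare divisorial base loci through the asymptotic order of vanishing $\sigma_{E_j}$.

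The key facts are the following. The $E_j$ lie in the stable base locus of $[D]$, so $\sigma_{E_j}(D)>0$, using that $\mathrm{Bs}$ and the negative part agree in codimension one for effective classes. Moreover, the divisorial part of the stable base locus of $D$ is exactly the support of the Nakayama--Zariski negative part $N_\sigma(D)$. Adding components of $N_\sigma(D)$ does not change the positive part: $P_\sigma(D+\sum a_iE_i)=P_\sigma(D)$ and $N_\sigma$ just increases by $\sum a_iE_i$. Hence the support of the negative part, and therefore the divisorial stable base locus, still contains all the $E_i$.

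If I want to avoid the $\sigma$-decomposition, I would first try a direct argument. Any effective $A\in|mD'|$ not containing $E_j$ would yield, after subtracting the fixed part, a contradiction with \cref{lem:lift_to_global}. The alternative is to note that by \cref{cor:general_sublattice} itself, $H^0(kD) = H^0(kD + kE_j)$ for all $k$, so $|k(D+E_j)| = |kD| + kE_j$ as linear systems. Every member of $|k(D+E_j)|$ is therefore a member of $|kD|$ plus $kE_j$, which contains all the $E_i$. This shows directly that the stable base locus of $D+E_j$ contains every $E_i$, and induction finishes the argument.

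Finally, I need to handle the hypotheses of the corollary. For a non-movable hole the stable base locus has codimension-one components, so the $E_i$ exist and the statement is nontrivial; the hole property itself is not used, since the equality of $h^0$ holds for any class in $\mathcal{E}_X$. If \cref{conj:nonbig_movable_CY} holds, then together with \cref{th:big_movable_CY} every hole is non-movable, because a movable class in $\mathcal{E}_X$ would be effective. The first part therefore applies to all holes.
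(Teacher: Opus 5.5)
Your plan of iterating \cref{cor:general_sublattice} one component at a time is the natural reading of the paper's one-line proof. You also correctly notice that each step needs every $E_i$ to stay in the stable base locus of the intermediate class $D'$. Neither of your justifications of that step works as written.

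The ``alternative'' argument is circular. \cref{cor:general_sublattice} applied to $kD$ only gives $H^0(kD)=H^0(kD+E_j)$. To reach $H^0(kD+kE_j)$ you must apply it again to $kD+E_j$, $kD+2E_j,\dots$, and that presupposes $E_j$ lies in the stable base locus of those classes, which is exactly what you are trying to prove. An induction on the number of added copies does not close either, because the stable base locus of $kD+\ell E_j$ involves the multiples $mkD+m\ell E_j$, which carry more copies.

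The $\sigma$-decomposition route relies on the claim that the divisorial part of the stable base locus equals $\mathrm{Supp}\,N_\sigma(D)$. This is false in general. In Zariski's classical example on the blow-up of $\mathbb{P}^2$ at twelve general points of a cubic, a nef and big divisor has the proper transform of the cubic in its stable base locus, yet $N_\sigma=0$. On a Calabi--Yau threefold the identification would need extra input, namely existence of the $[D]$-MMP plus log abundance, which you do not supply.

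The repair is short and removes the need for any iteration; your half-sentence about ``a contradiction with \cref{lem:lift_to_global}'' points to it. Apply \cref{lem:lift_to_global} once, to $D$ itself, with $U=X\setminus(E_1\cup\dots\cup E_n)$. The codimension-one components of $X\setminus U$ are the $E_i$, which lie in the stable base locus of $[D]$, so $\mathcal{O}_X(D)(U)=H^0(X,\mathcal{O}_X(D))$. Now suppose $\mathrm{div}(\varphi)+D+\sum_i a_iE_i\geq 0$. Restricting to $U$ discards $\sum_i a_iE_i$, so $\varphi\in\mathcal{O}_X(D)(U)=H^0(X,\mathcal{O}_X(D))$. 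The reverse inclusion is trivial because $a_i\geq 0$.

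This is just the proof of \cref{cor:general_sublattice} with $E$ replaced by $\sum_i a_iE_i$, which is what the paper's ``follows directly'' amounts to. Applied to $kD$ with coefficients $ka_i$, the same equality then makes your linear-system argument valid, so the $E_i$ do stay in the stable base locus of $D+\sum_i a_iE_i$. That is a consequence, not a prerequisite.

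Your handling of the second part is fine. \cref{conj:nonbig_movable_CY} alone already makes every class in the movable cone effective, so the appeal to \cref{th:big_movable_CY} is redundant.
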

\begin{proof}
    The first statement follows directly from \cref{cor:general_sublattice}. The second statement uses that \cref{conj:nonbig_movable_CY} implies that all holes are non-movable.
\end{proof}
This result entails that non-movable holes must come in non-trivial semigroups with elements of the form $[D] + \sum_i a_i [E_i]$, for non-negative $a_i$, with the semigroup operation being $([D] + \sum_i a_i [E_i]) + ([D] + \sum_i b_i [E_i]) = [D] + \sum_i (a_i + b_i)[E_i]$. In the cases we will study, the $[E_i]$ will be prime toric divisors. The natural ``origin'' $[D]$ of the semigroup is the divisor class $[D]$ such that subtracting off any of the $[E_i]$ yields a class that lacks $[E_i]$ in its stable base locus.

To construct the semigroup associated to a hole, one must understand the stable base locus of that hole. This can be studied directly by computing the global sections of the divisor and its multiples, but it can also be accessed by understanding the birational geometry of the variety, as we will discuss in the next section. For Calabi--Yau threefolds, this birational geometry can then be understood using, e.g., the methods of~\cite{Gendler:2022ztv}.

\subsection{Holes and the Minimal Model Program}

\label{sec:minimal_model}

In this section, we employ a useful modern method in birational geometry --- the minimal model program --- to interpret holes on Calabi--Yau threefolds. In particular, we illustrate that holes on a smooth Calabi--Yau threefold $X$ correspond to nef holes on a singular Calabi--Yau threefold birational to $Y$. For a class of holes, $Y$ has torsion in its class group, precisely due to the presence of the hole. Before exhibiting these results, we first take some time to introduce the minimal model program. A more comprehensive review of the subject can be found, for example, in \cite{KM}

An important problem in algebraic geometry is the classification of varieties up to birational equivalence. Any time one wants to classify equivalence classes of any sort, it is useful to identify some kind of normal form or canonical representative of each class: this is exactly how the birational classification problem was approached. We are actually already equipped to see the basic idea: recall from \cref{th:movable_is_nef_somewhere} that for some varieties, the movable cone decomposes into nef cones, so that any movable divisor class $[D]$ is usually nef for exactly one variety in the birational class (though it can be that $[D]$ falls on the intersection of more than one nef cone, in which case $[D]$ is nef on each associated variety). Performing birational operations to make $[D]$ nef is essentially the process of iteratively blowing down curves contained in $[D]$ --- i.e., curves in the basepoint locus of $[D]$, which have negative intersection pairing with $[D]$ --- and replacing them with curves transversely intersecting $[D]$ (i.e., pairing positively with $[D]$). Importantly, for the divisor classes that are effective but not movable, this same process can be performed: it just requires the contraction of divisors in the basepoint locus in addition to curves.

The idea of the minimal model program is to represent a birational equivalence class by the (almost) unique representative for which the canonical class $[K]$ is nef --- a minimal model. This isn't always possible, for example if $[K]$ isn't within the effective cone. However, in some sense the most common varieties are those for which $[K]$ is not only in the effective cone but is in fact big. These are the varieties of ``general type.''\footnote{For example, a complete intersection hypersurface in $\mathbb{P}^n$ is general type if and only if it is of degree $> n+1$: that is, there are only finitely many degrees yielding hypersurfaces that are not general type.} It's worth noting that compact projective toric varieties, on the other hand, are never of general type.

The minimal model program is important for us because nef divisors are very well-behaved --- for example, they obey vanishing theorems like Kawamata--Viehweg --- and some properties of divisors that we are interested in are preserved under birational maps, meaning we are free to perform the computation on any birational representative we choose. A convenient choice will indeed be to choose a representative where $[D]$ is nef: i.e., to perform the minimal model program on $[D]$. This is known as running the \textit{$[D]$-minimal model program} (i.e., the original minimal model program is the $[K]$-minimal model program). While the original minimal model program is trivial for Calabi--Yau threefolds --- the canonical class is trivial, so all representatives of a birational class are minimal --- the $[D]$-minimal model program for general $[D]$ will prove interesting and useful.

The existence of $[D]$-minimal models for general varieties is a decidedly open problem, due to the difficulty of showing both that the requisite birational maps exist and that the $[D]$-minimal model program converges (i.e., terminates in finitely many steps). Fortunately, strong results have been shown in the settings we care about in this paper. First, we have the following result for Mori dream spaces, of which toric varieties are a special case.
\begin{theorem}[\cite{hu2000mori}, Prop. 1.11]
    If $[D]$ is an effective divisor class on a Mori dream space $X$, then a $[D]$-minimal model $Y$ exists: that is, we have a birational map $f : X \dashrightarrow Y$ such that $[f_* D]$ is nef on $Y$.
\end{theorem}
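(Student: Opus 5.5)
The plan is to use the defining property of a Mori dream space: its Cox ring is finitely generated. Hu--Keel show this is equivalent to the effective cone decomposing into finitely many \emph{Mori chambers}. For an effective class $[D]$ on a Mori dream space $X$, I would first take the Zariski-type decomposition of the stable base locus. By finite generation of the section ring $R(X,D)=\bigoplus_{m\ge0}H^0(X,\mathcal{O}_X(mD))$, there is a positive integer $m$ and a decomposition
\begin{equation}
    |mD| = |M| + F,
\end{equation}
where $F$ is the fixed part, supported on the divisorial components $E_1,\dots,E_n$ of the stable base locus of $[D]$, and $M$ is movable. The class $[M]$ lies in the movable cone.

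The next step is to apply the Mori dream space analogue of \cref{th:movable_is_nef_somewhere}. The movable cone of $X$ is the union of finitely many nef cones $f_i^*\mathrm{Nef}(X_i)$, where $f_i:X\dashrightarrow X_i$ are small $\mathbb{Q}$-factorial modifications, and each $X_i$ is again a Mori dream space. Choosing $X_i$ with $[M]$ nef on it, I would then contract $E_1,\dots,E_n$ via the morphism $g:X_i\to Y=\mathrm{Proj}\,R(X,D)$. This morphism is defined by the semiample class $[M]$; semiampleness holds because nef implies semiample on Mori dream spaces. Set $f=g\circ f_i$.

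It remains to check that the $E_j$ are exactly $f$-exceptional, and that $f_*F=0$ while $f_*M$ is ample on $Y$. Then $[f_*D]=\tfrac1m[f_*(mD)]=\tfrac1m[f_*M]$ is nef, indeed ample, as a $\mathbb{Q}$-Cartier class on $Y$. Here I would use that $Y$ is $\mathbb{Q}$-factorial, via Hu--Keel's description of the contractions as GIT quotients of $\mathrm{Spec}$ of the Cox ring.

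The main obstacle is this last step: showing that the fixed components $E_j$ are precisely the divisors contracted by $g$, so that $f_*D$ and $f_*M$ agree up to $\mathbb{Q}$-linear equivalence. I would handle it by working chamber by chamber in the Hu--Keel GIT picture, where $[D]$ lies in a Mori chamber whose associated rational contraction contracts exactly the stable base divisors. Alternatively, one can simply cite \cite{hu2000mori}, Prop. 1.11, which is exactly this statement.
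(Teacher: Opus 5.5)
The paper gives no argument of its own here. It simply cites \cite{hu2000mori}, Prop.~1.11, so your closing fallback is exactly what the paper does. Your self-contained sketch, however, has a genuine gap: it only works when $[D]$ is big. You take $Y=\mathrm{Proj}\,R(X,D)$, but this variety has dimension equal to the Iitaka dimension $\kappa(X,D)$. So for effective non-big $D$ the morphism $g\colon X_i\to Y$ is a fibration onto a lower-dimensional base, $f=g\circ f_i$ is not birational, and $f_*D$ is not a meaningful divisor pushforward. Relatedly, your conclusion ``nef, indeed ample'' cannot hold in that case. For a birational $f$ whose exceptional divisors lie in the stable base locus, $h^0(X,kD)=h^0(Y,kf_*D)$ (cf.\ \cref{prop:birational_preserve}), so $f_*D$ is big only if $D$ is. This is not an edge case for the paper, since its holes mostly sit on $\partial\mathcal{E}_V$. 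For example, take the ambient toric variety $V_{2,106}$ of \cref{sec:h11=2}, which is a Mori dream space. The effective non-big class $[\hat D_6]=2[\hat D]$ has $h^0(\mathcal{O}(n\hat D_6))=1$ for all $n\ge 0$, so $R(V_{2,106},\hat D_6)\cong\mathbb{C}[x_6]$ and its $\mathrm{Proj}$ is a point. By contrast, the minimal model the paper uses for $[\hat D_6]$, and for the hole $[\hat D]$, is the divisorial contraction to $\mathbb{P}_{11114}/\mathbb{Z}_2$, where the pushforward is nef but numerically trivial.

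The missing idea is that, for non-big $D$, the birational model must come from a closed, full-dimensional chamber of $\mathrm{Eff}(X)$ containing $[D]$. That is, it is the ample model of a nearby \emph{big} class, not of $R(X,D)$ itself. Concretely, Prop.~1.11(2) of \cite{hu2000mori} gives finitely many birational contractions $g_i\colon X\dashrightarrow Y_i$ onto Mori dream spaces such that $\mathrm{Eff}(X)=\bigcup_i C_i$, where $C_i=g_i^*\mathrm{Nef}(Y_i)+\mathbb{R}_{\ge0}\langle E^{(i)}_1,\dots,E^{(i)}_{k_i}\rangle$. Here the $E^{(i)}_j$ are the $g_i$-exceptional prime divisors, and $Y_i$ is the ample model of any class in the interior of $C_i$. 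Since the $C_i$ are closed and cover $\mathrm{Eff}(X)$, any effective $[D]$, big or not, lies in some $C_i$. Writing $[D]=g_i^*[N]+\sum_j a_j[E^{(i)}_j]$ with $N$ nef then gives $[g_{i*}D]=[N]$ nef. Equivalently, run the $D$-MMP, which Hu--Keel show can be carried out and terminates on a Mori dream space. Since $D$ is pseudoeffective, it cannot end in a $D$-negative Mori fibre space, so it stops at a birational model on which $D$ is nef.

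For big $D$ your ample-model argument works, with two corrections. First, you only need, and can only get, that the $E_j$ are \emph{among} the contracted divisors. The ample model may also contract $D$-trivial divisors outside the base locus; for instance, for $D=H$ on $\mathbb{F}_1$ it contracts the $(-1)$-curve. Second, $\mathrm{Proj}\,R(X,D)$ need not be $\mathbb{Q}$-factorial. For a big class on the common wall of the nef cones of two small modifications related by a flip, it is the base of the flipping contraction. You do not need $\mathbb{Q}$-factoriality anyway: $M\sim_{\mathbb{Q}}g^*A$ with $A$ ample Cartier on $Y$, so $g_*M\sim_{\mathbb{Q}}A$ is $\mathbb{Q}$-Cartier.
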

In addition to this, considerable effort has been invested into the construction of \textit{log minimal models} for pairs $(X,D)$ for $X$ a variety and $D$ a $\mathbb{Q}$-divisor ---  a formal $\mathbb{Q}$-linear combination of divisors --- with mild singularities, which in our earlier language are $([K] + [D])$-minimal models for $[K]$ the canonical class of $X$. If $[K] = 0$ --- as is the case for Calabi--Yau varieties --- then a log minimal model for $(X,D)$ is a $[D]$-minimal model. In three dimensions, we have the following result.
\begin{theorem}
    \label{th:cy_log_minimal_model}
    If $[D]$ is a divisor class in the effective cone of a smooth Calabi--Yau threefold $X$, then there exists a $[D]$-minimal model $Y$: that is, we have a birational map $f : X \dashrightarrow Y$ such that $[f_* D]$ is nef on $Y$.
\end{theorem}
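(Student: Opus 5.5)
The strategy is to reduce the statement to the existence of log minimal models for klt pairs in dimension three, using $K_X \sim 0$.

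Since $[D] \in \mathcal{E}_X$, write $[D]$ as a non-negative $\mathbb{R}$-combination of prime divisor classes. Because $\mathcal{E}_X$ is a rational polyhedral-type object at the level of any individual class, this can be taken to be a $\mathbb{Q}$-combination (if $[D]$ is an integral class, some multiple $m[D]$ is effective). So we may pick an effective $\mathbb{Q}$-divisor $\Delta \equiv \epsilon D$ with $\epsilon>0$ small, for instance $\Delta = \tfrac{\epsilon}{m} D'$ with $D' \in |mD|$.

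Next we make the pair $(X,\Delta)$ klt. $X$ is smooth, so it suffices that $\Delta$ have small coefficients: choose $\epsilon$ so that every coefficient of $\Delta$ is $<1$, and take a log resolution to check the discrepancies. Because the log canonical threshold of $D'$ is positive, $(X,\epsilon' D')$ is klt for sufficiently small $\epsilon'>0$, which avoids the need for a detailed discrepancy computation. Then $K_X + \Delta \equiv \epsilon D$ is an effective (pseudoeffective) $\mathbb{Q}$-divisor class.

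We now invoke the three-dimensional log MMP with scaling, whose termination and flip existence are known for klt threefold pairs (Shokurov, Kawamata, Kollár--Mori \cite{KM}, Birkar's treatment of pairs with effective $K_X+\Delta$). When $K_X+\Delta$ is $\mathbb{Q}$-effective, this yields a log minimal model $f: X \dashrightarrow Y$, a composition of divisorial contractions and flips, with $K_Y + f_*\Delta$ nef. Since $K_X \sim 0$ and the steps are $(K_X+\Delta)$-negative, we have $K_Y = f_*K_X \sim 0$, so $f_*\Delta \equiv \epsilon f_*D$ is nef. Dividing by $\epsilon$ shows that $[f_*D]$ is nef on $Y$. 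This is the claim, since nefness is invariant under positive rescaling. Also $Y$ is $\mathbb{Q}$-factorial with klt singularities, so $f_*D$ is $\mathbb{Q}$-Cartier and nefness makes sense.

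The main obstacle is not the geometry but the correct citation and its hypotheses: existence of flips and termination in dimension three for klt pairs with effective log canonical divisor. A further point is that divisorial contractions must remain $(K+\Delta)$-negative steps, so that the canonical class stays trivial on $Y$. There is also the boundary case where $[D]$ lies on $\partial\mathcal{E}_X$ but is not in the effective cone's rational part. In the integral setting of the paper this cannot arise, since some multiple of $[D]$ is effective.
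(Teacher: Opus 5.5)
Your proposal is correct and follows essentially the same route as the paper: rescale so that a multiple of $[D]$ is effective, use that $(X,\epsilon D)$ is klt for small $\epsilon>0$ on the smooth $X$, and, since $K_X \sim 0$, identify a $[D]$-minimal model with a log minimal model of this klt threefold pair, which is known to exist (the paper cites Birkar, you cite the classical three-dimensional log MMP). One small correction: $K_Y \sim 0$ follows directly from $K_Y = f_*K_X$ for any birational contraction, not from the $(K_X+\Delta)$-negativity of the steps.
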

\begin{proof}
    It suffices to let $[D]$ be effective: if $[D]$ were a hole, we can express it as a $\mathbb{Q}$-linear combination of effective divisor classes and clear denominators, yielding an effective divisor, and minimal models are invariant under overall rescaling. For sufficiently small non-negative rational $\epsilon$, $(X, \epsilon D)$ is a \textit{klt pair}\footnote{See \cref{sec:toric_hyper_holes} for a brief elaboration on klt pairs.} (this is a well-known result for effective divisors on smooth varieties: see, e.g., Lemma 4.2 in \cite{mckernan2010mori}). Because $[D]$-minimal models on Calabi--Yau threefolds $X$ coincide with log minimal models for pairs $(X, \epsilon D)$, it suffices to show that log minimal models exist for klt pairs on threefolds, which is true (see, the introduction of \cite{birkar2010existence}). 
\end{proof}

We comment in passing that the theorems we've reviewed here can be readily understood as \textit{finiteness} results. The existence of birational maps needed to construct a $[D]$-minimal model often relies on certain nef divisor classes being \textit{semiample}, which is a finiteness condition (in particular, finite generation of the section ring of the divisor). Indeed, Mori dream spaces are so nice because their entire Cox rings are finitely generated, and the Kawamata--Morrison cone conjecture \cite{AST_1993__218__243_0,kawamataCone} can be understood as the conjecture that Calabi--Yau manifolds are Mori dream space {``}`up to the action of birational automorphism groups.'{''} The termination of a $[D]$-minimal model program in finitely many steps is also immediately a finiteness condition. We mention this because finiteness is also an important principle in quantum gravity~\cite{Vafa:2005ui,Hamada:2021yxy,Grimm:2020cda,Tarazi:2021duw,Kim:2024eoa,Delgado:2024skw,Agmon:2022thq,Grimm:2021vpn,Grimm:2023lrf}, and it would be worthwhile to understand if and/or how these algebraic geometric notions of finiteness connect to finiteness in quantum gravity. Indeed, for example, the role of semiamplitude in quantum gravity was already stressed in \cite{Katz:2020ewz}.

As we've alluded to, the $[D]$-minimal model program is useful to us because it preserves the global sections of $[D]$.
\begin{theorem}
    \label{th:MMP_global_sections}
    If $X$ is a normal projective $\mathbb{Q}$-factorial variety with divisor $D$ such that a $[D]$-minimal model $Y$ exists, then the birational map $f : X \dashrightarrow Y$ obeys $h^0(X, \mathcal{O}_X(D)) = h^0(Y, \mathcal{O}_Y(f_* D))$ and $[f_* D]$ is nef.
\end{theorem}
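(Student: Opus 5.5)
We plan to follow the usual structure of a $[D]$-minimal model program. A $[D]$-minimal model $f : X \dashrightarrow Y$ is reached by a finite sequence of elementary steps. Each step is either a divisorial contraction $X_i \to X_{i+1}$ or a $[D_i]$-flip $X_i \dashrightarrow X_{i+1}$. In both cases the step is $[D_i]$-negative, meaning it contracts an extremal ray $R$ with $D_i \cdot R < 0$, where $D_i$ is the pushforward of $D$ to $X_i$. Nefness of $[f_* D]$ on $Y$ is part of the definition of a $[D]$-minimal model, so the real content is that $h^0$ is preserved. By induction on the number of steps, it suffices to show $h^0(X_i, \mathcal{O}_{X_i}(D_i)) = h^0(X_{i+1}, \mathcal{O}_{X_{i+1}}(D_{i+1}))$ for a single step. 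Throughout we use $\mathbb{Q}$-factoriality, which is preserved by the steps of the program, so that pushforwards and pullbacks of Weil divisors make sense as $\mathbb{Q}$-Cartier divisors.

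\textbf{Flips.} A flip is an isomorphism in codimension one. As already noted in \cref{sec:review-geometry}, small birational maps identify class groups and preserve $H^0$: a rational function $\varphi$ satisfies $\mathrm{div}(\varphi) + D \geq 0$ if and only if the same holds after the flip, because effectivity is checked prime divisor by prime divisor, and these correspond bijectively under the map.

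\textbf{Divisorial contractions.} Let $g : X_i \to X_{i+1}$ contract the prime divisor $E$, and write $D' = g_* D_i$. Pushforward of effective divisors gives an injection $H^0(X_i, \mathcal{O}(D_i)) \hookrightarrow H^0(X_{i+1}, \mathcal{O}(D'))$, since the same rational function $\varphi$ works. The converse is the crux.

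By $\mathbb{Q}$-factoriality we can write $g^* D' = D_i + aE$ for some rational $a$. Negativity lemma (in the form we would prove). Because $D_i$ is negative on the contracted curves while $E$ is also negative on them, the relevant step is to show $a \geq 0$, so that $D_i = g^*D' - aE \leq g^* D'$ up to the non-negative $aE$ term, and hence $g^*D' - D_i$ is effective and $g$-exceptional. Any section $\varphi$ of $D'$ then gives $\mathrm{div}(\varphi) + g^*D' \geq 0$ on $X_i$. What remains is to remove the extra $aE$. Here we would use that $E$ lies in the stable base locus of $[D_i]$, because $E$ is covered by curves $C$ with $D_i \cdot C < 0$, and invoke \cref{cor:general_sublattice} repeatedly (after clearing denominators in $a$, taking the round-down) to get
\begin{equation}
H^0(X_i,\mathcal{O}(D_i)) = H^0(X_i, \mathcal{O}(D_i + \lfloor a \rfloor E)) .
\end{equation}
We also need $H^0(X_i, \mathcal{O}(\lfloor g^*D' \rfloor)) \supseteq g^{-1}$ of the sections of $D'$, which is checked directly away from $E$ and at $E$ via the valuation of $g^*$. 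Alternatively, we can bypass rounding subtleties by working with \cref{lem:lift_to_global} on $U = X_i \setminus E \cong X_{i+1} \setminus g(E)$. Sections of $D'$ restrict to local sections of $D_i$ on $U$, since $g(E)$ has codimension at least two and so $H^0(X_{i+1}, \mathcal{O}(D')) = \mathcal{O}(D')(X_{i+1} \setminus g(E))$ by normality. \cref{lem:lift_to_global} then says these local sections extend to global sections of $D_i$, because $E$ lies in the stable base locus. This second route is cleaner, and it is the one I would commit to.

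\textbf{Main obstacle.} The delicate point is justifying that the contracted divisor $E$ lies in the stable base locus of $[D_i]$. It is covered by $D_i$-negative curves moving in a family that covers $E$. Any effective divisor $D'' \in |mD_i|$ meets such a curve $C$ negatively, so it must contain $C$; since these curves cover $E$, it must contain $E$. A secondary point is handling the case where $[D]$ is only in $\mathcal{E}_X$ (a hole). Here \cref{lem:lift_to_global} still applies, and the case $h^0 = 0$ on both sides comes out of the same equality.
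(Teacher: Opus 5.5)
Your committed argument is correct. Flips are isomorphisms in codimension one. For a $D_i$-negative divisorial contraction $g : X_i \to X_{i+1}$, the exceptional divisor $E$ is swept out by contracted curves $C$ with $D_i\cdot C<0$, so $E$ is a fixed component of every $|mD_i|$. Then \cref{lem:lift_to_global} on $U = X_i\setminus E\cong X_{i+1}\setminus g(E)$, together with normality of $X_{i+1}$, identifies the $H^0$'s.

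This is not the route of the paper's sketch. The paper cites Remark 2.4 and Th. 5.2 of \cite{kaloghiros2016finite}: the MMP steps, hence $f$, are $D$-nonpositive. Concretely, on a common resolution $p : W\to X$, $q : W\to Y$ one has $p^*D = q^*f_*D + F$ with $F\geq 0$ and $q$-exceptional. Then any $\psi$ with $\mathrm{div}\,\psi + f_*D\geq 0$ satisfies $\mathrm{div}\,\psi + D = p_*\big(q^*(\mathrm{div}\,\psi + f_*D) + F\big)\geq 0$.

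What you have done instead is turn the paper's heuristic \cref{prop:birational_preserve} into an actual proof. You factor $f$ into elementary steps, which is needed since that proposition requires a morphism. You also supply the input the paper leaves implicit: each contracted divisor lies in the stable base locus.

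What each route buys:
\begin{itemize}
\item Your route is more elementary, with no negativity lemma and no common resolution. It also exhibits the fixed-component mechanism behind the semigroups of holes.
\item The cited route treats any $D$-nonpositive $f$ at once. It needs no factorization into extremal steps and does not need $[D]\in\mathcal{E}_X$, which \cref{lem:lift_to_global} requires.
\end{itemize}
Some negativity hypothesis is genuinely needed in either version. Under the paper's loose phrasing (any $f$ with $f_*D$ nef), blowing down $\mathrm{Bl}_p\mathbb{P}^2$ with $D = H - E$ takes $h^0 = 2$ to $h^0(\mathcal{O}_{\mathbb{P}^2}(1)) = 3$. You were right to build $D$-negativity of the steps into your setup.

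One correction to the first route you discarded: the sign is backwards. Intersecting $g^*D' = D_i + aE$ with a contracted curve gives $0 = D_i\cdot C + a\,E\cdot C$, with both $D_i\cdot C$ and $E\cdot C$ negative. Hence $a<0$. So it is $D_i - g^*D'$ that is effective and exceptional, as the negativity lemma also gives, not $g^*D' - D_i$.

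With the correct sign there is no extra multiple of $E$ to remove, and no need for \cref{cor:general_sublattice} or rounding. Adding $|a|E\geq 0$ to $\mathrm{div}\,\varphi + g^*D'\geq 0$ gives $\mathrm{div}\,\varphi + D_i\geq 0$ directly. That corrected first route is exactly the $D$-nonpositivity argument the paper cites, and it is shorter than the one you committed to.
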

\begin{proof}[Sketch of proof]
    This is explicitly stated, e.g., in the introduction of \cite{kaloghiros2016finite}, and more precisely follows from the first part of Remark 2.4 alongside the result from Th. 5.2 that the steps of the $[D]$-MMP are $[D]$-nonpositive.
\end{proof}
\begin{corollary}
    \label{th:CY_MMP_holes}
    If $X$ is a smooth Calabi--Yau threefold with hole $D$, then there is a birational map $f : X \dashrightarrow Y$ to a $[D]$-minimal model $Y$ satisfying $h^0(X, \mathcal{O}_X(D)) = h^0(Y, \mathcal{O}_Y(f_* D))$ with $[f_* D]$ nef.
\end{corollary}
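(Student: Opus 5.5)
The corollary follows by chaining \cref{th:cy_log_minimal_model} and \cref{th:MMP_global_sections}. The plan is to produce a $[D]$-minimal model $Y$ from the existence result for smooth Calabi--Yau threefolds, and then use the general preservation of global sections under the $[D]$-MMP.

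\textbf{Step 1: existence of $Y$.} Since $D$ is a hole, by definition $[D] \in \mathcal{E}_X \cap H^2(X,\mathbb{Z})$. So $[D]$ lies in the effective cone of the smooth Calabi--Yau threefold $X$, and \cref{th:cy_log_minimal_model} applies directly. It gives a birational map $f : X \dashrightarrow Y$ with $[f_* D]$ nef on $Y$. Inside the proof of that theorem one replaces $[D]$ by an effective multiple $m[D]$. This is harmless, because the $[D]$-MMP depends only on the ray spanned by $[D]$: a step is $[D]$-negative if and only if it is $m[D]$-negative. Hence $Y$ is also a $[D]$-minimal model, and $[f_*D] = \tfrac1m [f_*(mD)]$ is nef.

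\textbf{Step 2: check the hypotheses of \cref{th:MMP_global_sections}.} A smooth projective Calabi--Yau threefold is normal and $\mathbb{Q}$-factorial, since every Weil divisor is Cartier. A $[D]$-minimal model exists by Step 1. Applying \cref{th:MMP_global_sections} then gives $h^0(X,\mathcal{O}_X(D)) = h^0(Y,\mathcal{O}_Y(f_*D))$ and the nefness of $[f_*D]$. Because $D$ is a hole, this common value is $0$. So $f_*D$ is a nef class on $Y$ with no sections, i.e., a nef hole, as claimed in the summary of this section.

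\textbf{Main obstacle.} The delicate point is that \cref{th:MMP_global_sections} is stated for the divisor $D$ itself, while the MMP was constructed for an effective multiple $mD$, or equivalently for a klt pair $(X,\epsilon mD)$. I would verify that the same sequence of divisorial contractions and flips is a run of the $[D]$-MMP, each step being $[D]$-nonpositive, so that the section-preservation argument cited there (Remark 2.4 and Th.~5.2 of \cite{kaloghiros2016finite}) applies to $D$ and not only to $mD$. The underlying reason it works: for $[D]$-nonpositive birational contractions with $X$ and $Y$ both $\mathbb{Q}$-factorial, $H^0(X,\mathcal{O}_X(D)) \cong H^0(Y,\mathcal{O}_Y(f_*D))$ by pushing forward and pulling back rational functions. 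This is a statement about rational functions satisfying $\mathrm{div}(\varphi)+D\ge 0$. It makes no use of effectivity of $D$, so it holds equally when $D$ is a hole.
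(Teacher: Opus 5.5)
Your proposal is correct and is essentially the paper's proof, which just chains \cref{th:cy_log_minimal_model} (existence of a $[D]$-minimal model, after rescaling to an effective multiple) with \cref{th:MMP_global_sections} (preservation of $h^0$ along the $[D]$-MMP). Your extra remarks make explicit what the paper leaves implicit, and they are sound: invariance under rescaling, $\mathbb{Q}$-factoriality of smooth $X$, and the fact that the $D$-nonpositivity argument for preserving sections never uses effectivity of $D$.
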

\begin{proof}
    This follows from \cref{th:cy_log_minimal_model} together with \cref{th:MMP_global_sections}.
\end{proof}
While we do not present a complete proof of \cref{th:MMP_global_sections} in this work, thanks to the results we introduced in \cref{sec:friends} we can quickly prove a result that we believe captures much of the intuition. In particular, the $[D]$-MMP essentially proceeds by iteratively contracting the prime components of the stable base loci of $[D]$ (see, e.g., Lem. 4.1 in \cite{kaloghiros2016finite} or part 3 of Lemma 3.6.5 in \cite{birkar2010existence}). Thus, the task is roughly to show that these contractions preserve global sections of $[D]$, which we show as follows.
\begin{prop}
    Let $f : X \to Y$ be a birational morphism between normal projective varieties and let $D$ be a divisor with class in the effective cone such that the stable base locus of $[D]$ contains the codimenson-one exceptional locus of $f$. Then $h^0(X, \mathcal{O}_X(D)) = h^0(Y, \mathcal{O}_Y(f_* D))$.
    \label{prop:birational_preserve}
\end{prop}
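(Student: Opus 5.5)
The plan is to reduce the statement to \cref{lem:lift_to_global} by identifying both spaces of global sections with sections of $\mathcal{O}(D)$ over a common open set on which $f$ is an isomorphism. Let $E_1,\dots,E_n$ be the prime exceptional divisors of $f$, i.e.\ the codimension-one components of the exceptional locus $\mathrm{Exc}(f)$. Since $f$ is birational and $Y$ is normal, there is a closed subset $Z\subset Y$ of codimension at least two such that $f$ restricts to an isomorphism $X\setminus f^{-1}(Z) \to Y\setminus Z$. Here $f^{-1}(Z)$ is contained in $\mathrm{Exc}(f)$ together with possibly some codimension $\geq 2$ loci; by Zariski's main theorem, the fibres over the normal locus are connected, and the locus where $f^{-1}$ is not defined has codimension $\geq 2$ in $Y$. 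Set $U = X\setminus f^{-1}(Z)$. Its codimension-one complement consists exactly of (some of) the $E_i$, and these lie in the stable base locus of $[D]$ by hypothesis.

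The argument then runs in three steps. First, apply \cref{lem:lift_to_global} to $U$ to get $H^0(X,\mathcal{O}_X(D)) \cong \mathcal{O}_X(D)(U)$. Second, under the isomorphism $U\cong Y\setminus Z$, the restriction $D|_U$ corresponds to $(f_*D)|_{Y\setminus Z}$, because pushforward simply discards the exceptional components and is the identity on prime divisors meeting $U$. Rational functions on $X$ and $Y$ are identified via $f$, so this gives $\mathcal{O}_X(D)(U) \cong \mathcal{O}_Y(f_*D)(Y\setminus Z)$. Third, since $Y$ is normal and $\mathcal{O}_Y(f_*D)$ is reflexive, sections extend across the codimension $\geq 2$ set $Z$. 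Concretely, the condition $\mathrm{div}(\varphi)+f_*D\geq 0$ only needs checking on prime divisors, and every prime divisor of $Y$ meets $Y\setminus Z$. Hence $\mathcal{O}_Y(f_*D)(Y\setminus Z)=H^0(Y,\mathcal{O}_Y(f_*D))$. Chaining the three isomorphisms gives the equality of $h^0$.

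The main obstacle is the first step: ensuring that the hypotheses of \cref{lem:lift_to_global} are met exactly. The codimension-one part of $X\setminus U$ must consist only of exceptional divisors. The concern is a prime divisor of $X$ mapping onto a codimension-one subset of $Z$; that cannot happen, since $Z$ has codimension at least two and divisors not contracted by $f$ map birationally onto divisors. Any divisor mapping into $Z$ is therefore contracted, hence exceptional, hence in the stable base locus. I would state this carefully, choosing $Z = f(\mathrm{Exc}(f))$, which has codimension $\geq 2$ in $Y$ by normality of $Y$ and Zariski's main theorem. A further point to check is that $D$ is a Weil divisor with $[D]\in\mathcal{E}_X$, as \cref{lem:lift_to_global} requires; this is given in the hypotheses.
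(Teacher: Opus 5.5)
Your proposal is correct and follows the paper's proof essentially verbatim. It uses the same chain $H^0(X,\mathcal{O}_X(D)) \cong \mathcal{O}_X(D)(U) \cong \mathcal{O}_Y(f_*D)(f(U)) \cong H^0(Y,\mathcal{O}_Y(f_*D))$, justified by \cref{lem:lift_to_global}, the isomorphism $f|_U$, and extension of sections across a codimension-$\geq 2$ set on the normal variety $Y$. Your explicit choice $U = X\setminus f^{-1}(Z)$ with $Z = f(\mathrm{Exc}(f))$ usefully pins down a point the paper leaves implicit: $U$ must be the maximal locus on which $f$ is an isomorphism, so that both the hypothesis of the lemma and the codimension-two claim on $Y$ actually hold.
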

\begin{proof}
    Let $U \subset X$ be a dense open such that $f|_U$ is an isomorphism. We then have the following diagram.
    \[\begin{tikzcd}
        {H^0(X, \mathcal{O}_X(D))} && {\mathcal{O}_X(D)(U)} \\
        \\
        {H^0(Y, \mathcal{O}_Y(f_* D))} && {\mathcal{O}_Y(f_* D)(f(U))}
        \arrow["\cong"{description}, tail reversed, from=1-1, to=1-3]
        \arrow["\cong"{description}, tail reversed, from=1-3, to=3-3]
        \arrow["\cong"{description}, tail reversed, from=3-1, to=3-3]
    \end{tikzcd}\]
    The top equivalence follows from \cref{lem:lift_to_global}. The right equivalence follows because $f|_U$ is an isomorphism, so the sheaves are isomorphic. The bottom equivalence follows because $Y \setminus f(U)$ has codimension $\geq 2$, so local sections on $f(U)$ lift to $U$ as a consequence of normality.
\end{proof}
Applying \cref{th:CY_MMP_holes} to holes $[D]$ allow us to pass to a $[D]$-minimal model where $[D]$ remains a hole, but is now additionally nef. This may seem to be in tension with \cref{th:big_movable_CY}: at least, if $[D]$ were big, then it may seem that the pushforward would be obligated to be effective on the $[D]$-minimal model $Y$. However, \cref{th:big_movable_CY} can be evaded by $Y$ not being smooth: indeed, all Calabi--Yau varieties related to a smooth Calabi--Yau threefold $X$ by a non-small birational map must be singular. 
In this way, we see that holes are intimately related to the existence of a singular $Y$ birationally related to the original $X$ featuring non-effective nef divisors $[D]$.\footnote{In fact, something a little stronger is true: $Y$ must not be smoothable by complex structure deformation, as generic deformations cannot increase the number of sections nor change nefness, so nef holes $[D]$ must remain nef and non-holomorphic under deformation. We note that a general, non-nef hole $[D]$ can cease to be a hole under generic deformation, because the effective cone may shrink.}

This perspective provides a natural interpretation for why the prime components $E_i$ of the stable base locus of a hole $[D]$ should generate the associated semigroup of holes. The pushforward $f_*$ to a $[D]$-minimal model $Y$ maps both $[D]$ and $[D] + \sum_i a_i [E_i]$ ($a_i \geq 0$) to $[f_* D]$, because the $[E_i]$ are exceptional divisors of $f$. In particular, $Y$ is a minimal model for any non-negative linear combination of this form, so it preserves the global sections of each of these divisors, so each divisor $[D] + \sum_i a_i [E_i]$ must be a hole. If we were to allow the $a_i$ to be negative, on the other hand, $Y$ may no longer be a minimal model, and correspondingly the $[E_i]$ may no longer belong to the stable base locus, as discussed in \cref{sec:friends}. The ``origin'' of the semigroup is then interpreted as essentially the divisor $[D]$ from which no more factors of $[E_i]$ can be subtracted without changing the minimal model of the resulting divisor.

Moreover, as we alluded to at the end of the previous section, studying holes using birational geometry approach also furnishes a method for computing these semigroup generators. That is, if we don't want to attempt to compute the stable base locus by brute force --- by explicitly computing the global sections of $m[D]$ for $m > 1$ --- it suffices to understand the $[D]$-minimal model, and in particular the divisors contracted along the way. On toric varieties, the prime exceptional divisors are always prime toric divisors, and we will only be studying examples of Calabi--Yau threefolds whose birational geometry is inherited from the ambient variety, such that their prime exceptional divisors are also prime toric.

Additionally, we gain insight into the number of generators these semigroups have: it is the difference in $h^{1,1}$ between the original Calabi--Yau threefold $X$ and any $[D]$-minimal model,\footnote{To be precise, it is the difference in Picard number, or rank of the Picard group of Cartier divisors, as $h^{1,1}$ is more challenging to define for singular varieties, which the $[D]$-minimal model will be for non-movable $[D]$ on a Calabi--Yau variety.} or the number of prime exceptional divisors contracted by the birational map $f$ to the $[D]$-minimal model. We study an example in \cref{sec:h113 interior} where the structure of holes is more complicated than a single semigroup generated by the exceptional divisors of $f$ because there can be more than one nef hole on the $[D]$-minimal model. However, we will empirically find a decomposition into such semigroups is possible in general.

It's worth emphasizing an interesting special case that we will see in our later examples, in which holes on $X$ are the hallmark of torsion in the class group of a blowdown of $X$. We recall that a subgroup $H \subset G$ is saturated if $g^n \in H$ implies $g \in H$.
\begin{prop}
    \label{prop:torsion}
    Let $f : X \rightarrow Y$ be a birational morphism between normal projective varieties with prime exceptional divisors $E_1, \dots, E_n$. Then
    \begin{equation}
        \label{eq:quotient}
        \mathrm{Cl}(Y) \cong \mathrm{Cl}(X) / \bigoplus_i \mathbb{Z} [E_i]
    \end{equation}
    In particular, if the subgroup $H = \bigoplus_i \mathbb{Z} [E_i] \subset \mathrm{Cl}(X)$ is not saturated, then $\mathrm{Cl}(Y)$ has torsion. Moreover, divisor classes $[D] \notin H$ with multiples in $H$ are holes, and the classes $[f_* D]$ are pure torsion.
\end{prop}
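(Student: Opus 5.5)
We prove the isomorphism first, then read off the consequences. The isomorphism \eqref{eq:quotient} comes from the pushforward $f_* : \mathrm{Cl}(X) \to \mathrm{Cl}(Y)$ on Weil divisors, which sends a prime divisor $B$ to $f(B)$ if $B$ is not exceptional and to $0$ if $B = E_i$. Three things need checking.

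\emph{Well-definedness.} Since $f$ is birational, $K(X) = K(Y)$. For a rational function $\varphi$ we have $f_*\mathrm{div}_X(\varphi) = \mathrm{div}_Y(\varphi)$: the valuation of $\varphi$ along a non-exceptional prime $B$ equals its valuation along $f(B)$, because $f$ is an isomorphism at the generic point of $B$. So $f_*$ descends to class groups.

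\emph{Surjectivity.} Every prime divisor $B'$ on $Y$ is the pushforward of its strict transform (the closure of $f^{-1}(B' \cap V)$, where $V$ is the open set over which $f$ is an isomorphism). This uses that $Y \setminus V = f(\mathrm{Exc}(f))$ has codimension $\geq 2$, which holds because $Y$ is normal and $f$ is birational and proper.

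\emph{Kernel.} Clearly $\bigoplus_i \mathbb{Z}[E_i] \subset \ker f_*$. Conversely, suppose $f_* D = \mathrm{div}_Y(\varphi)$. Then $D - \mathrm{div}_X(\varphi)$ pushes forward to zero as a divisor, so all of its components are exceptional, and hence $[D] \in \bigoplus_i \mathbb{Z}[E_i]$. The notation $\bigoplus$ implicitly asserts that the $[E_i]$ are independent. That independence is not needed for the statements about torsion and holes, and I would note it only as the standard negativity-lemma fact. The isomorphism $\mathrm{Cl}(Y) \cong \mathrm{Cl}(X)/H$ follows.

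\emph{Torsion.} If $H$ is not saturated, pick $[D] \notin H$ with $n[D] \in H$. Then $f_*[D]$ is nonzero in $\mathrm{Cl}(X)/H$ but $n f_*[D] = 0$, so it is torsion. This handles the last claim about $[f_* D]$ and the existence of torsion.

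\emph{Holes.} This is the only step that uses more than bookkeeping.
\begin{itemize}
\item First, $[D]$ lies in the effective cone. Write $n[D] = \sum_i c_i [E_i]$. We need some positive multiple of $[D]$ to be effective. I will argue that the $c_i$ can be taken non-negative, or else replace $[D]$ by an appropriate representative. The cleanest route is to restrict, as the preceding text suggests, to the case where a multiple of $[D]$ is a \emph{non-negative} combination of the $E_i$. If the general statement is intended, I would first try the negativity lemma: a nonzero exceptional combination with some negative coefficient cannot be effective up to linear equivalence. Failing that, I would state the proposition under the non-negativity hypothesis, as in the summary bullet.
\item Second, $[D]$ is not effective. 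Suppose $D' \geq 0$ with $[D'] = [D]$. Then $f_* D'$ is an effective divisor on $Y$ whose class is torsion. Take $m > 0$ with $m f_* D' \sim 0$, so $m f_* D' = \mathrm{div}(g)$ with $g$ regular on the normal projective $Y$, hence constant. Thus $f_* D' = 0$, so $D'$ is supported on the $E_i$ and $[D] \in H$, a contradiction.
\end{itemize}

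I expect the main obstacle to be the first bullet, namely placing $[D]$ in $\mathcal{E}_X$ when the multiple in $H$ could have mixed-sign coefficients. The plan is to invoke the non-negativity hypothesis there; everything else is routine.
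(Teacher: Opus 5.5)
Your proof is correct and follows the paper's route. Your direct check that $f_*$ is well defined, surjective, and has kernel generated by the $[E_i]$ is exactly the excision sequence $\bigoplus_i \mathbb{Z}[E_i] \to \mathrm{Cl}(X) \to \mathrm{Cl}(U) \to 0$ combined with $\mathrm{Cl}(U) \cong \mathrm{Cl}(f(U)) \cong \mathrm{Cl}(Y)$ (complement of codimension $\geq 2$) that the paper uses, and your non-effectivity argument is a careful version of the paper's one-line ``it would be contracted.''

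Your worry about membership in $\mathcal{E}_X$ is well founded, since the paper's proof never checks it. However, the non-negativity hypothesis (which the paper's own summary bullet includes) is necessary rather than merely convenient. The same constant-function argument you use already shows the $[E_i]$ are independent in $\mathrm{Cl}(X)$, with no negativity lemma needed. It then follows that any class with a multiple equal to an exceptional combination having a negative coefficient has no effective multiple, so it lies outside $\mathcal{E}_X$ and is not a hole at all.
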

\begin{proof}
    Let $U \subset X$ such that $f|_U$ is an isomorphism. Then the $E_i$ are the prime components of $X \setminus U$, and we have the following standard sequence (see, e.g., Theorem 4.0.20 in \cite{cls}), which directly implies the result follows.
    \[\begin{tikzcd}
    	{\bigoplus_i \mathbb{Z} [E_i]} && {\mathrm{Cl}(X)} && {\mathrm{Cl}(U)} && 0.
    	\arrow[from=1-1, to=1-3]
    	\arrow[from=1-3, to=1-5]
    	\arrow[from=1-5, to=1-7]
    \end{tikzcd}\]
    $U$ is isomorphic to a dense open $f(U) \subset Y$ whose codimension is $\geq 2$, so $\mathrm{Cl}(U) \cong \mathrm{Cl}(Y)$, from which \cref{eq:quotient} follows.

    Let $[D]$ be as described above: $[D]$ is a hole because if it were effective, it would be contracted, by linearity of the pushforward, and thus it would belong to $H$. Instead, their pushforwards $[f_* D]$ are torsion, because if $m[D] \in H$, then $m[f_* D] = 0$.
\end{proof}
In this way, holes can be symptomatic of torsion in the birational geometry of a Calabi--Yau threefold $X$.

We expect such holes to fall on faces of the boundary of the effective cone, as we now explain. For Calabi--Yau threefolds, we expect that the exceptional divisors $[E_i]$ of a birational morphism generate extremal rays of the effective cone $\mathcal{E}_X$. This is argued, for example, in \cite{Alim:2021vhs} (in particular in \S8.1). This is relatively intuitive: it is analogous to how the generators of extremal rays of the Mori cone are the curves contracted by flops (i.e., the curves which shrink near the boundary of the K\"ahler cone). Moreover, we expect the subcones of the effective cone whose divisors $[D]$ have a common $[D]$-minimal model to be generated by a face of the moving cone and the classes of the exceptional divisors of the birational map to that minimal model.\footnote{This is well understood, for example, for Mori dream spaces: see Prop. 1.11.2 in \cite{hu2000mori}.} In this way, the exceptional divisors of a fixed birational map ought to not only generate extremal rays of the effective cone, but their non-negative span should be a face of the effective cone. In particular, the non-movable holes arising from a non-saturated subgroup of exceptional divisor classes $[E_i]$ will form semigroups (in the sense of \cref{cor:semigroup}) on the boundary of $\mathcal{E}_X$. We will corroborate this expectation in many examples in \cref{sec:ksholes}. Even in \cref{sec:h113 interior}, when we find holes on the interior of the effective cone in a geometry where the non-trivial Hilbert basis element is on the boundary of $\mathcal{E}_V$, a carefully characterization of the semigroups reveals that the semigroup associated with the non-trivial Hilbert basis element is strictly contained in the boundary.

We provide examples of $[D]$-minimal models of holes $[D]$ in explicit examples in \cref{sec:ksholes}. Construct $[D]$-minimal models is difficult in general, but we can do this in the examples we study because there, the relevant Calabi--Yau hypersurface birational geometry is inherited from the ambient toric variety: i.e., when $[\hat{D}]$ is contracted on the ambient variety, then $[D]$ is contracted on the hypersurface. The birational geometry of toric varieties is combinatorial, and discussed in, e.g., \cite{cls, MacFadden:2025ssx}, and practical aspects of verifying that divisors are contracted (i.e., shrink) along certain faces of the nef cone is discussed in, e.g., \cite{Jain:2025vfh}. In practice, we construct the birational geometry of a Calabi--Yau threefold and compute the shrinking divisors using the methods of~\cite{Gendler:2022ztv}.

To prepare for our eventual examples from the Kreuzer--Skarke database, we take some time now to consider some simple singular Calabi--Yau threefolds with non-effective nef divisors. These varieties will end up featuring in the birational geometry of examples we will consider in \cref{sec:ksholes}.

\paragraph{A torsional example: $\mathbb{P}_{11114}/\mathbb{Z}_2$.} First, consider the generic anticanonical hypersurface $X$ in $\mathbb{P}_{11114}/\mathbb{Z}_2$, with minimal generators given as the columns of the following matrix.
\begin{equation}
    \begin{pmatrix}
        -5 & 0 & 0 & 1 & 1 \\
        -1 & 0 & 1 & 0 & 0 \\
        -1 & 1 & 0 & 0 & 0 \\
        -2 & 0 & 0 & 2 & 0
    \end{pmatrix}.
\end{equation}
A charge matrix for this toric variety is given by
\begin{equation}
    \label{eq:toy_torsion_charge_matrix}
    \left(\begin{array}{cccccc}
        1 & 1 & 1 & 1 & 4 \\ \hline 0 & 0 & 0 & 1 & 1
    \end{array}\right).
\end{equation}
Here, we have adopted a convention that we will use in a few later examples. In particular, the columns of a charge matrix can be interpreted as the classes of the prime toric divisors: in the case that the class group has torsion, classes of divisors also have a torsional part, which we include in the charge matrix below the horizontal line. It is well-understood for toric varieties that class group torsion (and non-simply connectedness) arise exactly when the minimal generators of the rays of the fan do not generate the $N$ lattice, but rather a finite-index sublattice $N'$ (i.e., the minimal generators do not generate a saturated subgroup). In particular, the torsion factor is $N / N'$ (\cite{cls}, Th. 12.1.10 and Ex. 5.4.10). 

In our case, $N \cong \mathbb{Z}^4$ and $N / N' = \mathbb{Z}_2$ ($(0,0,0,1)$ doesn't belong to $N'$ but its double does). The class group of divisors on the ambient variety can thus be identified with $\mathbb{Z} \times \mathbb{Z}_2$, meaning the final row of \cref{eq:toy_torsion_charge_matrix} should be understood to take values in $\mathbb{Z}_2$. By Cor. 1.9. in~\cite{Batyrev:2005jc}, this class group torsion descends to $X$. It is clear that the pure torsion class $(0,1)$ has no global sections on the ambient toric variety, and using the methods of \cref{sec:direct_comp}, one can show that the non-big pure torsion class on $X$ also has no sections, yielding an example of a hole: in particular, a non-big, non-effective nef divisor. In \cref{sec:h11=2} we will consider a geometry $Y$ in the Kreuzer--Skarke database that is a blowup of $X$, and that $Y$ features holes $[D]$ whose $[D]$-minimal models are $X$ and which, in particular, pushforward to this hole on $X$. In particular, they realize the mechanism of \cref{prop:torsion}.

\paragraph{A non-linearly weighted projective space example: $\mathbb{P}_{2,2,3,4,11}$.} Now, consider the generic anticanonical hypersurface $X$ in $\mathbb{P}_{2,2,3,4,11}$. The minimal generators of the ambient toric fourfold are given as the columns of the following matrix
\begin{equation}
    \begin{pmatrix}
        0 & 1 & -3 & -1 & 1 \\
        0 & 1 & -2 & 1 & 0\\
        0 & 2 & 0 & -1 & 0 \\
        1 & -1 & 0 & 0 & 0
    \end{pmatrix}\,.
\end{equation}
For this toric variety, we see that the hyperplane class, represented as $1$ upon fixing the charge matrix
\begin{equation}
    \begin{pmatrix}
        2 & 2 & 3 & 4 & 11
    \end{pmatrix}\,,
\end{equation}
does not correspond to any monomial in the Cox ring: all of the homogeneous coordinates have degrees that are ``too large.'' Thus, this is a hole on the ambient toric variety: in particular, it is a non-trivial Hilbert basis element. Using the methods of \cref{sec:direct_comp} one can compute that it is also a hole on $X$. In particular, this is a big, non-effective ample divisor on the singular Calabi--Yau variety $X$. 
In \cref{sec:h11=4} we will consider a geometry $Y$ in the Kreuzer--Skarke database that is a blowup of $X$, and $Y$ will feature holes $[D]$ whose $[D]$-minimal models are $X$ and which, in particular, pushforward to this hole on $X$.

The examples of holes that we have carefully inspected in the Kreuzer--Skarke database can all be explained by mechanisms largely analogous to these two. If $D$ is a hole, we find that the $D$-minimal model is either non-simply connected/torsional with the pushforward of $[D]$ being torsion --- as characterized by \cref{prop:torsion} --- or the pushforward of $[D]$ is nef (not necessarily always ample) and non-effective on a singular CY, arising from a nef non-trivial Hilbert basis element of the ambient variety.

\subsection{Holes on Toric Hypersurfaces} \label{sec:toric_hyper_holes}

Thus far, our discussion in this section has been about general Calabi--Yau threefolds. At this point we restrict to the case that our Calabi--Yau threefold $X$ is a smooth hypersurface in a simplicial toric variety $V$. In this setting, the non-trivial Hilbert basis elements (as defined in \cref{sec:review-geometry}) are natural candidates for ``primitive'' holes whose non-holomorphic representatives are capable of the smallest volumes, in the following sense. Any class in the toric effective cone $\mathcal{E}_V$ that is not a member of the Hilbert basis is a non-negative linear combination of such basis elements and thus has strictly larger volume than some basis elements. From a physics perspective, then, holes in the Hilbert basis are especially important to identify. 

In particular, we will show the following result.
\begin{prop}
    If $[D]$ is a non-trivial Hilbert basis element for $[\hat{D}]$ big, then $[D]$ is a hole: $h^0(X, \mathcal{O}_X(D)) = 0$.
    \label{prop:big}
\end{prop}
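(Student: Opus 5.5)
The plan is to use the sufficient criterion from the end of \cref{sec:review-geometry}. By \eqref{eq:CY_global_section}, $h^0(X,\mathcal{O}_X(D))=0$ follows from two facts:
\begin{enumerate}
\item[(i)] $h^0(V,\mathcal{O}_V(\hat{D}))=0$, which forces $\mathrm{coker}\,F=0$;
\item[(ii)] $h^1(V,\mathcal{O}_V(\hat{D}+\hat{K}))=0$, which forces $\mathrm{ker}\,\tilde{F}=0$.
\end{enumerate}
So I would split the proof into these two statements. The first is purely combinatorial. The second is a vanishing theorem, and bigness is exactly the hypothesis needed to make it available.

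For (i), I would argue as follows. Every effective class on $V$ is a $\mathbb{Z}_{\geq 0}$-combination $\sum_i a_i[\hat{D}_i]$ of prime toric divisor classes. Suppose $[\hat{D}]$ were effective. A Hilbert basis element of $\mathcal{E}_V$ cannot be written as a sum of two nonzero lattice points of the cone. Hence the combination would have to consist of a single term with coefficient one, so $[\hat{D}]=[\hat{D}_i]$ for some $i$. This contradicts $[D]$ being a \emph{non-trivial} Hilbert basis element. One small point must be checked: the Hilbert basis lives in the cone $\mathcal{E}_V\subset H^2(X,\mathbb{R})$, and we need $[\hat{D}]\mapsto[D]$ to identify the lattices. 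This holds by favorability, since $H^2(X,\mathbb{Z})$ is inherited from $V$. So we need only the (torsion-free part of the) class group of $V$.

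For (ii), the idea is a Kawamata--Viehweg-type vanishing for \emph{big}, not necessarily nef, divisors on the toric variety $V$. Concretely, $\hat{D}$ is big, so it lies in the interior of $\hat{\mathcal{E}}_V$, and I would write
\begin{equation}
\hat{D}\sim_{\mathbb{Q}} A+\sum_i \epsilon_i\hat{D}_i ,
\end{equation}
with $A$ an ample $\mathbb{Q}$-divisor and $0<\epsilon_i<1$ small rationals. The coefficients $\epsilon_i$ can be taken small because $\hat{D}$ is interior: subtract a small ample class, and the remainder is still in the interior of the simplicial-generated effective cone. Then, since $-\hat{K}=\sum_i\hat{D}_i$,
\begin{equation}
\hat{D}+\hat{K}\sim_{\mathbb{Q}} \hat{K}+A+\Delta ,\qquad \Delta=\sum_i\epsilon_i\hat{D}_i .
\end{equation}
The pair $(V,\Delta)$ is klt, because $V$ is $\mathbb{Q}$-factorial toric and $\Delta$ is a torus-invariant boundary with coefficients in $[0,1)$ (this is presumably the content of \cref{prop:toric_klt}). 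Kawamata--Viehweg vanishing for klt pairs then gives $H^i(V,\mathcal{O}_V(\hat{D}+\hat{K}))=0$ for $i>0$. On toric varieties this is in fact available directly from the combinatorial vanishing theorems in \cite{cls} (§9.3).

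The main obstacle I expect is the rounding issue in (ii). Kawamata--Viehweg applies to the integral Weil divisor $\hat{K}+\lceil A+\Delta\rceil$-type expressions, so I must make sure that the integral class $\hat{D}+\hat{K}$ is genuinely of the form $\hat{K}+\lceil L\rceil$ with $L$ nef and big. Here $V$ is only simplicial, so $\hat{D}$ need not be Cartier. My first attempt is to fix the torus-invariant representative $\hat{D}=\sum_i a_i\hat{D}_i$ (with $a_i$ possibly negative). I would then choose $L\equiv\hat{D}-\sum_i\delta_i\hat{D}_i$ nef and big with $0\le\delta_i<1$, so that $\lceil L\rceil=\hat{D}$ coefficientwise.

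If this fails, the fallback is to compute $H^1$ combinatorially. For each $m\in M$, the graded piece $H^1(V,\mathcal{O}_V(\hat{D}+\hat{K}))_m$ is the reduced cohomology of the subcomplex of $\Sigma$ spanned by rays with $\langle m,u_i\rangle<-a_i+1$. The task would then be to show that bigness of $\hat{D}$ makes these subcomplexes acyclic, which is the toric shadow of the same vanishing. With (i) and (ii) in hand, \eqref{eq:coker_ker} gives $h^0(X,\mathcal{O}_X(D))=0$.
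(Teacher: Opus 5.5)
Your step (i) is fine, and it is how the paper disposes of $\mathrm{coker}\,F$. The gap is in step (ii): the non-trivial Hilbert basis hypothesis never enters it, and bigness alone does not give $h^1(V,\mathcal{O}_V(\hat D+\hat K))=0$. Your decomposition $\hat D\sim_{\mathbb{Q}}A+\sum_i\epsilon_i\hat D_i$ has the roles reversed. Subtracting a small ample class from $[\hat D]$ makes the \emph{ample} part small. The remainder stays close to $[\hat D]$, and its expansion in the $[\hat D_i]$ has coefficients comparable to those of $[\hat D]$. The $\epsilon_i$ could be made arbitrarily small only if $[\hat D]$ were a limit of ample classes, i.e.\ nef, and the big non-nef case is the one that needs an argument.

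The vanishing you want is in fact false for general big divisors. Take the toric surface $\mathrm{Bl}_p\mathbb{P}^2$ with hyperplane class $H$ and exceptional curve $E$. The class $H+E=(H-E)+2E$ is big. But $L=(H+E)+K=-2H+2E$ has $h^0(L)=h^2(L)=0$ and $\chi(L)=1+\tfrac12 L\cdot(L-K)=-1$, so $h^1(L)=1$. For the same reason your combinatorial fallback cannot succeed from bigness alone. The rounding issue you flag is not the real obstacle.

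The missing ingredient is \cref{lem:fraction_coefs_nontrivial_hilb}. Any expression $[\hat D]=\sum_\rho a_\rho[\hat D_\rho]$ with $a_\rho\ge 0$ has all $a_\rho<1$, since otherwise $[\hat D]-[\hat D_\rho]$ would be a nonzero lattice point of the cone. With this lemma your route goes through:
\begin{itemize}
\item Bigness gives $[\hat D]=[A]+[E]$ with $A$ an ample $\mathbb{Q}$-class and $E=\sum_\rho e_\rho\hat D_\rho$, $e_\rho\ge 0$.
\item Since $[A]$ is itself a non-negative combination of the $[\hat D_\rho]$, the lemma forces every $e_\rho<1$.
\item Then $(V,E)$ is klt by \cref{prop:toric_klt}, and $(\hat D+\hat K)-(\hat K+E)\equiv A$ is ample.
\item \cref{th:klt_kv} now gives the vanishing. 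Equivalently, for an integral torus-invariant representative one has $\lceil \hat D-E\rceil=\hat D$, so the round-up form of toric Kawamata--Viehweg also applies and no rounding problem arises.
\end{itemize}

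The paper uses the same lemma but arranges the argument differently. It handles the movable case separately via \cref{prop:toric_big_movable}. In the non-movable case it splits $\hat D=\hat P+\hat N$ using the Newton polytope, with $\hat P$ big and movable and both pieces having coefficients in $[0,1)$. It then flips to a $V'$ on which $\hat P$ is nef and applies klt Kawamata--Viehweg to $(V',\hat N)$. The ample-plus-boundary version avoids both the flips and the case split. In either form, it is the fractional-coefficient property, not bigness, that makes the boundary klt.
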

We note that it is the bigness of $[\hat{D}]$ that is important, but this is actually equivalent to bigness of $[D]$ because $\mathcal{E}_V \subset \mathcal{E}_X$.
We recall from \cref{eq:coker_ker} that $h^0(X,\mathcal{O}_X(D))$ receives contributions from $\mathrm{coker} \, F$ and $\mathrm{ker} \, \Tilde{F}$. There are no contributions from the former for non-trivial Hilbert basis elements, so for such divisors we will often exploit that it suffices (though is not necessary) to have $h^1(X,\mathcal{O}_X(D+K)) = 0$ for $\mathrm{ker} \, \Tilde{F} = 0$ (as was also mentioned in \cref{sec:review-geometry}). 

As a warm-up, we have the following result.
\begin{prop}
    If $[D]$ is a non-trivial Hilbert basis element and a hole, then $[D]$ cannot descend from a big and movable divisor on $V$.
    \label{prop:toric_big_movable}
\end{prop}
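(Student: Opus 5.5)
The plan is to argue by contradiction, using the toric analogue of the proof of \cref{th:big_movable_CY} on the ambient variety $V$. We suppose $[D]$ is a non-trivial Hilbert basis element and a hole, and that $[\hat{D}]$ is big and movable on $V$. By \cref{th:movable_is_nef_somewhere} (toric case), there is a simplicial toric variety $V'$ related to $V$ by small birational maps on which $[\hat{D}]$ is nef. Small maps preserve the class group and global sections, so $h^0(V,\mathcal{O}_V(\hat{D})) = h^0(V',\mathcal{O}_{V'}(\hat{D}))$. A nef divisor on a complete toric variety is basepoint free, and in particular effective. Concretely, the Newton polytope $\Delta_{\hat{D}}$ of \cref{eq:newton_poly} is a nonempty lattice polytope, since its vertices are the lattice points $m_\sigma$ cut out by the Cartier data on maximal cones. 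This gives $h^0(V,\mathcal{O}_V(\hat{D}))\ge 1$.

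Next, we transfer this to $X$. From the Koszul sequence and \cref{eq:CY_global_section}, the $\mathrm{coker}\,F$ summand is nonzero whenever $h^0(V,\mathcal{O}_V(\hat{D})) > h^0(V,\mathcal{O}_V(\hat{D}+\hat{K}))$. This strict inequality holds as in the footnote of \cref{sec:review-geometry}: multiplication by any of the several independent anticanonical sections embeds $H^0(\hat{D}+\hat{K})$ into $H^0(\hat{D})$ with a nonzero cokernel, or one simply notes that a toric monomial section not divisible by $G$ restricts nontrivially. Hence $h^0(X,\mathcal{O}_X(D))\ge 1$, contradicting the assumption that $[D]$ is a hole.

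An alternative route, if one worries that $[D]$'s lift $[\hat{D}]$ is only $\mathbb{Q}$-Cartier, is to work directly with the Weil divisor. On a simplicial toric variety, nefness of a $\mathbb{Q}$-Cartier Weil divisor still gives a nonempty real polytope $\{m\in M_\mathbb{R} : \langle m,u_i\rangle + a_i\ge 0\}$, but its vertices may be non-integral. So the main obstacle is guaranteeing a lattice point in $\Delta_{\hat{D}}$ when $\hat{D}$ is only $\mathbb{Q}$-Cartier, which is exactly how the $\mathbb{P}_{2,2,3,4,11}$ example evades effectiveness. I would handle this with bigness together with a toric vanishing result. For $[\hat{D}]$ big and nef on $V'$, Demazure/Kawamata--Viehweg vanishing for toric $\mathbb{Q}$-Cartier Weil divisors (cf.\ the toric vanishing theorems in \cite{cls}, Ch.~9) gives $h^i(V',\mathcal{O}(\hat{D}))=0$ for $i>0$. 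Then $h^0$ equals an Euler characteristic, which is the Ehrhart-type count of lattice points of $\Delta_{\hat{D}}$.

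If this lattice-point guarantee fails in general, I would fall back on the following. Here the non-triviality of the Hilbert basis element enters: big movable classes that are nef on a smooth toric resolution in the flop class are Cartier, hence effective. Thus a non-trivial Hilbert basis hole would force $\hat{D}$ to be non-Cartier on every $V'$, and I would aim to rule this out using that the Kreuzer--Skarke triangulations are fine, so that the $V'$ in the movable cone remain sufficiently nice along $X$. I expect this Cartier/lattice-point step to be the crux of the proof.
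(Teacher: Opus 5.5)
\textbf{There is a genuine gap, and your argument also aims at a different target from the paper's proof.}

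\textbf{The step that fails.} Your argument rests on the claim that a class which is nef on the small modification $V'$ is effective. This is false for $\mathbb{Q}$-Cartier Weil divisors, and none of your repairs fixes it.
\begin{itemize}
\item The vertices $m_\sigma$ of $\Delta_{\hat D}$ are lattice points exactly on the maximal cones where $\hat D$ is Cartier. To get a section you would need $\hat D$ to be Cartier on at least one maximal cone of $V'$. Nothing in the hypotheses guarantees this: these fans can have non-unimodular cones, and fineness of the triangulation does not prevent that in dimension four.
\item Vanishing of $h^i(V',\mathcal{O}_{V'}(\hat D))$ for $i>0$ only gives $h^0=\chi$. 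Since $h^0$ is the lattice-point count of $\Delta_{\hat D}$ regardless, no positivity follows. For example, $\mathcal{O}(1)$ on $\mathbb{P}_{2,2,3,4,11}$ is ample, has vanishing higher cohomology, and has $\chi=h^0=0$.
\item The fallback to a smooth model is not available either. Resolving a non-unimodular cone adds rays, which is a divisorial rather than small modification.
\end{itemize}

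\textbf{The mismatch with the paper.} Even granting that step, you would be proving something different from the paper. A non-trivial Hilbert basis element is never effective on $V$: any $\mathbb{Z}_{\ge 0}$-expression $[\hat D]=\sum_\rho a_\rho[\hat D_\rho]$ must have some $a_\rho\ge 1$, contradicting \cref{lem:fraction_coefs_nontrivial_hilb}. So your first paragraph would already reach a contradiction on $V$ from movability alone. Bigness, the hole hypothesis and the transfer to $X$ never enter. In effect you would be showing that non-trivial Hilbert basis elements are never movable, which the paper does not claim; its proof of \cref{prop:big} explicitly allows the movable case.

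The paper's proof establishes the opposite implication: a non-trivial Hilbert basis element descending from a big movable class \emph{is} a hole. So the printed hypothesis ``and a hole'' must be read as ``and not a hole'' (i.e.\ autochthonous), as the sentence following the proof confirms. That argument needs no lattice points:
\begin{itemize}
\item $h^0(V,\mathcal{O}_V(\hat D))=0$ holds by construction, which kills $\mathrm{coker}\,F$.
\item After flipping to $V'$ where $\hat D$ is nef and big, toric Kawamata--Viehweg vanishing (Th.~9.3.10 in \cite{cls}) gives $h^1(V',\mathcal{O}_{V'}(\hat D+\hat K))=0$. This kills $\ker\tilde F$ in \cref{eq:coker_ker}, so $h^0(X,\mathcal{O}_X(D))=0$.
\end{itemize}
The vanishing theorem you reached for is the right tool. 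It has to be applied to $\hat D+\hat K$ to control $\ker\tilde F$, not to $\hat D$ to produce sections.
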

\begin{proof}
    By construction $h^0(V, \mathcal{O}_V(D)) = 0$, and we have that $h^1(V, \mathcal{O}_V(D+K)) = 0$ as a direct corollary of the toric Kawamata--Viehweg vanishing theorem (e.g., Th. 9.3.10 in~\cite{cls}), possibly replacing $V$ by one of its flips to render $[D]$ nef (using \cref{th:movable_is_nef_somewhere}). Together these imply that \cref{eq:coker_ker} is trivial.
\end{proof}
This naively implies that for a non-trivial Hilbert basis element to induce an autochthonous divisor, it must either be non-big on $V$ --- and thus fall on the boundary of the toric effective cone --- or non-movable. 

We can actually prove the stronger statement, \cref{prop:big}, because the Kawamata--Viehweg vanishing theorem admits a generalization involving so-called \textit{klt (Kawamata log terminal) pairs}, which are pairs $(V, \Delta)$ for $V$ a variety and $\Delta$ a $\mathbb{Q}$-divisor. Klt pairs play a crucial role in the minimal model program and algebraic geometry more broadly, but for our purposes, it suffices to think of them as fractional positive linear combination of divisors which behave nicely together (i.e., they do not intersect in a pathological way). Kawamata--Viehweg vanishing then holds not only for big nef divisors but for divisors that are a sum of $\Delta$ and a big nef divisor, as we will state precisely in \cref{th:klt_kv}.

Let's demonstrate how we can apply this. First, note that non-trivial Hilbert basis elements are ``fractional,'' in a sense we now formalize. 
\begin{lem}
    \label{lem:fraction_coefs_nontrivial_hilb}
    There is no expression of a non-trivial Hilbert basis element $[D]$ as a non-negative linear combination of classes of prime toric divisors $[D] = \sum_\rho a_\rho [D_\rho]$ such that any $a_\rho \geq 1$. 
\end{lem}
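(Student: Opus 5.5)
Suppose for contradiction that $[D]=\sum_\rho a_\rho [D_\rho]$ with all $a_\rho\ge 0$ and some $a_{\rho_0}\ge 1$. The idea is to peel off one copy of $[D_{\rho_0}]$ and show that this contradicts membership of $[D]$ in the Hilbert basis. Set
\begin{equation}
    [D'] := [D]-[D_{\rho_0}] = \sum_{\rho} a'_\rho [D_\rho]\,,
\end{equation}
with $a'_\rho=a_\rho$ for $\rho\neq\rho_0$ and $a'_{\rho_0}=a_{\rho_0}-1\ge 0$. Then $[D']$ is a non-negative real combination of prime toric divisor classes, so $[D']\in\mathcal{E}_V$. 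It is also integral, being a difference of integral classes.

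The point is that $[D']$ is an integral point of $\mathcal{E}_V$, so by the defining property of the Hilbert basis it decomposes as $[D']=\sum_j c_j \mathbf{v}_j$ with $c_j\in\mathbb{Z}_{\ge 0}$ and $\mathbf{v}_j\in\mathcal{H}(\mathcal{E}_V)$. Likewise $[D_{\rho_0}]$ is an integral point of the cone, so it is a non-negative integral combination of Hilbert basis elements. Substituting both expansions gives
\begin{equation}
    [D]=[D_{\rho_0}]+[D'],
\end{equation}
which is a sum of two nonzero integral points of $\mathcal{E}_V$. Here $[D_{\rho_0}]\neq 0$ because the cone is pointed (it is contained in the pseudoeffective cone of a projective variety).

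I will then invoke the standard fact for pointed rational cones that the Hilbert basis consists exactly of the irreducible integral points. These are the nonzero lattice points that cannot be written as a sum of two nonzero lattice points of the cone. Minimality in the paper's definition forces this: if a basis element $\mathbf{v}$ were reducible, both summands would be expressible via basis elements. Since the summands are strictly "smaller" (e.g., with respect to a linear functional positive on the cone), neither expansion can use $\mathbf{v}$ itself, so $\mathbf{v}$ could be dropped from the basis, contradicting minimality.

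There are two cases.
\begin{itemize}
    \item If $[D']=0$, then $[D]=[D_{\rho_0}]$ is a prime toric divisor class. This contradicts non-triviality.
    \item If $[D']\neq 0$, then $[D]$ is reducible, contradicting its membership in $\mathcal{H}(\mathcal{E}_V)$.
\end{itemize}

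The main subtlety is the irreducibility characterization. I would handle it carefully by choosing a linear functional $\ell$ strictly positive on $\mathcal{E}_V\setminus\{0\}$, which exists because the cone is pointed, and arguing by $\ell$-value. A minor point is that several prime toric divisors might share the class $[D]$, or the class $[D_{\rho_0}]$ might coincide with other classes; the first case above already covers this, since "non-trivial" means $[D]$ is not the class of any prime toric divisor.
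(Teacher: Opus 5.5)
Your proof is correct and follows the same route as the paper, which writes $[D] = ([D]-[D_\rho]) + [D_\rho]$ and concludes that $[D]$ cannot belong to the Hilbert basis. The extra steps you add fill in details the paper leaves implicit: you treat the case $[D]-[D_{\rho_0}]=0$ separately, and you use a positive linear functional on the pointed cone to justify that Hilbert basis elements are irreducible.
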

\begin{proof}
    Let a fixed $a_\rho$ be greater than or equal to $1$. Then $[D] = ([D] - [D_\rho]) + [D_\rho]$, so $[D]$ couldn't have belonged to the Hilbert basis.
\end{proof}
With this in hand, we can show that big non-trivial Hilbert basis elements are only ``fractionally'' outside of the toric moving cone, and thus fall in the purview of the following klt Kawamata--Viehweg vanishing theorem. 
\begin{theorem}[Th. 1.1 in~\cite{tanakakawamata}]
    Let $(V,\hat{\Delta})$ be a klt pair and $\hat{L}$ a $\mathbb{Q}$-Cartier Weil divisor on $V$. If $\hat{L} - (\hat{K}_V + \hat{\Delta})$ is big and nef, then $h^i(V,\mathcal{O}_V(\hat{L})) = 0$ for $i > 0$.
    \label{th:klt_kv}
\end{theorem}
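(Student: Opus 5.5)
Although \cref{th:klt_kv} is quoted from the literature, it can be proved by reducing to the smooth Kawamata--Viehweg theorem with a simple normal crossing (snc) fractional boundary. We work over $\mathbb{C}$, which is all we need for toric fourfolds. The plan is to pass to a log resolution, build an integral divisor there that is ``canonical plus big nef plus fractional snc'', apply vanishing upstairs (both absolutely and relatively), and then descend to $V$ with the Leray spectral sequence. The klt hypothesis will be used exactly once, to ensure that the pushforward of the upstairs sheaf is $\mathcal{O}_V(\hat{L})$.

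First, choose a log resolution $\mu : W \to V$ such that the exceptional locus, together with the strict transforms of the components of $\hat{\Delta}$ and of $\hat{L}$, has snc support. Define $\Delta_W$ by $K_W + \Delta_W = \mu^*(\hat{K}_V + \hat{\Delta})$. The klt condition says that every coefficient of $\Delta_W$ is $<1$, and the strict-transform part of $\Delta_W$ is $\mu^{-1}_*\hat{\Delta}$, whose coefficients lie in $[0,1)$. Since $\hat{L}$ is $\mathbb{Q}$-Cartier, $\mu^*\hat{L}$ is a well-defined $\mathbb{Q}$-divisor. Set
\begin{equation}
    L_W := \lceil \mu^*\hat{L} - \Delta_W \rceil, \qquad B := L_W - (\mu^*\hat{L} - \Delta_W).
\end{equation}
Every coefficient of $B$ lies in $[0,1)$ and its support is snc. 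Moreover
\begin{equation}
    L_W - (K_W + B) = \mu^*\big(\hat{L} - (\hat{K}_V + \hat{\Delta})\big),
\end{equation}
which is nef, and big because bigness is preserved under birational pullback. The smooth Kawamata--Viehweg theorem for $\mathbb{Q}$-divisors with snc fractional part therefore gives $H^i(W, \mathcal{O}_W(L_W)) = 0$ for $i>0$. Its relative version over $V$ (here $\mu^*(\cdot)$ is $\mu$-nef and $\mu$-big) gives $R^i\mu_*\mathcal{O}_W(L_W) = 0$ for $i > 0$. By the Leray spectral sequence, $H^i(V, \mu_*\mathcal{O}_W(L_W)) = H^i(W,\mathcal{O}_W(L_W)) = 0$.

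The remaining step, which I expect to be the main obstacle, is the identity $\mu_*\mathcal{O}_W(L_W) = \mathcal{O}_V(\hat{L})$ when $\hat{L}$ is only Weil. We check it coefficient by coefficient.
\begin{itemize}
    \item On a non-exceptional prime divisor with $\hat{L}$-coefficient $l \in \mathbb{Z}$ and $\hat{\Delta}$-coefficient $\delta \in [0,1)$, the coefficient of $L_W$ is $\lceil l - \delta\rceil = l$.
    \item On an exceptional prime $E$, the coefficient of $L_W$ is $\lceil \nu_E(\mu^*\hat{L}) - \nu_E(\Delta_W)\rceil$. Since $\nu_E(\Delta_W) < 1$, this is $\geq \lceil \nu_E(\mu^*\hat{L}) - 1 + \epsilon \rceil \geq \lfloor \nu_E(\mu^*\hat{L}) \rfloor$ for small $\epsilon>0$, so at the worst it is the round-down.
\end{itemize}
A rational function $\varphi$ is a section of $\mu_*\mathcal{O}_W(L_W)$ over an open $U \subset V$ iff $\mathrm{div}(\varphi) + L_W \geq 0$ on $\mu^{-1}(U)$. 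Away from the exceptional locus this is exactly the condition for $\varphi \in \mathcal{O}_V(\hat{L})(U)$, and $V$ is normal, so $\mu_*\mathcal{O}_W(L_W) \subseteq \mathcal{O}_V(\hat{L})$. Conversely, if $\mathrm{div}(\varphi) + \hat{L} \geq 0$ on $U$, then $\mu^*(\mathrm{div}\,\varphi + \hat{L}) \geq 0$ as a $\mathbb{Q}$-divisor. Since $\mathrm{div}(\mu^*\varphi)$ is integral, $\mathrm{div}(\mu^*\varphi) + \lfloor \mu^*\hat{L}\rfloor \geq 0$, and hence $\mathrm{div}(\mu^*\varphi) + L_W \geq 0$ by the coefficient bounds above. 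This gives equality, and the vanishing $h^i(V,\mathcal{O}_V(\hat{L}))=0$ for $i>0$ follows.

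If a self-contained argument were preferred in our setting, note that $V$ is a simplicial (hence $\mathbb{Q}$-factorial) toric variety, so a toric log resolution exists and every step above is combinatorial. In particular, the relative vanishing can be replaced by the toric vanishing results cited around Th. 9.3.10 of~\cite{cls}.
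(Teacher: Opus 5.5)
Your proof is correct, but it takes a genuinely different route from the paper. The paper gives no argument of its own: it simply specializes the relative toric theorem of \cite{tanakakawamata} to the base $S=\mathrm{Spec}(\mathbb{C})$.

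You instead run the standard characteristic-zero reduction (cf.\ \cite{KM}). You pass to a log resolution and set $L_W=\lceil\mu^*\hat{L}-\Delta_W\rceil$, with an snc fractional boundary $B$ satisfying $L_W-(K_W+B)=\mu^*\bigl(\hat{L}-(\hat{K}_V+\hat{\Delta})\bigr)$. Then you apply absolute and relative Kawamata--Viehweg on $W$ and descend by Leray.

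The one delicate step, $\mu_*\mathcal{O}_W(L_W)=\mathcal{O}_V(\hat{L})$, is handled correctly:
\begin{itemize}
\item on strict transforms, $0\le\delta<1$ gives coefficient exactly $l$;
\item on exceptional divisors, $\nu_E(\Delta_W)<1$ gives coefficient at least $\lfloor\nu_E(\mu^*\hat{L})\rfloor$;
\item integrality of $\mathrm{div}(\mu^*\varphi)$ gives one inclusion;
\item normality of $V$ together with $\mathrm{codim}\,\mu(\mathrm{Exc}(\mu))\ge 2$ gives the other.
\end{itemize}

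Your route is self-contained modulo the smooth snc Kawamata--Viehweg theorem and its relative version. It works for arbitrary projective klt pairs over $\mathbb{C}$, toric or not, and it shows exactly where the klt hypothesis (and effectivity of $\hat{\Delta}$) enters. The citation is shorter and supplies a relative toric statement over a general base, which is more than the paper needs.

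Two minor points. First, you implicitly need $V$ projective (or at least complete) for the absolute vanishing on $W$; this holds in the paper's setting. Second, your closing remark that the relative vanishing can be replaced by the toric result near Th.~9.3.10 of \cite{cls} is imprecise, since that is an absolute vanishing statement on a complete toric variety rather than a relative one. Nothing in your main argument depends on that remark.
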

\begin{proof}
    This is exactly Th. 1.1 in~\cite{tanakakawamata} upon choosing $S$ in their statement to be $\mathrm{Spec}(\mathbb{C})$, a point.
\end{proof}
Fortunately, there is a simple sufficient condition for a $\mathbb{Q}$-divisor $\hat{\Delta}$ to induce a klt pair $(V,\hat{\Delta})$ for $V$ a toric variety.
\begin{prop}[Prop. 11.2.14 in~\cite{cls}]
    If $\hat{\Delta} = \sum_\rho a_\rho \hat{D}_\rho$ on a normal, $\mathbb{Q}$-factorial toric variety $V$ such that $0 \leq a_\rho < 1$, then $(V,\hat{\Delta})$ is a klt pair.
    \label{prop:toric_klt}
\end{prop}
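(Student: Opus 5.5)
The plan is to verify the klt condition directly from its definition via a toric log resolution. Discrepancies of toric valuations are computed by piecewise-linear support functions, so the whole check reduces to a positivity statement about linear functionals on the cones of the fan.

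First, write $\hat K_V = -\sum_\rho \hat D_\rho$, so that
\begin{equation}
\hat K_V + \hat\Delta = \sum_\rho (a_\rho - 1)\hat D_\rho .
\end{equation}
Since $V$ is $\mathbb{Q}$-factorial, its fan $\Sigma$ is simplicial and every torus-invariant Weil divisor is $\mathbb{Q}$-Cartier; in particular $\hat K_V+\hat\Delta$ is $\mathbb{Q}$-Cartier, as the definition of a klt pair requires. Concretely, for each maximal cone $\sigma\in\Sigma$ the rays $u_\rho$, $\rho\in\sigma(1)$, are linearly independent. Hence there is a unique $m_\sigma\in M_\mathbb{Q}$ with $\langle m_\sigma,u_\rho\rangle = 1-a_\rho$ for all $\rho\in\sigma(1)$. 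These $m_\sigma$ assemble into the $\mathbb{Q}$-Cartier data $\{(U_\sigma,\chi^{m_\sigma})\}$ of $-(\hat K_V+\hat\Delta)$, equivalently a support function $\varphi$ that is linear on each cone. The hypothesis $0\le a_\rho$ ensures $\hat\Delta$ is effective, and $a_\rho<1$ makes $\varphi(u_\rho)=1-a_\rho>0$ on every ray of $\Sigma$.

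Next, I would choose a toric resolution $f:V'\to V$ given by a smooth refinement $\Sigma'$ of $\Sigma$ (Th. 11.1.9 in~\cite{cls}). Since $V'$ is smooth and all prime toric divisors on $V'$ form a simple normal crossings divisor, $f$ is a log resolution of $(V,\hat\Delta)$. For any new ray $v\in\Sigma'(1)$ lying in a cone $\sigma$ of $\Sigma$, the standard toric pullback formula (pull back Cartier data via $m_\sigma$) together with $\hat K_{V'}=-\sum_{\rho'}\hat D_{\rho'}$ gives
\begin{equation}
f^*(\hat K_V+\hat\Delta) = \hat K_{V'} + \sum_{\rho\in\Sigma(1)} a_\rho \hat D'_\rho + \sum_{v\ \text{new}} \bigl(1-\langle m_\sigma, v\rangle\bigr)\hat D_v .
\end{equation}
Writing $v=\sum_{\rho\in\sigma(1)}\lambda_\rho u_\rho$ with $\lambda_\rho\ge 0$, not all zero, and using linearity of $\varphi$ on $\sigma$, we obtain
\begin{equation}
\langle m_\sigma,v\rangle = \sum_\rho\lambda_\rho(1-a_\rho) > 0 .
\end{equation}
Thus every coefficient in the log pullback is $<1$: the strict transforms have coefficients $a_\rho<1$, and the exceptional divisors have coefficients $1-\langle m_\sigma,v\rangle<1$. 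This is exactly the klt condition, i.e.\ all discrepancies are $>-1$, checked on one log resolution.

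The main obstacle is justifying that checking this single toric log resolution suffices, and that the pullback formula above is correct for $\mathbb{Q}$-Cartier rather than Cartier divisors. For the former, I would invoke the standard fact (e.g.\ \cite{KM}, Cor. 2.32) that klt-ness can be tested on any one log resolution. For the latter, I would clear denominators: pick $k$ with $k(\hat K_V+\hat\Delta)$ Cartier, apply the Cartier pullback via $k\,m_\sigma$, and divide by $k$. Everything else is the linear-algebra positivity computed above.
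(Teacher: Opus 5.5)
Your argument is correct: computing the log pullback on a smooth toric refinement via the $\mathbb{Q}$-Cartier data $m_\sigma$, and using $\langle m_\sigma, v\rangle=\sum_\rho \lambda_\rho(1-a_\rho)>0$ for each new ray, is essentially the standard toric proof of this fact. The paper does not reprove it and simply cites Cox--Little--Schenck; the only nit in your write-up is that $m_\sigma$ is unique only for full-dimensional $\sigma$ (automatic for the complete $V$ used in the paper), but the argument only needs its existence.
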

Finally, we can prove our main result.
\begin{proof}[Proof of \cref{prop:big}]
    Let $[\hat{D}]$ be a big divisor class on $V$ which descends to a non-trivial Hilbert basis element $[D]$ on $X$. We already handled the movable case in \cref{prop:toric_big_movable}, so let $[\hat{D}]$ be non-movable. Let $\hat{D} = \sum_\rho a_\rho \hat{D}_\rho$ be a torus-invariant representative of $[\hat{D}]$ with non-negative rational coefficients. By \cref{lem:fraction_coefs_nontrivial_hilb}, these coefficients will all be strictly less than $1$. 

    We would like to show that $\hat{D}$ is the sum of a big movable divisor $\hat{P}$ and a torus-invariant divisor $\hat{N}$ whose coefficients are strictly less than one (to use in a klt pair). The Newton polytope\footnote{It's worth noting that the lattice points inside Newton polytopes constructed from $\mathbb{Q}$-linear combinations of prime toric divisors do not count global sections; this is only true for $\mathbb{Z}$-linear combinations. However, nowhere in this proof will we utilize the lattice points inside Newton polytopes for this purpose.} $\Delta_{\hat{D}}$ is bounded by affine hyperplanes: one per ray in the toric fan, or equivalently, one per homogeneous coordinate/prime toric divisor. The affine hyperplanes which do not intersect $\Delta_{\hat{D}}$ correspond to the prime toric divisors in the stable basepoint locus of $[\hat{D}]$. Because $[\hat{D}]$ is assumed not to be movable, such hyperplanes will be present: let $\hat{P} = \sum_\rho b_\rho \hat{D}_\rho$ denote the torus-invariant divisor achieved by starting with $\hat{D}$ and reducing the $a_\rho$ for each $\rho$ associated to a non-intersecting hyperplane of $\Delta_{\hat{D}}$ until that hyperplane intersects the boundary of $\Delta_{\hat{D}}$. The $b_\rho$ remain non-negative: to see this, note that $\Delta_{\hat{D}} = \Delta_{\hat{P}}$ by construction, and $0 \in \Delta_{\hat{D}}$, while if a $b_\rho$ was negative then $\Delta_{\hat{P}}$ wouldn't contain $0$. We then have that $\hat{N} = \hat{D} - \hat{P}$.\footnote{The decomposition $\hat{D} = \hat{P} + \hat{N}$ is a special case of the Zariski decomposition for Mori dream spaces \cite{hu2000mori}} 
    
    The bigness of $[\hat{P}]$ follows from $\Delta_{\hat{D}} = \Delta_{\hat{P}}$, as divisors with solid Newton polytopes are big, following \S9.3 in \cite{cls}. $\hat{P}$ is also movable, as all hyperplanes bounding its Newton polytope intersect the polytope by construction, entailing that there is no stable base locus. Finally, both $\hat{P}$ and $\hat{N}$ are positive linear combinations of prime toric divisor classes with coefficients strictly less than one, because $0 \leq b_\rho \leq a_\rho < 1$. 

    Because $[P]$ is movable, by \cref{th:movable_is_nef_somewhere} it is nef on some $V'$ (with anticanonical hypersurface $X'$) related to $V$ by flips, which preserve global sections (and bigness) on $V$ and $X$. The pair $(V', \hat{N})$ is then klt by \cref{prop:toric_klt}. We can now apply \cref{th:klt_kv} to our klt pair. Choose $[\hat{L}] = [\hat{D}] + [\hat{K}]$: then $[\hat{L}] - ([\hat{K}] + [\hat{N}]) = [\hat{D}] + [\hat{K}] - ([\hat{K}] + [\hat{N}]) = [\hat{P}]$ is indeed nef and big. Thus, $h^1(V', \mathcal{O}_{V'}(\hat{D} + \hat{K})) = h^1(V', \mathcal{O}_{V'}(\hat{L})) = 0$, so \cref{eq:coker_ker} is trivial and $h^0(X,\mathcal{O}_{X}(D)) = h^0(X',\mathcal{O}_{X'}(D)) = 0$.
\end{proof}

\section{Computation of $h^{0}(X,\mathcal{O}_X(D))$} \label{sec:direct_comp}

As we've seen, the star of the show for us is the quantity $h^0(X,\mathcal{O}_X(D))$, as it determines the effectiveness/holomorphicity of $D$. A natural question to ask is how we can compute this cohomology for a fixed Calabi--Yau threefold hypersurface $X$ and divisor classes $[D]$. We recall that we expressed this cohomology in \cref{eq:CY_global_section}. From \cref{sec:review-geometry}, we already know how to compute global sections of divisors on a toric variety using the Newton polytope from \cref{eq:newton_poly}. It therefore suffices to understand how to compute higher line bundle cohomology groups on toric varieties, as well as the ranks of maps between such groups. In this present section, we will explain how to perform these computations using the basic building blocks of a toric variety presented in \cref{sec:review-geometry}: the Cox ring $S$, the GLSM charge matrix $Q$, and the toric fan/SR ideal. This approach is directly informed by the \verb+StringTorics+~package~\cite{StringToricsSource} in \verb+Macaulay2+~\cite{M2}. This material is also reviewed in Sec. B.1 of \cite{Demirtas:2019lfi}.

\paragraph{Computing $h^i(V,\mathcal{O}_V(\hat{D}))$.} We now briefly summarize how to compute line bundle cohomology on toric varieties. Methods for this have long been known~\cite{mustata_local_2000, eisenbud_coho, maclagan_multigraded_2004} and have been implemented, for example, in the \verb+NormalToricVarieties+~package~\cite{NormalToricVarietiesSource} in \verb+Macaulay2+~\cite{M2}. We take direct inspiration from the approach introduced in~\cite{Blumenhagen:2010pv, Demirtas:2019lfi} (see~\cite{jow_cohomology_2011, Rahn:2010fm, cohomCalg:Implementation} for follow-ups and~\cite{Blumenhagen:2010ed} for applications).

Recall that we can identify monomials $x^a \in S$ ($a \in \mathbb{Z}^{h^{1,1}+4}$) in the homogeneous coordinate ring and the associated torus-invariant divisor $\hat{D} = \sum_i a_i \hat{D}_i$ with a divisor class, namely the class $[\hat{D}]$ of $\hat{D}$. Recall further that a GLSM charge matrix $Q$ sets a basis for the group of divisor classes, and in this basis the class of $x^a$ is merely $Qa$. In this way, $S$ is graded by the group of divisor classes, and the global sections of $\mathcal{O}_V(\hat{D})$ are generated by the monomials in $S$ with grading $[\hat{D}]$: more generally, the local sections of $\mathcal{O}_V(\hat{D})$ --- those sections only defined on proper open subsets away from their poles --- are generated by monomials in the ring of fractions $\mathrm{Frac}(S)$ with grading $[\hat{D}]$. Local sections of $[\hat{D}]$ will represent elements of the higher line bundle cohomology $H^i(V, \mathcal{O}_V(\hat{D}))$, so we are motivated to spend some time studying them.

We can count local sections of $\mathcal{O}_V(\hat{D})$ which have poles on specific prime toric divisors using polytopes, generalizing the Newton polytope introduced in \cref{sec:review-geometry} to count global sections. Consider picking some subset of the prime toric divisors $\hat{D}_{j_1}, \dots, \hat{D}_{j_k}$: we can then parameterize this subset by a vector $\beta$ of length $h^{1,1} + 4$ valued in $\{0, 1\}$ such that $\beta_i = 1$ if $i \in \{j_1, \dots, j_k\}$ and $0$ otherwise. The local sections of $\hat{D}$ with poles on exactly the divisors $\hat{D}_{j_1}, \dots, \hat{D}_{j_k}$ are generated by monomials counted by the following polytope $\Delta_{\hat{D},\beta}$.
\begin{equation}
\label{eq:Hi Mlattice}
    \Delta_{\hat{D}, \beta} = \left\{ m \in M \; \Bigg| \; \langle m, u_i \rangle + a_i 
    \begin{cases} \geq 0 & \beta_i = 0 \\ < 0 & \beta_i = 1 \end{cases} \right\}.
\end{equation}
These lattice points $m \in M$ are mapped to local sections as $m \mapsto \prod_i x_i^{\langle m, u_i \rangle + a_i}$, just as before. In this way, the inequalities are defined merely to enforce that $x_i$ has a negative power if we want a pole on the associated divisor $\hat{D}_i$ (which is cut out by $x_i = 0$). For example, the Newton polytope $\Delta_{\hat{D}}$ is $\Delta_{\hat{D},(0, \dots, 0)}$, whose sections have poles nowhere. We will find it useful to define $|\beta|$ to be the numbers of $1$s in $\beta$: i.e., the number of prime toric divisors upon which the associated local sections have poles. It will also be convenient to define a map $\mathrm{neg} : \mathbb{Z}^{h^{1,1} + 4} \to \{0,1\}^{h^{1,1} + 4}$ given as $\mathrm{neg}(a)_i = 1$ if $a_i < 0$ and $0$ otherwise. In particular, the set of monomials $x^a$ that take the form $x^{\langle m, \hat{D} \rangle}$ for some $m \in \Delta_{\hat{D}, \beta}$ are exactly those monomials with grading $[\hat{D}]$ and $\mathrm{neg}(a) = \beta$. The key idea is that the extent to which a monomial $x^a$ contributes to line bundle cohomology depends only on $\mathrm{neg}(a)$. 

For each fixed $\beta$, the monomials counted by $\Delta_{\hat{D},\beta}$ contribute to $H^i(V, \mathcal{O}_V(\hat{D}))$ some number of times, determined by the combinatorics of the fan $\Sigma$. We can interpret $\Sigma$ as a simplicial complex; let $\Sigma_\beta$ then be the restriction of the simplicial complex induced by the toric fan to the rays which are non-zero in $\beta$. It is then the reduced cohomology $\tilde{H}^{i-1}(\Sigma_\beta,\mathbb{C})$ of $\Sigma_\beta$ that counts the contributions of $\Delta_{\hat{D},\beta}$ to $H^i(V, \mathcal{O}_V(\hat{D}))$. 

This technically suffices to compute the higher cohomologies, but from a computational standpoint, it is useful to identify simpler necessary conditions for a given $\tilde{H}^{i-1}(\Sigma_\beta,\mathbb{C})$ to be non-zero --- i.e., for a given $\Delta_{\hat{D},\beta}$ to contribute --- such that we don't need to evaluate the simplicial cohomology for the exponentially many $\beta$. We will employ one such condition. Let $x^{\mathrm{SR}_1}, \dots, x^{\mathrm{SR}_\ell}$ generate the Stanley--Reisner (SR) ideal. The monomials are square-free, so the exponent vector $\mathrm{SR}_j$ takes its values in $\{0,1\}$. Let $R = \{\mathrm{SR}_1, \dots, \mathrm{SR}_\ell\}$ be the set of exponent vectors of generators of the SR ideal, and let $P(R)$ be the power set of $R$. Then $\Delta_{\hat{D},\beta}$ can only contribute to $H^i(V, \mathcal{O}_V(\hat{D}))$ if $x^\beta = \mathrm{lcm}(\{x^{\mathrm{SR}_{i_1}}, \dots, x^{\mathrm{SR}_{i_k}}\})$ such that $|\beta| - k = i$. For example, $\Delta_{\hat{D},\beta}$ contributes to $H^1(V, \mathcal{O}_V(\hat{D}))$ if $x^\beta$ is the degree of a single SR generator with two homogeneous coordinates, or a least common multiple of two SR generators with a total of three unique homogeneous coordinates, or of three SR generators with four unique coordinates, and so on.

We are now prepared to state a general formula for line bundle cohomology~\cite{eisenbud_coho,maclagan_multigraded_2004,Blumenhagen:2010pv}.
\begin{equation}
    H^i(V, \mathcal{O}_V(\hat{D})) \cong \bigoplus_{\substack{\beta \text{ s.t. } x^\beta = \mathrm{lcm}(S) \\ \text{for some }S \in P(R)\\ |\beta| - |S| = i}} \bigg[\bigoplus_{m \in \Delta_{D,\beta}} \mathbb{C} \cdot \prod_i x_i^{\langle m, u_i \rangle + a_i}\bigg]\otimes \tilde{H}^{i-1}(\Sigma_\beta,\mathbb{C})\,.
\end{equation}
We stress that this formula holds even without the condition on $\beta$, but we gain computationally efficiency by imposing that condition first. We also emphasize that we are principally interested in the cases $i = 0, 1$, so $\tilde{H}^{0}(\Sigma_\beta,\mathbb{C})$ is the only relevant non-trivial simplicial cohomology we need to compute, which is merely $\mathbb{C}^{k-1}$ for $k$ the number of connected components of $\Sigma_\beta$. 

We comment in passing that for the case of $i = 0$, then the only relevant $\beta$ is $(0, \dots, 0)$ and the reduced cohomology is trivial. The computation of $H^0(V,\cO_V(\hat{D}))$ therefore reduces to counting the monomials in $\Delta_{\hat{D}} = \Delta_{\hat{D}, (0, \dots, 0)}$, as we know from \cref{sec:review-geometry}.

In the ancillary files, we have provided a python implementation in which to calculate the number of monomials for a given $\beta$, we make use of the Coin-or Branch and Cut solver within the \verb+PuLP+ integer programming python library \cite{mitchell2011pulp} alongside the \verb+HPolytope+ class in \verb+CYTools+ (built on \verb+PPL+ \cite{BagnaraHZ08SCP}) to solve for all such points satisfying \cref{eq:Hi Mlattice}. 

\paragraph{Computing $\mathrm{Ker}(\tilde{F})$.}

From the previous section, we now know how to compute the cohomology groups constituting the domain and codomain of the $\Tilde{F}$ featuring in \cref{eq:CY_global_section}; we now discuss how to construct $\tilde{F}$.

In the basis fixed by the charge matrix $Q$, the anticanonical line bundle can be expressed as
\begin{equation}
    -[\hat{K}] = Q \cdot \underbrace{(1, \dots, 1)^\intercal}_{h^{1,1}(V)+4}\,.
\end{equation}
From the previous section, we know how to compute $H^1(V,\cO_V(\hat{D}+\hat{K}))$ and $H^1(V,\cO_V(\hat{D}))$. The map $\Tilde{F}$ between these cohomologies is the one induced by multiplication by $G$, the defining polynomial for the anticanonical hypersurface. Let us now set up some notation.
\begin{itemize}
    \item Let $e_{\beta,\mu}$ be a basis for $\tilde{H}^{0}(\Sigma_\beta,\mathbb{C})$.
    \item In particular, let the $e_{\beta,\mu}$-coefficient of $e \in \tilde{H}^{0}(\Sigma_\beta,\mathbb{C})$ be $\mathrm{coef}_\mu(e)$, such that $e = \sum_\mu \mathrm{coef}_\mu(e) e_{\beta,\mu}$.
    \item Let $x^{a_i} \otimes e_{\mathrm{neg}(a_i),\mu}$ be a basis for $H^1(V,\cO_V(\hat{D}+\hat{K}))$.
    \item Let $x^{b_j} \otimes e_{\mathrm{neg}(b_j),\nu}$ be a basis for $H^1(V,\cO_V(\hat{D}))$.
    \item Let $G = \sum_k \delta_k x^{c_k}$ for $x^{c_k}$ the global sections of $-\hat{K}$.
\end{itemize}
We would like to compute the matrix elements $\tilde{F}_{i\mu, j\nu}$ of $\Tilde{F}$ from $x^{a_i} \otimes e_{\mathrm{neg}(a_i),\mu}$ to $x^{b_j} \otimes e_{\mathrm{neg}(b_j),\nu}$. The map on the Cox ring monomial part of $x^{a_i}$ is induced by multiplication by $G$: for $\tilde{F}_{i\mu, j\nu}$ to be non-trivial, then, it is necessary that there is some $c_k$ such that $x^{a_i}x^{c_k} = x^{b_j}$ (i.e., $a_i + c_k = b_j$). If it exists, such a $c_k$ will of course be unique; in such a case, note that $\mathrm{neg}(a) \geq \mathrm{neg}(b)$ (as multiplication by a global section can only remove poles in a monomial). This means we have an inclusion $\Sigma_{\mathrm{neg}(b)} \to \Sigma_{\mathrm{neg}(a)}$ which induces a map $\tilde{H}_{i-1}(\Sigma_{\mathrm{neg}(b)},\mathbb{C}) \to \tilde{H}_{i-1}(\Sigma_{\mathrm{neg}(a)},\mathbb{C})$ on simplicial homology, whose dual on simplicial cohomology we denote $\pi_{ab} : \tilde{H}^{i-1}(\Sigma_{\mathrm{neg}(b)},\mathbb{C}) \to \tilde{H}^{i-1}(\Sigma_{\mathrm{neg}(a)},\mathbb{C})$. In the case of interest, $i = 1$, the simplicial homology map merely sends a connected component of $\Sigma_{\mathrm{neg}(b)}$ to the one containing it in $\Sigma_{\mathrm{neg}(a)}$, and the simplicial cohomology map is the induced map between the dual vector spaces. In summary, then,
\begin{equation}
    \tilde{F}_{i\mu, j\nu} =
    \begin{cases}
        \delta_k \cdot \mathrm{coef}_\nu(\pi_{ab}(e_{\mathrm{neg}(a),\mu})) & \text{if } x^{c_k}x^{a_i} = x^{b_j} \,,\\
    0 &\text{otherwise}\,.
    \end{cases}
\end{equation}
Here, for our purposes, we will be performing the calculations at generic points in the complex structure moduli. Hence, we can take $\delta_k$ to be a random real number.
Note, at special points in the complex structure moduli, which corresponds to specially tuned values of $\delta_k$, divisors may acquire additional global sections. In particular, non-effective divisors may become effective.

Now, having constructed $\tilde{F}$, we can readily compute its rank, furnishing the final ingredient needed to compute $h^0(X,\cO_X(D))$ using \cref{eq:coker_ker}.

\paragraph{Example.}

To demonstrate the above algorithm, let us consider the example of an autochthonous divisor in a Calabi--Yau threefold with $h^{1,1}=2$ and $h^{2,1}=86$. In particular, we'll consider the anticanonical hypersurface in $\mathbb{P}^3 \times \mathbb{P}^1$ (i.e., this geometry also belongs to the CICY database of complete intersection Calabi--Yau threefolds in products of projective spaces). $\mathbb{P}^3 \times \mathbb{P}^1$ is a fourfold in the Kreuzer--Skarke database, and the vertices of the associated 4D reflexive polytope are given as follows.
\begin{equation}
    \begin{pmatrix}
        0&0&0&0&-1&1\\
        -1&0&0&1&0&0\\
        -1&0&1&0&0&0\\
        -1&1&0&0&0&0
    \end{pmatrix}\,.    
\end{equation}
Then, a choice of GLSM matrix is
\begin{equation}
    \begin{pmatrix}
        1&1&1&1&0&0\\
        0&0&0&0&1&1
    \end{pmatrix}\,.
\end{equation}
From here, we can see that the toric effective cone is generated by $(1,0)$ and $(0,1)$. Therefore, all effective divisors inside $V$ take on the following form
\begin{equation}
    [\hat{D}]=a\begin{pmatrix}
       1\\
       0
    \end{pmatrix}+b
    \begin{pmatrix}
        0\\
        1
    \end{pmatrix}\,,
\end{equation}
where $a,b\in\mathbb{Z}_{\geq 0}$.
Now, instead, consider the divisor $[\hat{D}]=(4,-1)$ which is clearly not effective in $V$. However, computing the elements inside $H^1(V,\mathcal{O}_V(\hat{D}+\hat{K}))$ we find that they are generated by the monomials $x_5^{-2}x_6^{-1}, x_5^{-1}x_6^{-2}$ where the corresponding $x^\beta$ is $x_5x_6$. Then, we have $\Sigma_{\beta}=\{\{5\},\{6\}\}$ such that $\mathrm{dim} \, \tilde{H}^0(\Sigma_{\beta})=1$. Additionally, we find that $H^1(V,\mathcal{O}_V(\hat{D}))$ is trivial. Therefore, the kernel of $\tilde{F}$ is two-dimensional, which implies that $h^0(X,\mathcal{O}_X(D)) = 2 > 0$. Hence, $D$ is an autochthonous divisor.

\section{Holes in Kreuzer--Skarke} \label{sec:ksholes}

\noindent Now we are in a position to compute line bundle cohomologies inside Calabi--Yau threefolds. To this end, we utilize \verb+CYTools+~\cite{Demirtas:2022hqf} to extract both toric variety and Calabi--Yau threefold data from the Kreuzer--Skarke database~\cite{Kreuzer:2000xy}. Then, using the algorithm presented in the previous section and implemented in the ancillary files submitted alongside this paper, we will determine the effectiveness of divisors in many Calabi--Yau threefolds. For convenience and reproducibility, we work in the default basis chosen deterministically by \verb+CYTools+ for the $N$ lattice and group of divisor classes. 

We will identify holes in Calabi--Yau effective cones with the following strategy. For a fixed polytope in the Kreuzer--Skarke database, we can compute the non-trivial Hilbert basis elements (elements of the Hilbert basis of the toric effective cone that are not themselves classes of prime toric divisors), which are strong candidates for holes in the effective cone of the Calabi--Yau hypersurface. For each such divisor class $[D]$, we compute $h^0(X, \mathcal{O}_X(D))$ to check for its effectiveness. From here, we probe \cref{cor:semigroup} by evaluating $h^0(X, \mathcal{O}_X(D'))$ for $[D']$ given by a sum of $[D]$ and a non-negative linear combination of prime toric divisor classes. 

Our empirical findings are as follows. 
\begin{itemize}
    \item Beyond verifying the result that big non-trivial Hilbert basis elements must be holes (\cref{prop:big}), we find that their non-big counterparts are also always holes.
    \item In particular, recalling \cref{eq:CY_global_section}, for all non-trivial Hilbert basis elements $[D]$ appearing in Calabi--Yau threefolds with $h^{1,1}\leq 6$ within the Kreuzer--Skarke datalist, we find that $H^1(V, \mathcal{O}_V(D + K_V)) = 0$, hence implying that $\mathrm{Ker} \, \Tilde{F}$ vanishes. 
    \item The vast majority of non-trivial Hilbert basis elements are not big on $V$ (i.e., belong to the boundary of $\mathcal{E}_V$).
    \item In verifying \cref{cor:semigroup}, we find that holes nearly always come in \textit{codimension-one} families.
    \item The majority of non-trivial Hilbert basis elements arise for Calabi--Yau hypersurfaces that are not generic in complex structure.\footnote{In \cref{sec:review-geometry} we mentioned the favorability condition for a four-dimensional reflexive polytope, ensuring that $H^{1,1}(X)$ was fully inherited from the ambient variety. If the same condition holds for the dual polytope --- ``dual favorability'' --- then $H^{2,1}(X)$ is fully inherited from the sections of the anticanonical bundle: this duality is a consequence of mirror symmetry in the Kreuzer--Skarke database. We enforce that our polytopes are favorable but not dual-favorable, so we encounter many hypersurfaces that are not generic in their complex structure.} That is, it seems that when tuning the complex structure of a CY, the effective cone grows but often not all classes now in the effective cone are themselves rendered effective. In particular, we have no examples of a non-trivial Hilbert basis element on the \textit{interior} of $\mathcal{E}_V$ for a hypersurface at generic complex structure.
\end{itemize}
In particular, we take the first two findings above as strong evidence for the following conjecture.
\begin{conjecture}
    \label{conj:non_trivial_hb_are_holes}
    All non-trivial Hilbert basis elements on smooth Calabi--Yau threefolds arising from the Krezuer--Skarke database are holes.
\end{conjecture}
We also note that non-trivial Hilbert basis elements that belong to $\partial\mathcal{E}_V$ may also belong to $\partial\mathcal{E}_X$ --- which is more difficult to check, and which we do not do systematically in this work --- and when they do, by \cref{prop:torsion} they must be holes and are moreover indicative of class group torsion in the birational geometry of $X$, from \cref{prop:torsion}. Given that holes do predominantly fall in $\partial\mathcal{E}_V$, it would be interesting to understand whether the abundance of holes we find in \cref{fig:additional elements} (especially at large $h^{1,1}$) does indeed correspond to an abundance of class group torsion in the birational geometry of toric hypersurface Calabi--Yau threefolds.

We now present our findings in greater detail, presenting some illustrative examples as well as more quantitative results of our scan.

\subsection{Holes on Boundaries of Toric Effective Cones}

The first few occurrences of holes (examined in the order of increasing $h^{1,1}$) appear on the boundaries of the toric effective cones: i.e., they do not descend from big classes on the ambient variety. Here, we will demonstrate this phenomenon in two explicit examples. As they fall on the boundary, we must explicitly computationally check whether these divisors are holes using the methods of \cref{sec:direct_comp}.  

\subsubsection{Example 1: $h^{1,1}=2$.}
\label{sec:h11=2}

\begin{figure}
    \centering
    \includegraphics[width=0.5\linewidth]{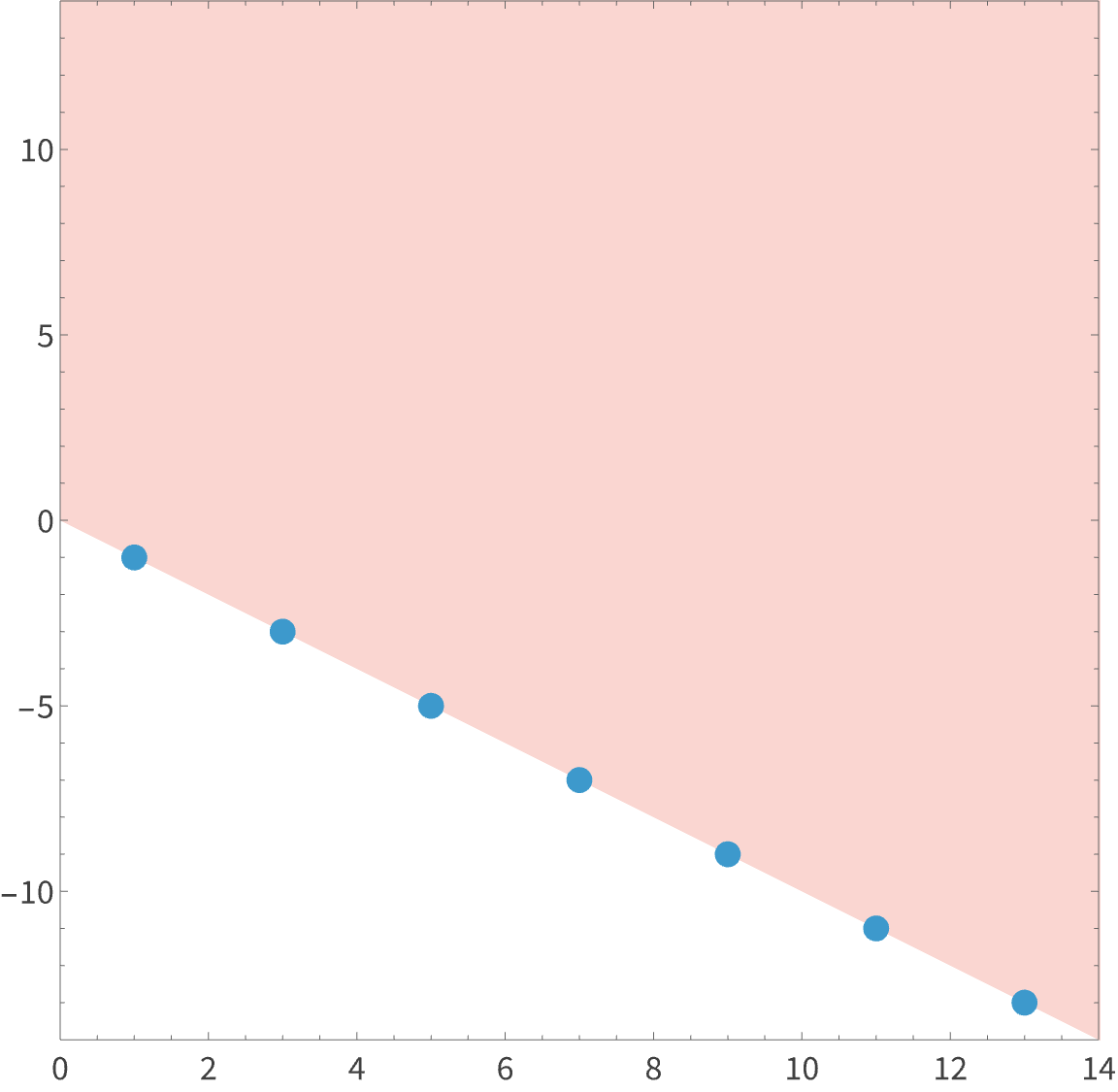}
    \begin{picture}(0,0)\vspace*{-1.2cm}
        \put(-110,-10){\footnotesize $p_1$}
        \put(-240,115){\footnotesize $p_2$}
    \end{picture}\vspace*{0.5cm}
    \caption{The toric effective cone $\mathcal{E}_{V_{2,106}}$ for the Calabi--Yau hypersurface $X_{2,106}$ (red) with non-effective divisor classes in the effective cone --- holes --- in blue. The variables $p_1, p_2$ denote the components of divisor classes when expressed in the basis of \cref{eq:X2106 GLSM}.}
    \label{fig:effective cone X2106}
\end{figure}

At $h^{1,1}=2$, we have a single Calabi--Yau threefold in the Kreuzer--Skarke database which exhibits a hole. This variety is a hypersurface in a toric variety $V_{2,106}$ arising from a fine, regular, and star triangulations (FRST) of the reflexive polytope with lattice points
\begin{equation}
\begin{pmatrix}
-5 & 0 & 0 & 1 & 1 & 1 \\
-1 & 0 & 1 & 0 & 0 & 0 \\
-1 & 1 & 0 & 0 & 0 & 0 \\
-2 & 0 & 0 & 0 & 2 & 1
\end{pmatrix}\,.
\end{equation}
A choice of GLSM for the toric variety is given by
\begin{equation}
\label{eq:X2106 GLSM}
\begin{pmatrix}
    1&1&1&3&0&2\\
    0&0&0&1&1&-2
\end{pmatrix}\,.
\end{equation}
The Kreuzer--Skarke database is canonically ordered for each value of $h^{1,1}$, so for convenience we will share the index of the polytopes we present as examples. That is, the index $n$ polytope for a fixed $h^{1,1}$ is the $(n+1)$th favorable polytope at that $h^{1,1}$ in the Kreuzer--Skarke ordering.\footnote{The Kreuzer--Skarke database is available \href{http://hep.itp.tuwien.ac.at/~kreuzer/CY/}{here}.} For this polytope, that index is 19, and the Hodge numbers of the Calabi--Yau hypersurface are $(h^{1,1},h^{2,1})=(2,106)$, so we will refer to it as $X_{2,106}$
The generators of the extremal rays of the effective cone in this choice of basis are
\begin{equation}
\begin{pmatrix}
    1\\
    -1
\end{pmatrix}\,,
\qquad \begin{pmatrix}
    0\\
    1
\end{pmatrix}\,,
\end{equation}
and the Hilbert basis matches with the set of extremal ray generators. However, one of the elements of the Hilbert basis (in particular, an extremal ray generator) is not a prime toric divisor: only one of its multiples is. Thus, this Hilbert basis element is non-effective on the ambient variety, and a cohomology computation demonstrates that it is additionally not effective on $X_{2,106}$. That is,
\begin{equation}
    h^0(X_{2,106},\mathcal{O}_{X_{2,106}}((1,-1))=0\,.
\end{equation}
Furthermore, this non-effectiveness continues to hold for all odd integer multiples of this divisor
\begin{equation}
    h^0(X_{2,106},\mathcal{O}_{X_{2,106}}((2n+1,-(2n+1)))=0\,.
\end{equation}
This family of non-effective divisors in the effective cone lie strictly on the boundary of the effective cone and can be written as 
\begin{equation}
    [D_a] =
    \begin{pmatrix}
        1\\
        -1
    \end{pmatrix}
    +a
    \begin{pmatrix}
        2\\
        -2
    \end{pmatrix}\,,
\end{equation}
where $a\in\mathbb{Z}_{\geq 0}$. The second divisor in the above equation is a prime toric divisor, namely the last column appearing in the GLSM. Thus, the non-effective divisors form a codimension-one semigroup in the toric effective cone. The effective cone along with the non-effective divisors for this Calabi--Yau threefold is illustrated in \cref{fig:effective cone X2106}.

This hole falls in the purview of \cref{prop:torsion}. Indeed, letting $[D] = (1, -1)$ denote the hole, the class $2[D]$ is a prime toric divisor that is contracted by a birational map, and the subgroup of the class group it generates is not saturated, because it does not contain $[D]$. On the ambient variety, contracting $2[\hat{D}]$ yields the toric variety $\mathbb{P}_{11114}/\mathbb{Z}_2$, the singular non-simply connected toric fourfold we considered in \cref{sec:friends}. The divisorial contraction of the hypersurface is inherited from the toric one, and yields an anticanonical hypersurface $Y$ in $\mathbb{P}_{11114}/\mathbb{Z}_2$, which inherits the $\mathbb{Z}_2$ torsion of the ambient variety. The pushforward of $[D]$ on $Y$ is the pure torsion class, which is a nef hole. It is also clear why $2[D]$ generates the semigroup of holes: it is the contracted divisor --- such that the entire semigroup is mapped to the nef hole on $Y$ by the pushforward --- and also the prime component of the stable base locus of $[D]$, following the discussion of \cref{sec:friends} and \cref{sec:minimal_model}.

\subsubsection{Example 2: $h^{1,1}=3$.}
\label{sec:h113 boundary}

\begin{table}
    \centering
    \begin{tabular}{c|c|c}
    ID & $(h^{1,1},h^{2,1})$ & $[D]$ \\ 
    \hline
       0 & (3,43) & $(1,-1,1)$ \\ 
       1 & (3,45) & $(1,-1,0),(1,0,-1)$ \\ 
       3 & (3,51) & $(-1,1,1)$ \\ 
       8 & (3,59) & $(1,-1,1)$ \\ 
       11 & (3,63) & $(1,-1,1)$ \\ 
       105 & (3,83) & $(1,-1,1)$ \\ 
       123 & (3,89) & $(-1,2,2)$ \\ 
       130 & (3,93) & $(1,-1,1)$ \\ 
       134 & (3,95) & $(-1,4,-1)$ \\
       139 & (3,99) & $(1,-1,1)$ \\
       166 & (3,105) & $(1,-1,0)$ \\
       170 & (3,105) & $(1,-2,-2)$  \\
       232 & $(3,165)$ & $(3,-1,1)$ \\ 
    \end{tabular}
    \caption{Non-trivial Hilbert basis elements in the Kreuzer--Skarke database at $h^{1,1}=3$. In particular, we find a total of 13 toric Calabi--Yau threefolds with non-trivial Hilbert basis elements lying on the boundary of the toric effective cone. We explicitly compute that all of these classes are holes, using the methods of \cref{sec:direct_comp}. Additionally, each of these geometries are non-generic in complex structure aside from $(3,51)$, the example we present in \cref{sec:h113 boundary}. Here, $[D]$ indicates the non-trivial Hilbert basis element in the toric effective cone.}
    \label{tab:h11=3}
\end{table}

Now, the Calabi--Yau threefolds with $h^{1,1}=3$ in the Kreuzer--Skarke database with non-trivial Hilbert basis elements also strictly feature them along the boundary of the toric effective cone. Here, we tabulate all such examples in \cref{tab:h11=3} and work out an explicit example. This will be the only example we discuss which is generic in its complex structure.

Consider the Calabi--Yau threefold obtained as a hypersurface in a toric variety from an FRST of the reflexive polytope with lattice points
\begin{equation}
    \begin{pmatrix}
        1&-1&0&0&-2&0&2\\
        0&0&1&0&-1&0&-1\\
        0&0&0&1&-1&0&0\\
        0&0&0&0&0&1&-1
    \end{pmatrix}\,.
\end{equation}
We can choose the following GLSM charge matrix for this toric fourfold.
\begin{equation}
    \label{eq:351_charge_matrix}
    \begin{pmatrix}
        1 & 0 & 1 & 1 & 0& -2&  0\\
         0  &1 & 0  &1  &1&  2 & 0\\
         0  &0 & 0  &0  &0  &1 & 1
    \end{pmatrix}\,.
\end{equation}
This Calabi--Yau threefold has the Hodge numbers $(h^{1,1},h^{1,2})=(3,51)$, so we will refer to it as $X_{3,51}$, with the ambient variety being $V_{3,51}$.

The toric effective cone of divisors, which we denote as $\mathcal{E}_{V_{3,51}}$, is generated by the columns of the following matrix.
\begin{equation}
\mathcal{E}_{V_{3,51}} = \text{cone}\begin{pmatrix}
0 & -2 & 1 & 0\\
1&2&0&0\\
0&1&0&1
\end{pmatrix}\,.
\end{equation}
Now, the Hilbert basis of $\mathcal{E}_{V_{3,51}}$ is
\begin{equation}
    \mathcal{H}(\mathcal{E}_{V_{3,51}}) =\left\{\begin{pmatrix}
        -2\\2\\1
    \end{pmatrix},\begin{pmatrix}
        -1\\1\\1
    \end{pmatrix},\begin{pmatrix}
        0\\0\\1
    \end{pmatrix},\begin{pmatrix}
        0\\1\\0
    \end{pmatrix},\begin{pmatrix}
        1\\0\\0
    \end{pmatrix}\right\}\,.
\end{equation}

\begin{figure}
    \centering
    \includegraphics[width=0.5\linewidth]{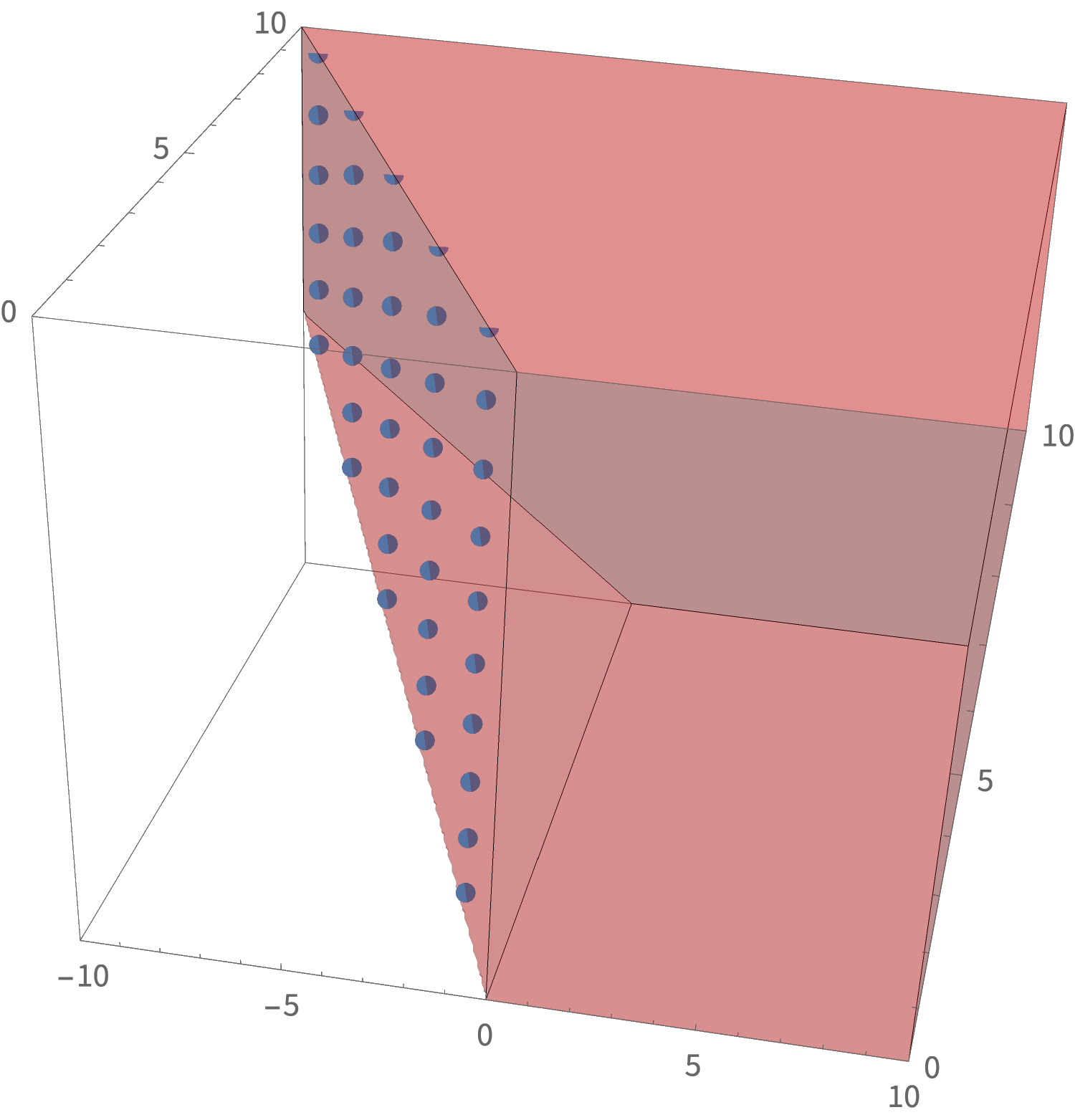}
    \begin{picture}(0,0)\vspace*{-1.2cm}
        \put(-125,0){\footnotesize $p_1$}
        \put(-215,200){\footnotesize $p_2$}
        \put(-240,90){\footnotesize $p_3$}
    \end{picture}\vspace*{0.5cm}
    \caption{The toric effective cone $\mathcal{E}_{V_{3,51}}$ for the Calabi--Yau hypersurface $X_{3,51}$ (red) with holes in blue. The variables $p_1, p_2, p_3$ denote the components of divisor classes when expressed in the basis of \cref{eq:351_charge_matrix}.}
    \label{fig:effective cone X351}
\end{figure}

The non-trivial Hilbert basis element $[D] = (-1,1,1)$ has the zeroth cohomology
\begin{equation}
    h^{0}(X_{3,51},\mathcal{O}_{X_{3,51}}([D]))=0\,.
\end{equation}
Therefore, this divisor $[D]$ is a hole. By computing $h^0$ for more divisors in the effective cone, we have found more non-effective divisors living on the boundary of the effective cone. This has been illustrated in \cref{fig:effective cone X351}. As can be seen, the general pattern for non-effective holes are 
\begin{equation}
    [D_{m,n}] =\begin{pmatrix}
        -(2n+1)\\
        2n+1\\
        m
    \end{pmatrix}\,,
\end{equation}
where $n,m\in\mathbb{Z}_+$. The other sites along this boundary are effective as they are formed via non-negative combinations of effective divisors, i.e.,
\begin{equation}
    [D^{\rm eff}_{m,n}] = n\cdot\begin{pmatrix}
        -2\\
        2\\
        1
    \end{pmatrix}+(m-n)\cdot\begin{pmatrix}
        0\\
        0\\
        1
    \end{pmatrix}\,.
\end{equation}

\subsection{Holes in Interiors of Toric Effective Cones}

We've now seen that non-trivial Hilbert basis elements for Kreuzer--Skarke geometries with $h^{1,1} \leq 3$ always belong to the boundary of the effective cone, and in the examples we've studied, the associated semigroup predicted by \cref{cor:semigroup} has been contained in that boundary, as anticipated by the discussion at the end of \cref{sec:minimal_model}. We are now motivated to study holes on the interior of the toric effective cone. Can non-trivial Hilbert basis elements on the boundary give rise to semigroups which include holes on the interior? And, by going to larger $h^{1,1}$, can we find non-trivial Hilbert basis elements in the interior? 

To understand these questions, we study two examples with holes on the interior of the toric effective cone. The first involves a collection of holes populating both the boundary and the interior which naively all appear to arise from a boundary non-trivial Hilbert basis element: after studying the birational geometry of this example, we will interpret these holes as decomposing into a countable disjoint union of semigroups, with each semigroup contained either strictly on the boundary or strictly within the interior. This substantiates our expectation from \cref{sec:minimal_model} that the semigroup of non-movable holes in faces of $\partial \mathcal{E}_V$ remain in those faces. The second example will feature a non-trivial Hilbert basis element which itself lies on the interior of the toric effective cone: in studying its birational geometry, we will encounter the weighted projective space hypersurface we first introduced at the end of \cref{sec:minimal_model}.

\subsubsection{Example 3: $h^{1,1}=3$.}
\label{sec:h113 interior}

\begin{figure}[tp!]
    \centering
    \includegraphics[width=0.5\linewidth]{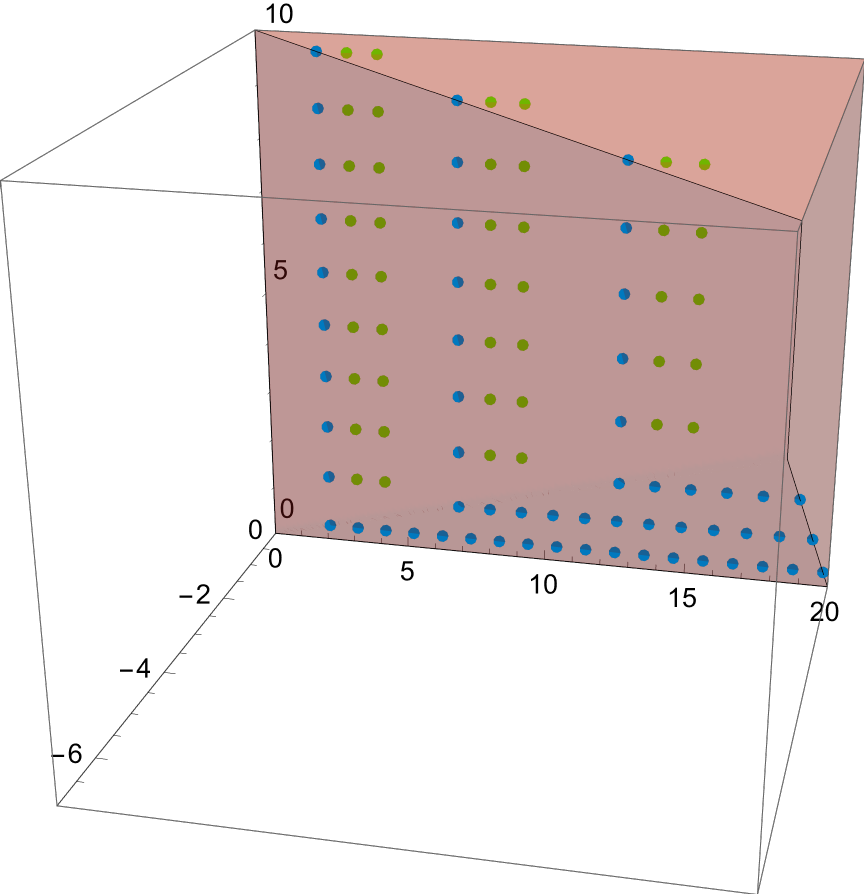}
    \begin{picture}(0,0)\vspace*{-1.2cm}
        \put(-225,10){\footnotesize $p_1$}
        \put(5,75){\footnotesize $p_2$}
        \put(-165,240){\footnotesize $p_3$}
    \end{picture}\vspace*{0.5cm}
    \caption{The toric effective cone $\mathcal{E}_{V_{3,165}}$ for the Calabi--Yau hypersurface $X_{3,165}$ (red) with holes on the boundary and on the interior of $\mathcal{E}_{V_{3,165}}$ in blue and green, respectively. The variables $p_1, p_2, p_3$ denote the components of divisor classes when expressed in the basis of \cref{eq:GLSM_3_165}.}
    \label{fig:effective cone X3165}
\end{figure}

Consider the Calabi--Yau threefold which has index 232 with $(h^{1,1},h^{2,1})=(3,165)$, namely the last entry in \cref{tab:h11=3}. We will refer to this Calabi--Yau threefold as $X_{3,165}$ and its ambient space as $V_{3,165}$. For this example, the Calabi--Yau threefold is obtained as a hypersurface in a toric variety from an FRST of the reflexive polytope with lattice points
\begin{equation}
    \begin{pmatrix}
        -9&0&0&1&1&1&-3\\
        -6&0&0&0&2&1&-2\\
        -1&0&1&0&0&0&0\\
        -1&1&0&0&0&0&0
    \end{pmatrix}\,.
\end{equation}
We can choose the following GLSM charge matrix for this toric fourfold.
\begin{equation}
    \begin{pmatrix}
        \label{eq:GLSM_3_165}
        1&1&1&3&0&6&0\\
        0&0&0&1&1&-2&0\\
        0&0&0&1&0&2&1
    \end{pmatrix}\,.
\end{equation}
The toric effective cone of divisors has extremal ray generators given by the columns of the following matrix
\begin{equation}
    \mathcal{E}_{V_{3,165}} = \mathrm{cone}
    \begin{pmatrix}
        1&0&0&3\\
        0&1&0&-1\\
        0&0&1&1
    \end{pmatrix}\,.
\end{equation}
The Hilbert basis of the toric effective cone is just given by the extremal rays:
\begin{equation}
    \mathcal{H}(\mathcal{E}_{V_{3,165}})=\left\{
    \begin{pmatrix}
        1\\
        0\\
        0
    \end{pmatrix}\,,
    \begin{pmatrix}
        0\\
        1\\
        0
    \end{pmatrix}\,,
    \begin{pmatrix}
        0\\
        0\\
        1
    \end{pmatrix}\,,
    \begin{pmatrix}
        3\\
        -1\\
        1
    \end{pmatrix}
    \right\}\,.
\end{equation}
The non-trivial Hilbert basis element is $[D]=(3,-1,1)$ and cannot be generated by a non-negative integral combination of the prime toric divisors, namely 
\begin{equation}
    h^0(X_{3,165},\cO_{X_{2,165}}(D))=0\,.
\end{equation}
Furthermore, there are semigroups of holes in the toric effective cone of $X_{3,232}$ as illustrated by \cref{fig:effective cone X3165}. The holes in the toric effective cone can be generated as 
\begin{equation}
    \label{eq:h11_3_interior_holes}
    [D_{n,m,l}]=[D]+n
    \begin{pmatrix}
        1\\
        0\\
        0
    \end{pmatrix}+m\begin{pmatrix}
        6\\
        -2\\
        2
    \end{pmatrix}
    +l\begin{pmatrix}
        0\\
        0\\
        1
    \end{pmatrix}\,,
\end{equation}
where either $l = 0$ and $n,m \geq 0$ or $n=0,1,2$ and $m,l\geq 0$.
In particular, all divisor classes within this collection of holes with $n>0$ and $l > 0$ are in the interior of the toric effective cone, and are colored in green. However, the case with $n=0$ becomes the familiar codimension one semigroup of holes that we have found in the previous section. Additionally, there is another codimension two semigroup with $m=l=0$ that is along the boundary of the toric effective cone. 

This is the most complicated collection of holes we've seen, and it is worth repeating the analysis performed in \cref{sec:h11=2} to interpret the holes using the minimal model program. In particular, we will find that the holes in this example naturally decompose into countably many semigroups of the kind we are accustomed to: namely, a hole $[D]$ plus a non-negative linear combination of prime divisors yielding the stable base locus or, equivalently, divisors contracted in passing to the $[D]$ minimal model. This is summarized in \cref{fig:eff_cone_h11_3_MMP}, where we plot a two-dimensional slice of the effective cone and label the semigroups of holes.\footnote{One can alternatively think of this plot as the entire effective cone of $X_2$, with the points being the pushforwards of holes to $X_1$. The bottom cone is then the nef cone of $X_2$.} 

We will need to discuss two divisorial contractions of $X_{3,232}$: $X_2$, with Picard number $2$, achieved by contracting $[D_6] = 2[D]$, and $X_1$, with Picard number one, achieved by contracting $[D_6]$ and $[D_7]$. These contractions descend from the ambient variety: in particular, let us introduce the charge matrices $Q_1, Q_2$ of their ambient varieties now.
\begin{align}
    Q_2 &= \left(\begin{array}{cccccc}
        1 & 1 & 1 & 6 & 3 & 0 \\ 0 & 0 & 0 & 2 & 1 & 1 \\ \hline 0 & 0 & 0 & 1 & 1 & 0
    \end{array}\right), \\
    Q_1 &= \left(\begin{array}{cccccc}
        1 & 1 & 1 & 6 & 3 \\ \hline 0 & 0 & 0 & 1 & 1
    \end{array}\right), 
\end{align}
We use the horizontal line to separate $\mathbb{Z}$-valued and torsional charges for toric varieties with class group torsion, as introduced and discussed in the torsional example of \cref{sec:minimal_model}. Here, one can see that the contracted prime toric divisors are removed, and otherwise we have preserved the order of the prime toric divisors from \cref{eq:GLSM_3_165}. In this case, the contractions can be computed on the toric variety, 

The first birational map resembles what we saw in \cref{sec:h11=2}, in that the contracted divisor generates a non-saturated subgroup with a hole given by half the class. What is unique in this case is that instead of there being a unique nef hole on the image of the map, now we have countably infinitely many holes. Some of these holes are nef, while some are not: for those of the latter type, we will still need to contract $[D_7]$ to arrive at their minimal model.

In particular, we can compute that the nef cone of $X_2$ in this basis is generated by $(1,0)$ and $(3,1)$ (omitting the torsional part of these classes because the nef cone lies in the class group tensored with $\mathbb{R}$). The holes on $X_2$ are $(0,0,1) + a(1,0,0) + b(0,1,0)$ for $b = 0$ or $0 \leq a < 3$. At the level of the toric variety, it is clear that these classes have no global sections (they cannot be expressed as a linear combination of prime toric divisors), and this persists on the Calabi--Yau hypersurface. Those of the former type are nef on $X_2$, and each one induces a one-dimensional semigroup on $X_{3,232}$ generated by the stable base locus/contracted divisor $[D_6]$. We can collect these semigroups together into a two-dimensional semigroup by writing
\begin{equation}
    [D] + n[D_1] + m[D_6]
\end{equation}
for $n, l \geq 0$, reproducing the first case described after \cref{eq:h11_3_interior_holes}. The first two terms parameterize the nef holes on $X_2$, while the last term parameterizes their respective semigroups, generated by $[D_6]$. These are the holes on the bottom face of \cref{fig:effective cone X3165} and, analogously, corresponding to the bottom chamber of \cref{fig:eff_cone_h11_3_MMP}.

Now let us study the latter case, $0 \leq a < 3$ for $b > 0$. These holes are not nef on $X_2$, so we contract $[D_7]$ to achieve $X_1$. The effective cone and nef cone of $X_2$ are both just $\mathbb{R}_{\geq 0}$: in particular, all of the holes are nef. We can also immediately read off that the holes are $(a,1)$ for $a < 3$, as the smallest prime toric divisor with torsional part has free part $3$. Each one of the three nef holes results in a distinct two-dimensional semigroup on $X$, parameterized by $n \in \{0, 1, 2\}$ as follows.
\begin{equation}
    [D] + n[D_1] + m[D_6] + l[D_7]
\end{equation}
Here, $l, m \geq 0$. These are the holes on the upper face of \cref{fig:effective cone X3165} and, analogously, corresponding to the top chamber of \cref{fig:eff_cone_h11_3_MMP}.

\begin{figure}
    \centering
    \includegraphics[width=0.7\linewidth]{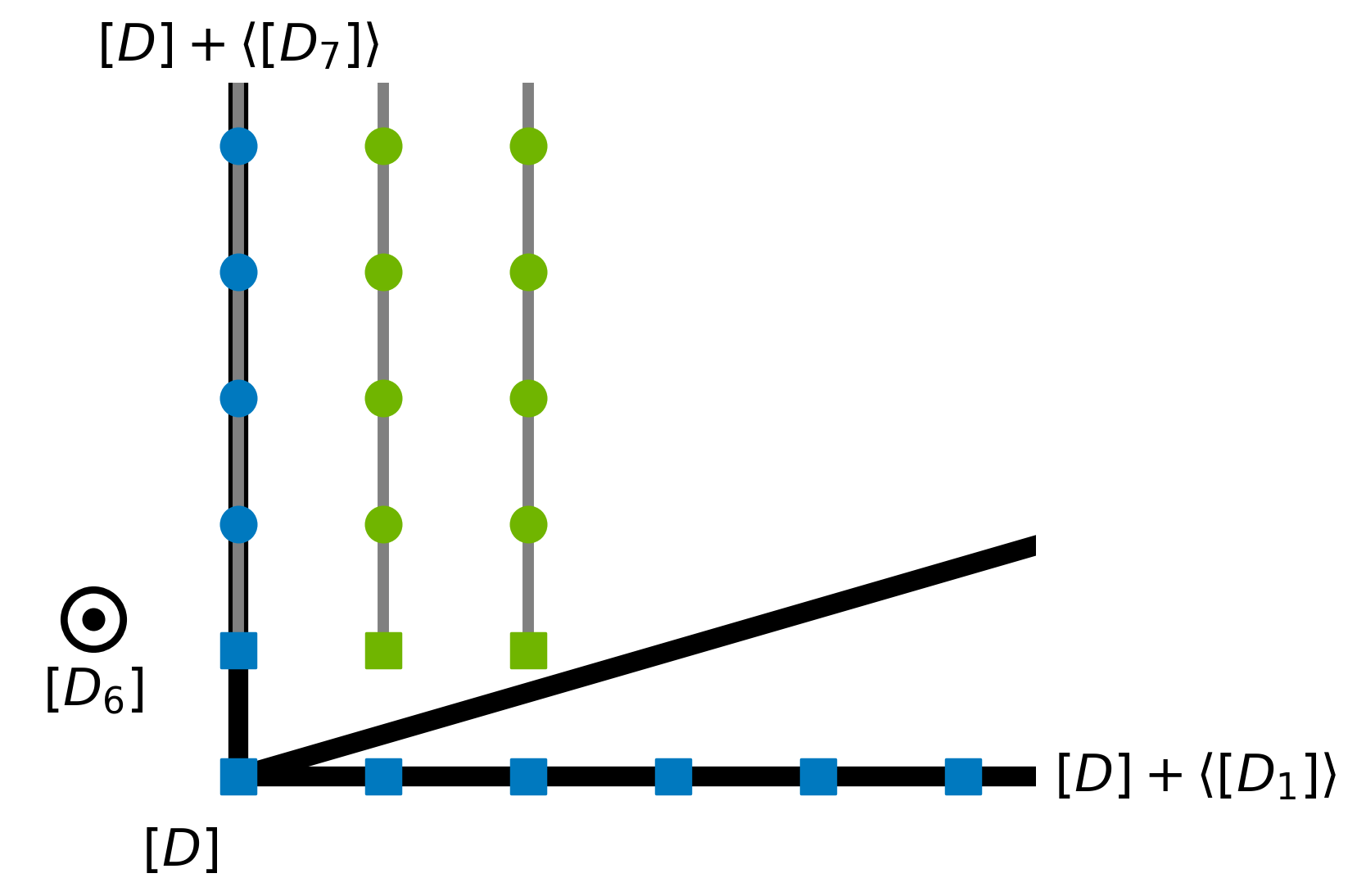}
    \caption{The semigroup structure of holes in $X_{3,232}$. We plot the slice of the effective cone given by $[D]$ plus the non-negative span of $[D_1]$ and $[D_7]$, with $[D_6]$ being the direction out of the page. All plotted lattice points are holes, and the coloring matches that of \cref{fig:effective cone X3165}. Squares denote the ``origins'' of the distinct semigroups, and gray lines connect holes belonging to the same semigroup. The black line separates holes with minimal model given by $X_2$ (below) and by $X_1$ (above). $[D_6]$ is always a semigroup generator, so each semigroup extends out of the page. We have a countable family of one-dimensional semigroups whose minimal model is $X_2$ --- corresponding to a countable family of nef holes on $X_2$ --- each generated by the exceptional divisor $[D_6]$. Additionally, we have three two-dimensional semigroups whose minimal model is $X_1$ --- corresponding to the three nef holes on $X_1$ --- each generated by the exceptional divisors $[D_6, D_7]$. Every semigroup is  strictly contained either in the boundary of the effective cone or in the interior.}
    \label{fig:eff_cone_h11_3_MMP}
\end{figure}

We have thus accounted for all the holes we originally identified in \cref{eq:h11_3_interior_holes}: they arise from an infinite number of one-dimensional semigroups --- organizable into a two-dimensional semigroup --- corresponding to nef holes on $X_2$, together with three two-dimensional semigroups corresponding to nef holes on $X_1$.

There is one final comment worth making about this example. The effective cone $\mathcal{E}_{X_{3,165}}$ admits an automorphism given, in our basis, by permutation of the final two coordinates. This is highly non-trivial, as both toric approximation to the nef cone of $X_{3,165}$ and the toric part $\mathcal{E}_{V_{3,165}}$ of the Calabi--Yau effective cone are not closed under the symmetry, meaning that there are ``non-toric'' regions in both the nef and effective cones of $X_{3,165}$. Using the methods of \cite{Gendler:2022ztv}, one can show that $\mathcal{E}_{X_{3,165}}$ is actually the cone generated by $\mathcal{E}_{V_{3,165}}$ as well as the autochthonous divisor $(6,2,-2)$ that is the image under the symmetry of the prime toric divisor $[D_6]$. In particular, then, the non-inherited region of the effective cone $\mathcal{E}_{X_{3,165}} \setminus \mathcal{E}_{V_{3,165}}$ contained the image of the family of holes we have just identified under the automorphism. These are our first example of holes outside of the toric effective cone $\mathcal{E}_V$.

Therefore, in this example, we can see holes can start to appear in the interior. In the following example, we will see that as the non-trivial Hilbert basis is in the interior, the semigroup of holes will then be entirely in the interior of the toric effective cone.

\begin{table}
    \centering
    \begin{tabular}{c|c|c}
    ID & $(h^{1,1},h^{2,1})$ & $[D]$ \\
    \hline
    851 & (4,94) & $(1, 0, -2, -1)$\\
    1056 & (4,112) & $(-1, 1, 1, 0), (5, -1, -1, 2)$\\
    1059 & (4,112) & $(-1, 1, 1, 0)$\\
    1061 & (4,112) & $(-1, 1, 1, 0)$\\
    \end{tabular}
    \caption{Non-trivial Hilbert basis elements in the Kreuzer--Skarke database at $h^{1,1}=4$ which are interior to the toric effective cone. These are holes as a consequence of \cref{prop:big}, which we verify with an explicitly computation using the methods of \cref{sec:direct_comp}. All are non-generic in complex structure.
    }
    \label{tab:h11=4}
\end{table}

\subsubsection{Example 4: $h^{1,1}=4$.} \label{sec:h11=4} 

At $h^{1,1}=4$, we encounter our first example in the Kreuzer--Skarke database where there is a non-trivial Hilbert basis element in the strict interior of $\mathcal{E}_V$ (in the previous example, the interior holes were not themselves non-trivial Hilbert basis elements). The Calabi--Yau threefolds with $h^{1,1}=4$ that exhibit this property have been listed in \cref{tab:h11=4}. Such divisors are guaranteed to be holes by \cref{prop:big}.

Let us consider the first example in \cref{tab:h11=4}. Here, the Calabi--Yau threefold obtained as a hypersurface in a toric variety from an FRST of the reflexive polytope with lattice points
\begin{equation}
    \begin{pmatrix}
        1&0&1&-3&-1&-1&0&0\\
        0&0&1&-2&1&-1&0&1\\
        0&0&2&0&-1&0&1&0\\
        0&1&-1&0&0&0&0&0
    \end{pmatrix}\,.
\end{equation}
We can choose the following GLSM charge matrix for this toric fourfold.
\begin{equation}
\label{eq:x494 GLSM}
\begin{pmatrix}
    1&0&0&0&0&1&0&1\\
    0&1&1&0&0&1&-2&0\\
    0&0&0&1&0&-3&0&-1\\
    0&0&0&0&1&-1&1&-2
\end{pmatrix}\,.
\end{equation}
This Calabi--Yau threefold has the following Hodge numbers $(h^{1,1},h^{1,2})=(4,94)$. Therefore, similar to before, we will refer to it as $X_{4,94}$, with its ambient space being $V_{4,94}$. 

Now, the effective cone of divisors has extremal ray generators given by the columns of the following matrix
\begin{equation}
    \mathcal{E}_{V_{4,94}} = \mathrm{cone}
    \begin{pmatrix}
        0&0&0&1&1\\
        -2&0&1&0&1\\
        0&1&0&-1&-3\\
        1&0&0&-2&-1
    \end{pmatrix}\,.
\end{equation}
The Hilbert basis of this effective cone is
\begin{equation}
    \mathcal{H}(\mathcal{E}_{V_{4,94}}) =\left\{
    \begin{pmatrix}
        0\\
        -2\\
        0\\
        1
    \end{pmatrix}\,,
    \begin{pmatrix}
        0\\
        0\\
        1\\
        0
    \end{pmatrix}\,,
    \begin{pmatrix}
        0\\
        1\\
        0\\
        0
    \end{pmatrix}\,,
    \begin{pmatrix}
        1\\
        0\\
        -2\\
        -1
    \end{pmatrix}\,,
    \begin{pmatrix}
        1\\
        0\\
        -1\\
        -2
    \end{pmatrix}\,,
    \begin{pmatrix}
        1\\
        1\\
        -3\\
        -1
    \end{pmatrix}
    \right\}\,.
\end{equation}
Similar to before, among the above Hilbert basis elements, we have the non-trivial Hilbert basis element $[D] = (1,0,-2,-1)$ which cannot be generated by the non-negative linear integral combinations of prime toric divisors. Indeed, the zero cohomology associated to this additional generator vanishes
\begin{equation}
    h^0(X_{4,94},\mathcal{O}_{X_{4,94}}(D))=0\,,
\end{equation}
implying its non-effectiveness. Furthermore, within $X_{4,94}$, the self-intersection of this non-trivial Hilbert basis element is $-23$. We can also compute the associated supersymmetric index (the number of real moduli of the divisor~\cite{Maldacena:1997de}) which is $d_p=-2$. Lastly, its Euler characteristic and signature are $\chi=23$ and $\sigma=-23$, respectively. Therefore, this non-trivial Hilbert basis element is indeed not nef.

From here, we can also find a plethora of additional non-effective divisors. However, the crucial difference is that for all the found holes in this example, the non-effective divisors lies strictly in the interior of the effective cone. Similar to before, the holes form a codimension-1 sub-lattice within the effective cone. This sub-lattice of non-effective divisors are generated by linear positive integral combinations of the following three lattice sites (starting with the additional generator appearing in the Hilbert basis as the origin of the sub-lattice)
\begin{equation}
\label{eq:h114-non-effective}
    [D_{n,m,\ell}] = [D] +
    n\begin{pmatrix}
        0\\
        -2\\
        0\\
        1
    \end{pmatrix}
    +m\begin{pmatrix}
        1\\
        0\\
        -1\\
        -2
    \end{pmatrix}
    +\ell\begin{pmatrix}
        1\\
        1\\
        -3\\
        -1
    \end{pmatrix}\,,
\end{equation}
where $n,m,l\in\mathbb{Z}_{\geq 0}$. Furthermore, as the above generators in \eqref{eq:h114-non-effective} all lie on the boundary of the effective cone, upon the ``shift'' by $[D]$, we have every site that lies within this semigroup is indeed strictly in the interior of the effective cone.

It is useful to understand this example too from the perspective of \cref{prop:birational_preserve}. In particular, let $[D]$ be the hole descending from $[\hat{D}]$: a $[\hat{D}]$-minimal model for the ambient variety is achieved by successively contracting the three prime toric divisors given in our basis as $(0, -2, 0, 1)$, $(1, 0, -1, -2)$, and $(1, 1, -3, -1)$. The composition of these divisorial contractions results in the $[\hat{D}]$-minimal model $\mathbb{P}_{2,2,3,4,11}$, the singular four-dimensional toric weighted projective space we introduced in \cref{sec:friends}. The divisorial contractions of the hypersurface are also inherited from the toric ones, and result in an anticanonical hypersurface $X'$ in $\mathbb{P}_{2,2,3,4,11}$. In this way, we have implemented the mechanism discussed in our second example from \cref{sec:friends} in an explicit smooth Calabi--Yau threefold $X_{4,94}$. In particular, in light of \cref{prop:birational_preserve}, this hole can be thought of as the result of having a singular birationally related geometry $X'$ which has a nef hole: in this case, an ample hole, namely the restriction of the hyperplane class from $\mathbb{P}_{2,2,3,4,11}$ which is not effective (there is no homogeneous coordinate with that degree). That is, the semigroup of holes which feature in this example all pushforward to the non-effective, nef divisor on $X'$. This is yet another explicit realization of \cref{cor:semigroup}: because the three contracted prime toric divisors lie in the stable base locus of $[D]$ (as well as the kernel of the pushforward of divisor classes to $X'$), the number of global sections of $[D]$ plus any non-negative linear combination of these exceptional prime toric divisors will be the same as the number of global sections as the hole $D$ itself. This explains why this codimension-one semigroup of divisors all constitute holes. 

\subsection{Statistics of holes when $h^{1,1} \leq 17$}

\begin{table}
    \centering
    \begin{tabular}{c|c|c}
    ID & $(h^{1,1},h^{2,1})$ & $[D]$ \\ 
    \hline
    883 & (5,57) & $(-3, 2, -2, 1, -1)$ \\ 
    3004 & (5,77) & $(1, -2, -2, 0, -1)$ \\ 
    3024 & (5,77) & $(1, 0, -2, -1, -1)$ \\ 
    4193 & (5,95) & $(1, -1, -2, -1, 1)$ \\ 
    \end{tabular}
    \caption{Non-trivial Hilbert basis elements in the Kreuzer--Skarke database at $h^{1,1}=5$. These are holes as a consequence of \cref{prop:big}, which we verify with an explicitly computation using the methods of \cref{sec:direct_comp}. All are non-generic in complex structure.
    }
    \label{tab:h11=5}
\end{table}

\begin{table}
    \centering
    \begin{tabular}{c|c|c}
    ID & $(h^{1,1},h^{2,1})$ & $[D]$ \\
    \hline
    354 & (6,42) & $(1, 1, -1, 1, 1, 1)$ \\ 
    3178 & (6,52) & $(1, -2, -3, -2, -2, -1)$ \\ 
    6636 & (6,60) & $(1, 0, -2, -1, -2, -1)$ \\
    7120 & (6,60) & $(1, -2, -1, -1, 1, -2)$ \\ 
    8126 & (6,64) & $(1, 0, -2, -2, -1, -1)$ \\ 
    8573 & (6,66) & $(-1, 0, 5, 2, 3, 2)$ \\
    9105 & (6,66) & $(1, -2, -2, 0, -2, -1)$ \\
    9129 & (6,60) & $(-1, 1, 1, 0, 1, 0)$ \\
    12486 & (6,78) & $(1, -1, -2, -1, -1, 1)$ \\ 
    \end{tabular}
    \caption{Non-trivial Hilbert basis elements in the Kreuzer--Skarke database at $h^{1,1}=6$. In particular, These are holes as a consequence of \cref{prop:big}, which we verify with an explicitly computation using the methods of \cref{sec:direct_comp}. All are non-generic in complex structure.
    }
    \label{tab:h11=6}
\end{table}

\begin{figure}
    \centering
    \includegraphics[width=0.5\linewidth]{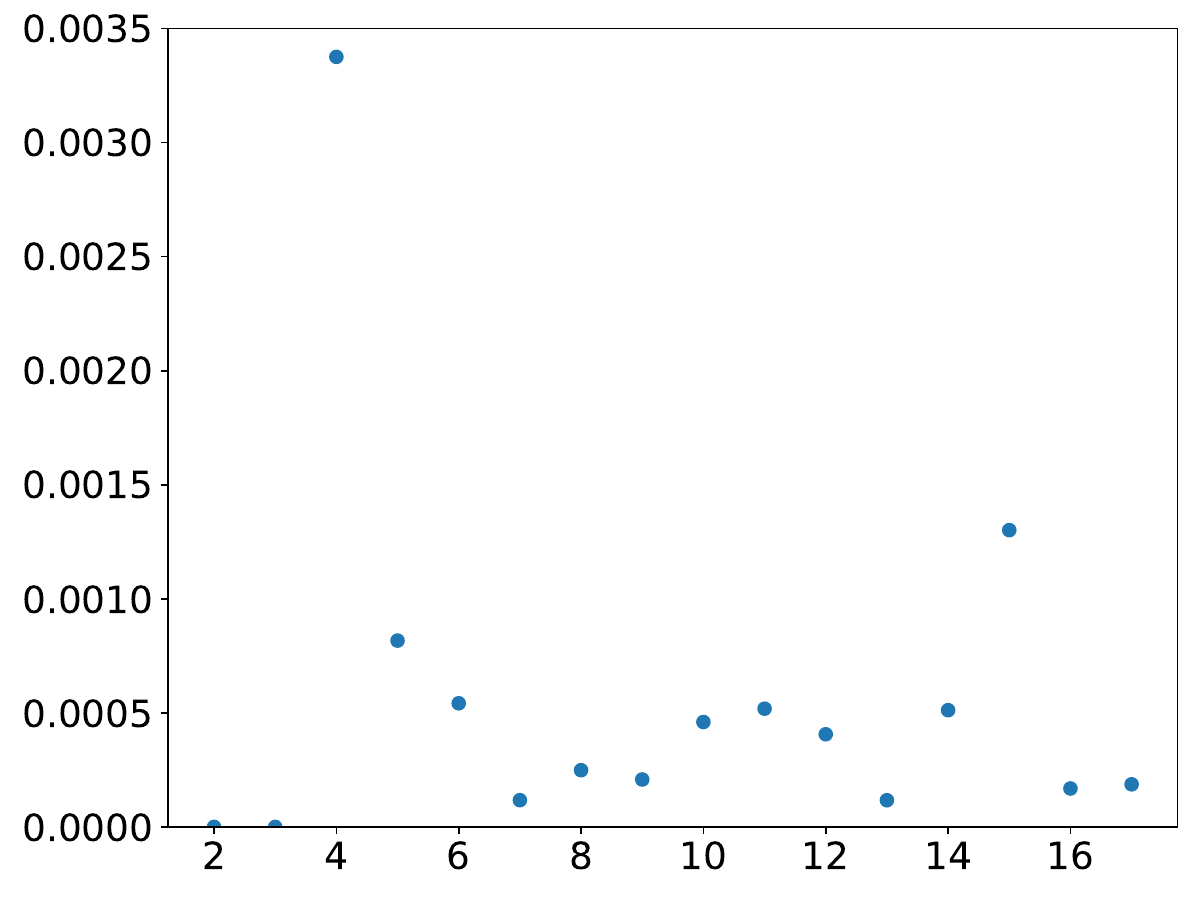}
    \begin{picture}(0,0)\vspace*{-1.2cm}
    \put(-110,-10){\footnotesize $h^{1,1}$}
    \put(-250,65){\rotatebox{90}{\footnotesize Fraction}}
    \end{picture}\vspace{0.5cm}
    \caption{The fraction of birational classes of toric hypersurface Calabi--Yau threefolds with a non-trivial Hilbert basis element in the strict interior of the toric effective cone up to $h^{1,1}=17$.}
    \label{fig:interior fraction}
\end{figure}

We have conducted a complete classification of non-trivial Hilbert basis elements appearing in the (toric) effective cone of Calabi--Yau threefolds with $h^{1,1}\leq 6$ constructed from the Kreuzer--Skarke database. There is only one geometry within the database at $h^{1,1}=2$ with a hole and this has been analyzed in \cref{sec:h11=2}. We presented the entire list of Calabi--Yau threefolds in the Kreuzer--Skarke database with holes at $h^{1,1}=3$ in \cref{tab:h11=3}. For $h^{1,1}=4$, we have a total of 88 Calabi--Yau threefolds with holes in the database. Instead of exhaustively listing all 88 examples, we have chosen to list only those with holes in the strict interior of the toric effective cone, which only occurs $h^{1,1}\geq 4$ in the Kreuzer--Skarke database. This is shown in \cref{tab:h11=4}. We can continue to push our algorithm to perform searches within the database and have done so up to $h^{1,1}=6$. The results for $h^{1,1}=5$ and $h^{1,1}=6$ are shown in \cref{tab:h11=5} and \cref{tab:h11=6}, respectively.

As the number of distinct polytopes increases drastically beyond $h^{1,1}=6$, we randomly sample $\sim 10^4$ polytopes for a given $h^{1,1}>6$ from the Kreuzer--Skarke database in \verb+CYTools+~\cite{Demirtas:2022hqf} and analyze the number of birational classes of Calabi--Yau threefolds that have a non-trivial Hilbert basis element in the interior of the toric effective cone. The fraction of such birational classes with this property among the $\sim 10^4$ sampled birational classes for a given $h^{1,1}$ up to 15 are shown in \cref{fig:interior fraction}. Consistent with the classification and observations we performed at small $h^{1,1}$ ($h^{1,1}\leq 6$), the fraction of these birational classes with this behavior is small compared to the more general cases where these additional Hilbert basis elements lie on the boundary of the toric effective cone (cf. \cref{fig:additional elements}). Nevertheless, all such additional divisors are non-effective in both the toric variety and the Calabi--Yau threefold. We would like to highlight that our algorithm for determining the effectiveness of divisors remains efficiently functional even at $h^{1,1}\sim \cO(10)$. This is largely due to the observation that all holes we encounter exhibit the property $H^1(V,\cO_V(\hat{D}+K_V))=0$. Hence, the only relevant piece is determining if $h^0(V,\cO_V(\hat{D}))\neq 0$, or in other words if a solution to $Q_V\cdot a=[\hat{D}]$ exist. 

For every hole we have examined (divisor classes with components less than $10$ in the chosen basis) appearing in Calabi--Yau threefolds within the Kreuzer--Skarke database with $h^{1,1}\leq 6$, we have found all of them to be non-effective. For those on the interior of $\mathcal{E}_V$, this is a verification of \cref{prop:big}, while for those on the boundary of $\mathcal{E}_V$ this is an important experimental result.

At last, as shown in \cref{sec:friends}, all holes come in semigroups. An interesting property of all the families is that they are mostly of codimension one in the toric effective cone. In particular, all such families of holes in the toric effective cones of Calabi--Yau threefolds with $h^{1,1}\leq 3$ are of codimension one, while for $h^{1,1}=4,5,6$, there are more than $90\%$ of birational classes of Calabi--Yau threefolds with codimension one semigroups of holes.
Additionally, for a codimenion-$n$ sublattice of holes, the non-holomorphic divisor classes can always be expressed in terms the following generators
\begin{equation}
    [D] + \sum_{i=1}^{h^{1,1}-n} a_i [D_i] \,,
\end{equation}
where $[D]$ indicates the non-trivial Hilbert basis element of the toric effective cone while the $[D_i]$'s indicate prime toric divisors. In light of \cref{cor:general_sublattice} and \cref{cor:semigroup}, it is natural to guess that at least some of the generators of a semigroup of holes with base $D$ will be a subset of the prime toric divisor classes. This is because such generators are furnished by the exceptional divisors of the birational map from the original Calabi--Yau $X$ to its $D$-minimal model, and often (though not always) these maps are restrictions of maps on the ambient space. It then suffices to note that the exceptional divisors of a toric birational map are always prime toric divisors $\hat{D}_i$. Thus, hole semigroup generators will be furnished by prime toric divisor classes in cases when the Calabi--Yau birational geometry is inherited from the ambient toric variety. The semigroups are predicted to be codimension one when there are $h^{1,1} - 1$ prime divisors in the stable base locus of $[D]$, or passing to the $[D]$-minimal model contracts $h^{1,1} - 1$ prime divisors (resulting in a variety with $h^{1,1} = 1$). However, as we saw in \cref{sec:h113 interior}, semigroups can appear to be larger when there are families of nef holes on a birational model, and this effect likely contributes in many cases.

\section{Bounds on Volumes of Holes} \label{sec:vol_bounds}
In this section, we will explain how to place upper and lower bounds on the volumes of non-holomorphic divisor classes (defined below), with specific attention towards volumes of holes. We will then give examples of explicit geometries where we implement these bounds, providing robust inequalities on the volume of a hole as a function of the K\"ahler moduli.

Consider a non-holomorphic divisor class $[D]$. An important physical quantity is $\text{vol}([D]) := \text{vol}(D_{\text{min}})$, the volume of the volume-minimizing representative $D_{\text{min}}$ of $[D]$. A formula for $\text{vol}(D_{\text{min}})$ can be simply stated: 
\begin{align}
    \text{vol}(D_{\text{min}}) = \int_{D_{\text{min}}} \sqrt{\text{det}(g_{D_{\text{min}}})}\,,
    \label{eq:metricvol}
\end{align}
where $g_D$ is the induced metric on $D$. In practice, evaluating this expression is infeasible: it requires knowing both the Calabi--Yau metric, as well as the volume-minimizing representative $D$, which requires optimization of the expression \eqref{eq:metricvol} over different representatives of $[D]$. We refer to $\text{vol}(D_{\text{min}})$ as the \textit{volume} of $[D]$. Note that this terminology is in concord with \eqref{eq:calibrated_vols}, our earlier expression for the volume of a holomorphic divisor class.

Although $\text{vol}(D_{\text{min}})$ is not computable in practice, we can bound this quantity from above and below. A lower bound is given by
\begin{align}
    \text{vol}(D_{\text{min}}) \geq \frac{1}{2} \int_{[D]} J \wedge J \,,
\end{align}
i.e., the volume of the minimum-volume representative of any non-holomorphic divisor class is at least as big as the calibrated volume. This has the physics interpretation of a BPS bound: the tension of a non-BPS string coming from wrapping an M5-brane on $[D]$ is at least as big as the would-be tension of a BPS object with the same charge (i.e., divisor class). Note that this bound has varying utility depending on the charge of the non-holomorphic divisor: for some charges, this bound can even be trivial.

Additionally, an upper bound on $\text{vol}(D_{\text{min}})$ is provided by any representative of $[D]$ whose volume is known. A straightforward set of such representatives is known as the set of \textit{piecewise-calibrated representatives}, introduced in~\cite{Demirtas:2019lfi}, and defined as follows:
\begin{align}
    D_{p.c.} = \sum_{I} c_I D_I \ \ \ \text{with} \ \ \ [D_{p.c.}] = [D]\,,
\end{align}
where $D_I$ are holomorphic divisors and the $c_I$ are integers (positive or negative). Intuitively, this means that any divisor can be written as a union of holomorphic and anti-holomorphic components. 

Since the volumes of the holomorphic divisor classes are known, the volume of the piecewise-calibrated representative $D_{p.c.}$ is
\begin{align}
    \text{vol}(D_{p.c.}) = \frac{1}{2}\sum_{I} |c_I| \int_{[D_I]} J \wedge J\,.
\end{align}
For any choices of $c_I$ such that $[D_{p.c.}] = [D]$, this provides an upper bound on the volume of $D_{\text{min}}$:
\begin{align}
    \text{vol}(D_{\text{min}}) \leq \text{vol}(D_{p.c.})\,.
\end{align}
Here, the upper bound can be understood as imposing energy conservation in this BPS and anti-BPS system. The inequality is saturated when the resulting object in the lower-dimensional theory can decay entirely into widely separated BPS and anti-BPS components, or when the non-BPS string is stable.

The \textit{best} upper-bound is then found by identifying the piecewise-calibrated representative with the smallest volume. We refer to this minimum-volume piecewise-calibrated representative as $D_{\text{p.c.}}^{\text{min}}$. For a general non-effective divisor class $[D]$, optimizing the decomposition $[D] = [A] - [B]$ such that $[A]$ and $[B]$ are effective and $\text{vol}([A]) + \text{vol}([B])$ is minimized is difficult. In our case, if $[B]$ lies in the effective cone, so does $[A] = [D] + [B]$, so we are really optimizing $[B]$ in the effective cone such that neither $[B]$ nor $[D] + [B]$ is a hole. The best upper bound will then be
\begin{equation}
    \label{eq:upper_bound_theory}
    \text{vol}([D]) \leq \text{vol}(D_{\text{p.c.}}^{\text{min}}) = \text{vol}([D] + [B]) + \text{vol}([B]) = \frac{1}{2} \int_{[D]} J \wedge J + 2 \, \text{vol}([B])\,,
\end{equation}
for $[B]$ the effective divisor class with the smallest volume such that $[D] + [B]$ is effective. In particular, our piecewise calibrated divisor is $D_{\text{p.c.}} = A - B$ for $A, B$ being any holomorphic representatives of $[D] - [B]$ and $[B]$, respectively. Equivalently, we are taking the union of the holomorphic divisor $A$ and antiholomorphic divisor $-B$. We see that $\text{vol}(D_{\text{p.c.}}^{\text{min}})$ is bounded exactly in the limit $\text{vol}([B]) \to 0$, meaning the strongest bounds arise when we can take $[B]$ to be a class which shrinks at a boundary of the extended K\"ahler cone (e.g., a generator of an extremal ray of the effective cone).

We will algorithmically compute upper bounds for volumes of holes $[D]$ as follows. We first construct a set $\mathcal{B}$ of effective divisor classes $[B]$ which yield the smallest upper bounds when substituted into \cref{eq:upper_bound_theory}. We recall from the beginning of \cref{sec:toric_hyper_holes} that effective divisors outside the Hilbert basis $\mathcal{H}(\mathcal{E}_X)$ of $\mathcal{E}_X$ must have larger volume than the calibrated volume of at least one Hilbert basis element. This motivates us to construct $\mathcal{B}$ from Hilbert basis elements, which we do as follows.
\begin{equation}
    \label{eq:set_of_small_div}
    \begin{aligned}
        \mathcal{B} &= \Big\{ [B] \; \Big| \; [B] \in \mathcal{H}(\mathcal{E}_X) \text{ such that } [B], [D] + [B] \text{ effective} \Big\}\,.
    \end{aligned}
\end{equation}
While we worked only with $\mathcal{E}_V$ in earlier sections in order to perform calculations across many geometries, we will now compute the true effective cones $\mathcal{E}_X$ as we will only present two explicit examples in this section.

We note that one could imagine, for example, that Hilbert basis elements $[B_1], [B_2]$ could not individually satisfy the criteria above to belong to $\mathcal{B}$ while some non-negative linear combination of the two does satisfy the criteria and thus could give rise to a better upper bound than other individual Hilbert basis elements. However, for simplicity, and because this is a proof of concept, we will not investigate such a possibility. With $\mathcal{B}$ in hand, we bound the volume of a hole $[D]$ from above by applying \cref{eq:upper_bound_theory} to each $[B]$ in $\mathcal{B}$. 

Summarizing these upper and lower bounds, we have that given a non-holomorphic divisor class $[D]$, the volume of its minimum-volume representative $D_{\text{min}}$ obeys
\begin{align}
    \frac{1}{2} \int_{[D]} J \wedge J\leq \text{vol}(D_{\text{min}}) \leq \text{vol}(D_{\text{p.c.}}^{\text{min}}).
\end{align}

\subsection{Revisiting Example \hyperref[sec:h11=2]{1}}

Let's consider the geometry $X_{2,106}$ from \cref{sec:h11=2}. The K\"ahler cone has a single chamber, with divisors shrinking on either boundary. Recall that this geometry has a hole with charge
\begin{align}
    [D] = \begin{pmatrix}
        1 \\ -1
    \end{pmatrix}
\end{align}
in the basis fixed by the GLSM charge matrix shown in \cref{eq:X2106 GLSM}. Specifically, the K\"ahler cone is generated by the columns of the following matrix,
\begin{align}
    \label{eq:kahler_cone}
    \begin{pmatrix}
        1 & 3 \\
        0 & 1
    \end{pmatrix}\,.
\end{align}
$\mathcal{E}_X = \mathcal{E}_V$ for this geometry, and the Hilbert basis of $\mathcal{E}_X$ includes the classes
\begin{equation}
    [D_5] = \begin{pmatrix}
        0 \\ 1
    \end{pmatrix}\,,
    \qquad \text{and} \qquad 
    [D] = \begin{pmatrix}
        1 \\ -1
    \end{pmatrix}.
\end{equation}
The effective classes $[D_6] = 2[D]$ and $[D_5]$ shrink to zero volume along the walls of the K\"ahler cone generated by the first and second columns of \cref{eq:kahler_cone}, respectively; ideally, we would use these classes to impose tight bounds on the volume of $[D]$. The shrinking of these divisors is in direct correspondence to the fact that the birational map associated to these walls contract the divisor classes $[D_6]$ and $[D_5]$, respectively.

We can bound $\text{vol}([D])$ from below using its calibrated volume,
\begin{equation}
    \label{eq:h11_2_lower}
    \int_{[D]} J \wedge J = 2t_2(2t_1 - 3t_2) \leq \text{vol}([D])\,.
\end{equation}

To bound the volume from above, we first construct the set $\mathcal{B}$ of candidate classes to employ for the piecewise calibrated representative. One can show $[D_5]$ satisfies the criteria of \cref{eq:set_of_small_div}, but $[D]$ of course does not.\footnote{One might have additionally tried to let $[B]$ be some effective multiple of $[D]$, but then $[D] + [B]$ isn't effective.} Unfortunately, then, we will not be able to exploit that $[D_6] = 2[D]$ shrinks to place a tight bound on $\text{vol}([D])$. However, we still have the unique entry of $\mathcal{B} = \{[D_5]\}$ at our disposal, which induces the piecewise representative $D_{\text{p.c.}} = D_1 - D_5$.

\begin{figure}[tp!]
    \centering
    \includegraphics[width=\linewidth]{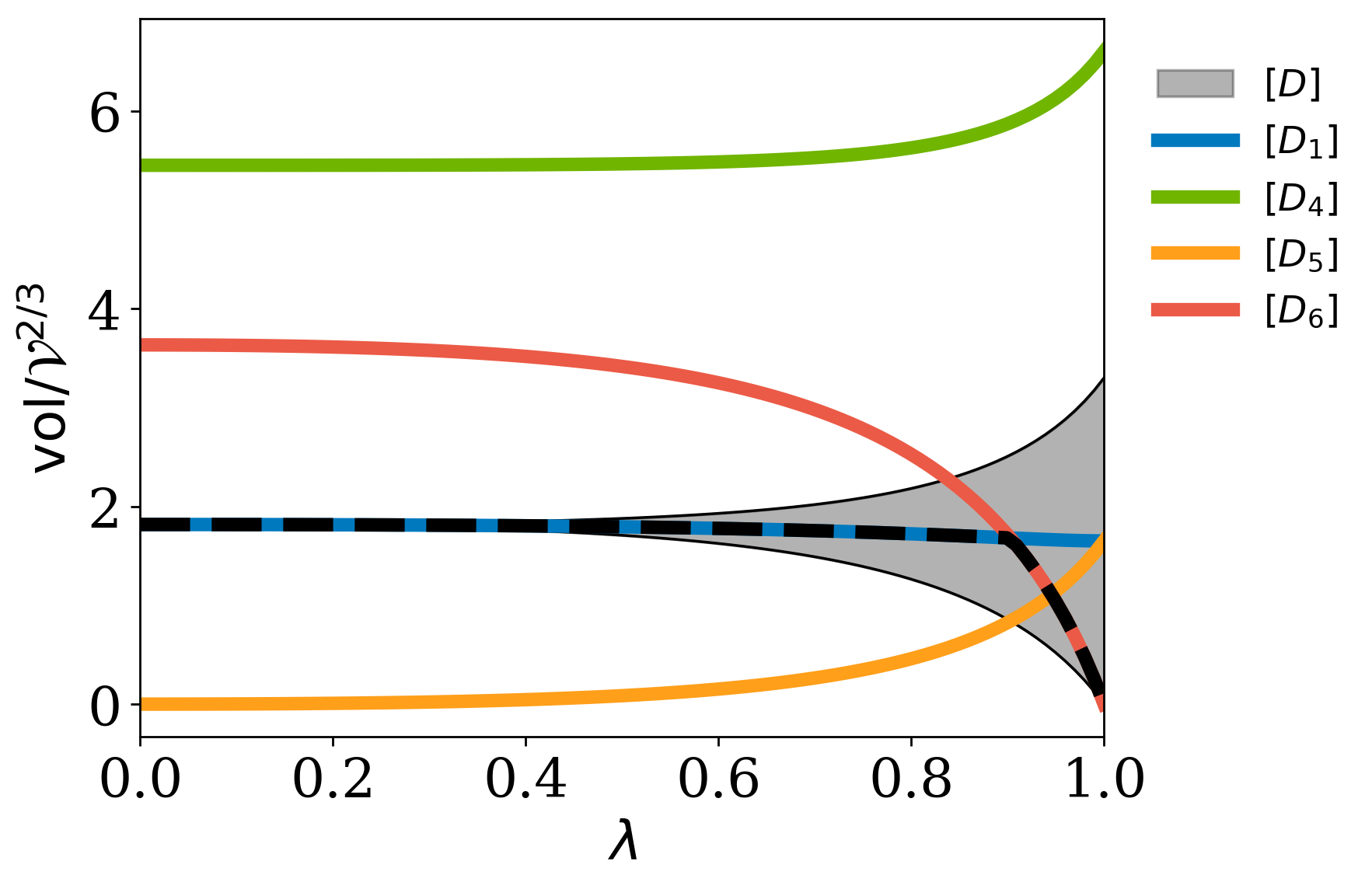}
    \caption{Bounds on the volumes of non-holomorphic cycles representing the hole presented in \cref{sec:h11=2}, compared to the volumes of prime toric divisors. Here, $D_i$ denotes the $i$th prime toric divisor in the ordering set by the columns in the GLSM from~\cref{eq:X2106 GLSM}. The volume $\mathrm{vol}([D])$ of the hole must fall below the dashed line in order for $[D]$ to yield a leading instanton contribution along some direction in the charge lattice. Bounds and volumes are plotted as a function of $\lambda$ parameterizing a 1-dimensional slice in the K\"ahler cone following \cref{eq:h11=2 line param}.}
    \label{fig:volbounds}
\end{figure}

In particular, following \cref{eq:upper_bound_theory}, the non-holomorphic divisor class $[D]$ is bounded from above as
\begin{equation}
    \label{eq:piece wise tension bounds}
    \begin{aligned}
        \text{vol}([D]) 
        &\leq \int_{[D]} J \wedge J + 2\text{vol}([D_5]) = \Big(\text{vol}([D_5]) + \text{vol}([D_6])\Big) + \text{vol}([D_1]) \\
    \end{aligned}
\end{equation}
Along the wall where $D_5$ shrinks to zero volume, we resolve $[D]$ exactly as the second upper bound collapses to the BPS lower bound.

To visualize these bounds easily as a function of the K\"ahler moduli, in \cref{fig:volbounds} we consider a one-dimensional line segment spanning between the two facets of the K\"ahler cone. We parameterize this slice as 
\begin{align}\label{eq:h11=2 line param}
    (1-\lambda) \vec{t}_1 + \lambda \vec{t}_2\,,
\end{align}
where $\lambda \in [0,1]$ and $\vec{t}_1, \vec{t}_2$ are the two columns of \cref{eq:kahler_cone}. We plot the bounds on $\text{vol}([D])$ as well as the volumes of the prime toric divisors. The figure shows the expected features: for $\lambda = 0$, where $[D_5]$ shrinks, the bound is sharp, while at $\lambda = 1$ --- where $[D_5]$ has finite volume --- the bound is weaker. Additionally, imagining these divisors $[D]$ as making contributions to a non-perturbative potential with magnitude $\sim \exp(-\text{vol}([D]))$ along the direction $[D]$, we plot the largest volume $[D]$ could have such that there exists a direction in $H_4(X)$ along which $[D]$ is the leading contribution. In particular, we find that the hole has the capacity to be a leading contribution everywhere along the slice. 

An interesting phenomenon in the spectrum of the resulting lower-dimensional quantum gravity theory emerges as we take the limit where $[D_5]$ shrinks. Let us consider the case of M-theory compactified on $X_{2,106}$. The prepotential in the resulting 5d $\mathcal{N}=1$ quantum gravity theory takes on the following form
\begin{equation}
    \mathcal{F}=\frac{1}{6}\left(t_1^3+t_1^2t_2-3t_1t_2^2+3t_2^3\right)\,.
\end{equation}
Here, the K\"ahler cone constraints are
\begin{equation}
    t_2\geq 0\,,\qquad  t_1-3t_2\geq 0\,.
\end{equation}
We can further restrict to the constant volume slice in the K\"ahler cone.
From this, we can see that both $[D_5]$ and $[D_6]$ shrink to zero volume at finite distance in this codimension 1 moduli space. These correspond to an SCFT wall and an $SU(2)$ boundary, respectively. Note that there are no additional phases associated to this Calabi--Yau threefold, therefore the vector moduli space of the 5d $\cN=1$ quantum gravity theory is finite.

For higher energy levels in the spectrum of this theory, we observe instead a degeneracy as we approach this limit, namely
\begin{equation}
    \lim_{t_2\to 6^{-1/3}} \mathrm{vol}\left(\begin{pmatrix}
        m\\
        0
    \end{pmatrix}+n\begin{pmatrix}
        0\\
        1
    \end{pmatrix}\right)=\lim_{t_2\to 6^{-1/3}}\mathrm{vol}\left(\begin{pmatrix}
        m\\
        0
    \end{pmatrix}\right)=6^{1/3}m\,.
\end{equation}
Here, while it is straightforward to see such a relation when the divisor class $(m, n)$ is effective, this relation remains true for those divisor classes that are not effective due to the inequalities \cref{eq:h11_2_lower,eq:piece wise tension bounds}. Furthermore, not all such divisors are rigid, hence these no longer have the interpretation of being a configuration of identical strings. 
However, how can it be the case that at a strongly-coupled point in the moduli space of quantum gravity we have an exact formula for the tension of infinitely many non-BPS strings? The resolution can be given in two ways: 1) the resulting non-supersymmetric configurations of strings in 5d are no longer bound states; 
or, 2) all of these string states (including those that are supersymmetric) have tensions much heavier than the species scale, which when approaching the SCFT wall takes on the following form~\cite{vandeHeisteeg:2023dlw}
\begin{equation}
    \lim_{t_2\to 6^{-1/3}}\Lambda_s\sim \frac{1}{4\cdot 6^{1/3}}\,.
\end{equation}
Therefore, while such an exact expression on the volume of the four-cycle that is the minimal volume representative of this hole is mathematically accurate in this example, we lose the physical interpretation of this exact volume being the exact tension for a non-BPS string.

\subsection{Revisiting Example \hyperref[sec:h11=4]{4}}

Now let us turn our attention to the geometry $X_{4,94}$ from \cref{sec:h11=4}. Recall that the toric effective cone $\mathcal{E}_{V_{4,94}}$ contains a lattice site in its Hilbert basis (along with a semigroup of sites) that is not effective. This divisor class, which is non-holomorphic, is given in the basis of \cref{eq:x494 GLSM} by
\begin{align}
    [D] = \begin{pmatrix}
        1 \\ 0\\-2 \\-1
    \end{pmatrix}.
\end{align}
The goal of this section is to determine the best possible bounds on the volume of the minimum-volume representative of $[D]$.

\begin{figure}[tp!]
    \centering
    \includegraphics[width=\linewidth]{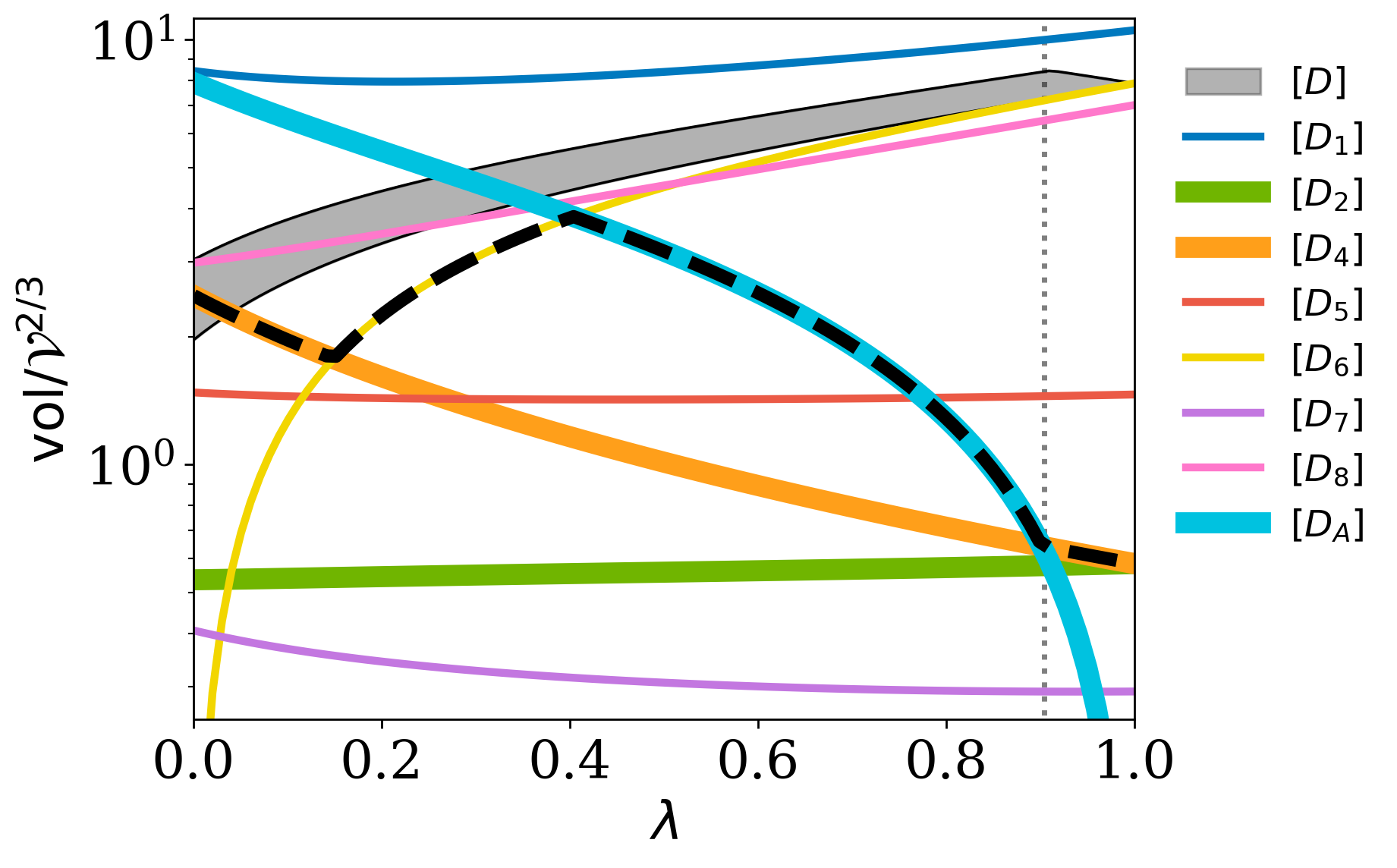}
    \caption{Bounds on the volumes of non-holomorphic cycles representing the hole presented in \cref{sec:h11=4}, compared to the volumes of prime toric divisors as well as that of the autochthonous elements of the Hilbert basis of $\mathcal{E}_X$. Here, $[D_i]$ denotes the $i$th prime toric divisor class in the ordering set by the columns in the GLSM from~\cref{eq:x494 GLSM}, while $[D_A]$ is the autochthonous divisor classes discussed in the text. The volume $\mathrm{vol}([D])$ of the hole must fall below the dashed line in order for $[D]$ to yield a leading instanton contribution along some direction in the charge lattice. 
    Thicker lines denote the divisor classes belonging to $\mathcal{B}$: i.e., those amenable to providing upper bounds on the volume of the hole. Gray vertical lines denote where the divisor class providing the best upper bound changes. Bounds and volumes are plotted as a function of $\lambda$ parameterizing a 1-dimensional slice in the K\"ahler cone following \cref{eq:h11=4 line param}.}
    \label{fig:volboundsh114}
\end{figure}

Using the methods outlined in~\cite{Gendler:2022ztv}, we can compute the extended K\"ahler cone $\mathcal{K}_{\text{ext}}$ of this geometry. In particular, for this example, the Calabi--Yau threefold admits a single non-symmetric flop transition. The fundamental domain of $\mathcal{K}_{\text{ext}}$ is thus comprised of two chambers, and has extremal rays generated by
\begin{align}
    \left\{\begin{pmatrix} 0 \\ 0 \\ 0 \\ 1 \end{pmatrix} , \begin{pmatrix}
        4 \\ 0 \\ -1 \\ -2
    \end{pmatrix}, \begin{pmatrix}
        0 \\ 1 \\ 0 \\ 0
    \end{pmatrix}, \begin{pmatrix}
        2 \\0 \\-2 \\ -1
    \end{pmatrix}, \begin{pmatrix}
        0 \\ 0 \\ 1 \\ 2
    \end{pmatrix}, \begin{pmatrix}
        2 \\ 2 \\ -1 \\-2
    \end{pmatrix}, \begin{pmatrix}
        1 \\1 \\ -1 \\-1
    \end{pmatrix}\right\}.
\end{align}

Along each of the facets of $\mathcal{K}_{\text{ext}}$, a divisor shrinks to zero volume. There are five such unique divisor classes that shrink:
\begin{align}
    [D_4] = \begin{pmatrix}
        0 \\ 0 \\1 \\0
    \end{pmatrix}\,, \ \ 
    [D_6] = \begin{pmatrix}
        1 \\ 1 \\ -3 \\-1
    \end{pmatrix}\,, \ \
    [D_7] = \begin{pmatrix}
        0 \\ -2 \\0 \\1
    \end{pmatrix}, \ \
    [D_8] = \begin{pmatrix}
        1 \\ 0 \\-1 \\-2
    \end{pmatrix}\,, \ \
    [D_A] = \begin{pmatrix}
        -1 \\ 0 \\ 3 \\ 6
    \end{pmatrix}\,. 
\end{align}
The first four are prime toric --- hence the notation --- while $[D_A]$ is autochthonous, and together with the prime toric divisor $[D_2]$ they generate $\mathcal{E}_X$. The Hilbert basis of $\mathcal{E}_X$ is then 
\begin{equation}
    \mathcal{H}(\mathcal{E}_X) = \{[D_2], [D_4], [D_6], [D_7], [D_8], [D], [D_A]\}\,,
\end{equation}
From these we can construct $\mathcal{B}$, which we find to be
\begin{equation}
    \mathcal{B} = \{[D_4], [D_5], [D_A]\}\,.
\end{equation}
This makes sense: $[D]$ clearly isn't effective, while we know from \cref{sec:h11=4} that adding any of $[D_6], [D_7], [D_8]$ to $[D]$ is also not effective. Thus, we have three divisor classes we can employ to achieve volume upper bounds. Of these, $[D_4]$ and $[D_A]$ generate extremal rays of the effective cone and shrink along a facet of the extended K\"ahler cone, and thus can be used to achieve sharp constraints on the volume of $[D]$.

In particular, we have the following constraints.
\begin{equation}
    \label{eq:bounds}
    \frac{1}{2}\int_{[D]} J \wedge J \leq \text{vol}([D]) \leq \frac{1}{2}\int_{[D]} J \wedge J + 2 \text{vol}([B])\,.
\end{equation}
Here, $[B]$ is any element of $\mathcal{B}$.

To demonstrate the utility of our bounds on the minimum volume representative of $[D]$, we again consider a one-dimensional linear slice through the extended $\mathcal{K}_{\text{ext}}$. Along this slice, we compute each of the four upper and lower bounds given by \cref{eq:bounds}. \cref{fig:volboundsh114} shows the allowed region for the volume of $[D]$, along with the volumes of the prime toric divisors, as well as the autochthonous divisor $[D_A]$ in the Hilbert basis of $\mathcal{E}_X$. Here, we have again normalized the divisor volumes by the overall volume of the Calabi--Yau, such that the quantities plotted are invariant under overall homogeneous rescalings of the K\"ahler moduli. This slice is parameterized as
\begin{align}\label{eq:h11=4 line param}
    (1-\lambda) \vec{t}_1 + \lambda \vec{t}_2\,,
\end{align}
where $\lambda \in [0,1]$ and $\vec{t}_1$ and $\vec{t}_2$ are K\"ahler parameters on the facets of the extended K\"ahler cone where the classes $[D_6]$ and $[D_A]$ shrink, respectively. Note that the most precise bound arises along the wall of $\mathcal{K}_{\text{ext}}$ where $[D_A]$ shrinks, because this class belongs to $\mathcal{B}$, while we lack a tight bound when $[D_6]$ shrinks. This figure is styled analogously to \cref{fig:volbounds}, with one additional feature: the element of $\mathcal{B}$ which sets the tightest upper bound varies as a function of $\lambda$, so we demarcate the crossover points with vertical gray dashed lines. In particular, from left to right, the best bound is set by $[D_2]$ and then $[D_A]$. We conclude by noting that in the limit $\lambda \to 0$, we see a regime where the volume of the hole could attain the value needed for one to expect it to provide a leading contribution.

\section{Conclusions and Future Directions} \label{sec:conclusions}

In this work, we have studied the holomorphicity of divisor classes in Calabi--Yau threefolds. This property is vital for determining the spectra of string compactifications, including corrections to four-dimensional compactifications of string theory. This analysis therefore has bearing on which compactifications may give rise to effective theories that resemble our own universe.

In particular, we have illustrated the existence of holes and analyzed their properties,
combining general mathematical results with empirical evidence gathered from a large number of toric Calabi--Yau threefolds. Important examples of holes are furnished by 
so-called non-trivial Hilbert basis elements in toric hypersurface Calabi--Yau effective cones. 
In \cref{fig:additional elements}, we showed that many Calabi--Yau threefolds from the Kreuzer--Skarke database contain such divisors, and furthermore that the number of these divisors can be extremely large, sometimes in the thousands, strongly motivating their study. 

We now summarize two major through lines of this work, both of which feature formal and empirical mathematical aspects.

First, we compared non-trivial Hilbert basis elements arising on the boundary and in the interior of $\mathcal{E}_V$ (i.e., descending from non-big and big ambient divisor classes, respectively). By applying the Kawamata--Viehweg vanishing theorem for klt pairs, in \cref{prop:big} we proved that every non-trivial Hilbert basis element that is interior to $\mathcal{E}_V$ is a hole. In \cref{prop:torsion}, we also related non-movable holes that belong not just to $\partial \mathcal{E}_V$ but to $\partial \mathcal{E}_X$ to the existence of class group torsion on a birational variety. After this, we analyzed all non-trivial Hilbert basis elements in an ensemble of Calabi--Yau threefolds: namely, all favorable birational classes in the Kreuzer--Skarke databse with $h^{1,1}\leq 6$ and $\sim 10^5$ such birational classes in the range of $6<h^{1,1} \leq 17$.
In \cref{fig:interior fraction}, we showed that the majority of such elements appear on the boundary of the toric effective cone, with statistics surprisingly independent of $h^{1,1}$. While \textit{a priori} boundary elements need not be holes, we empirically encountered in \cref{sec:ksholes} that within our ensemble, \textit{all} non-trivial Hilbert basis elements are holes. This has motivated us to pose \cref{conj:non_trivial_hb_are_holes},
which implies that none of these potentially dangerous divisors in practice ever contribute to the superpotential in type IIB Calabi--Yau orientifold compactifications --- they only can contribute to the K\"ahler potential.
In \cref{sec:vol_bounds}, we explained how to bound the volumes of the non-holomorphic cycles representing holes, thereby estimating their contribution to unprotected quantities such as the K\"ahler potential.

Second, we studied how holes can arise in families. Inspired by ideas from the minimal model program in algebraic geometry, in \cref{cor:semigroup} we demonstrated that non-movable holes in Calabi--Yau threefold effective cones come in semigroups (and that perhaps all holes are non-movable: see \cref{conj:nonbig_movable_CY}). We also explained in \cref{sec:friends} why we expect non-trivial Hilbert basis elements that descend from non-big divisors on the ambient variety to result in semigroups contained in the boundary of $\mathcal{E}_V$. More generally, in \cref{th:CY_MMP_holes} we showed how non-movable holes correspond to nef holes on singular, birationally related varieties: for example, nef divisors that are torsion (\cref{prop:torsion}). These expectations regarding semigroups were largely verified in our ensemble of toric hypersurfaces, but the ensemble also furnished novel structure. In particular, we found that holes could come in families larger than that naively predicted by \cref{cor:semigroup}. In particular, these families were most often codimension-one. However, in an example presented in \cref{sec:h113 interior}, we found a family of holes arising from a non-big non-trivial Hilbert basis element which was not on the boundary of $\mathcal{E}_V$ and instead populated a solid region in the effective cone of dimension $h^{1,1}$. We related this to the presence of countably many nef holes on birational contractions of the original Calabi--Yau threefold. An automorphism of the effective cone in this example also allowed us to identify the presence of holes in the non-inherited region $\mathcal{E}_X \setminus \mathcal{E}_V$ of the effective cone, alongside many autochthonous divisors.

There are many exciting implications of our present work to which we will outline three particular directions which would be interesting to explore.
\begin{itemize}
    \item \textbf{The weak gravity conjecture:} In~\cite{Gendler:2022ztv}, it was observed that all electric BPS particles engineered via M2-branes wrapping holomorphic curves in a Calabi--Yau threefold satisfy the weak gravity conjecture~\cite{Arkani-Hamed:2006emk} and variants thereof~\cite{Heidenreich:2015nta,Andriolo:2018lvp,Heidenreich:2016aqi}. Additionally, the weak gravity conjecture remains satisfied among non-BPS particles appearing in type IIB compactifications on Calabi--Yau threefolds~\cite{Gendler:2020dfp}. However, the lattice weak gravity conjecture is known to be violated, see e.g., \cite{Etheredge:2025rkn,Reece:2026hfk}. An analogous study of magnetic monopole strings in Calabi--Yau compactifications of M-theory has not been performed, and the structure of holes would directly interface with the magnetic weak gravity conjecture --- especially its refinements that make predictions about individual charges. 
    Such a study could fruitfully sharpen refinements of the weak gravity conjecture.
    \item \textbf{Completeness of spectrum:} It is hypothesized that consistent theories of quantum gravity must contain a complete spectrum of gauge charges~\cite{Polchinski:2003bq}. In recent years, increasingly compelling arguments and evidence for this hypothesis has been presented~\cite{Banks:2010zn,Harlow:2018tng,Rudelius:2020orz,Heidenreich:2021xpr,McNamara:2021cuo}. 
    Holes naively represent magnetic charges in M-theory without BPS representatives where one would've otherwise expected them, thereby clarifying what completeness of the BPS spectrum could look like in quantum gravity. In particular, it would be interesting to understand the fundamental object in quantum gravity that carries complete spectrum of charges in the context of Calabi--Yau compactifications.
    \item \textbf{Probing enumerative geometry:} With the results of our present work, we can efficiently determine whether for a given charge, the resulting configuration of magnetic monopole string is supersymmetric. However, this information is potentially too coarse for carrying out a detailed study along the aforementioned two directions. In particular, one would ideally count the number of BPS magnetic monopole string states arising for each charge, which should be captured by an enumerative geometric invariant for divisor classes, such as the Donaldson-Thomas invariant~\cite{maulik2006gromovi,maulik2006gromovii}, in a manner resembling the Gopakumar-Vafa invariant for curves~\cite{Gopakumar:1998jq} in M-theory compactifications. There is not a systematic theory of such an enumerative invariant for surfaces, but the invariant should at least vanish for holes in Calabi--Yau threefolds. Hence, our results on the structure of holes could inform future studies of BPS state counting and enumerative geometry in the context of M-theory compactifications on Calabi--Yau threefolds, in a manner similar to the conjectures of~\cite{Reece:2025zva}.
\end{itemize}

\noindent While the above list is certainly not exhaustive, we hope that by highlighting the connections our work has to quantum gravity and enumerative geometry, we may inspire further applications of divisor cohomology calculations.

\subsubsection*{Acknowledgements}

We are grateful to Sebastian Vander Ploeg Fallon, Daniel Halpern-Leistner, Liam McAllister, Jakob Moritz, Houri Tarazi, and Cumrun Vafa for helpful conversations. We thank Chen Jiang for guidance regarding the proof of \cref{th:cy_log_minimal_model}. This material is based upon work supported by the National Science Foundation Graduate Research Fellowship under Grant No. 2139899. The work of NG and DW is supported in part by a grant from the Simons Foundation (602883, CV) and the DellaPietra Foundation. The work of MS was supported by NSF DMS-2001367. ES is supported in part by NSF grant PHY-2309456. NG and DW thank the organizers of the 2024 and 2025 Summer Workshops at the Simons Center for Geometry and Physics for kind hospitality.

\bibliographystyle{JHEP}
\bibliography{refs}

@article{Demirtas:2019lfi,
    author = "Demirtas, Mehmet and Long, Cody and McAllister, Liam and Stillman, Mike",
    title = "{Minimal Surfaces and Weak Gravity}",
    eprint = "1906.08262",
    archivePrefix = "arXiv",
    primaryClass = "hep-th",
    doi = "10.1007/JHEP03(2020)021",
    journal = "JHEP",
    volume = "03",
    pages = "021",
    year = "2020"
}

@article{Gendler:2022ztv,
    author = "Gendler, Naomi and Heidenreich, Ben and McAllister, Liam and Moritz, Jakob and Rudelius, Tom",
    title = "{Moduli space reconstruction and Weak Gravity}",
    eprint = "2212.10573",
    archivePrefix = "arXiv",
    primaryClass = "hep-th",
    reportNumber = "ACFI-T22-10",
    doi = "10.1007/JHEP12(2023)134",
    journal = "JHEP",
    volume = "12",
    pages = "134",
    year = "2023"
}

@article{Demirtas:2018akl,
    author = "Demirtas, Mehmet and Long, Cody and McAllister, Liam and Stillman, Mike",
    title = "{The Kreuzer-Skarke Axiverse}",
    eprint = "1808.01282",
    archivePrefix = "arXiv",
    primaryClass = "hep-th",
    doi = "10.1007/JHEP04(2020)138",
    journal = "JHEP",
    volume = "04",
    pages = "138",
    year = "2020"
}

@Inbook{Hirzebruch1966,
    author="Hirzebruch, F.",
    title="The Riemann-Roch theorem for algebraic manifolds",
    bookTitle="Topological Methods in Algebraic Geometry",
    year="1966",
    publisher="Springer Berlin Heidelberg",
    address="Berlin, Heidelberg",
    pages="114--158",
    isbn="978-3-662-30697-0",
    doi="10.1007/978-3-662-30697-0_5",
    url="https://doi.org/10.1007/978-3-662-30697-0_5"
}

@incollection{miyaoka1987chern,
      title={The Chern classes and Kodaira dimension of a minimal variety},
      author={Miyaoka, Yoichi},
      booktitle={Algebraic geometry, Sendai, 1985},
      volume={10},
      pages={449--477},
      year={1987},
      publisher={Mathematical Society of Japan}
}

@article{Batyrev:2005jc,
    author = "Batyrev, Victor and Kreuzer, Maximilian",
    title = "{Integral cohomology and mirror symmetry for Calabi-Yau 3-folds}",
    eprint = "math/0505432",
    archivePrefix = "arXiv",
    month = "5",
    year = "2005"
}

@article{Kreuzer:2000xy,
    author = "Kreuzer, Maximilian and Skarke, Harald",
    title = "{Complete classification of reflexive polyhedra in four-dimensions}",
    eprint = "hep-th/0002240",
    archivePrefix = "arXiv",
    reportNumber = "HUB-EP-00-13, TUW-00-07",
    doi = "10.4310/ATMP.2000.v4.n6.a2",
    journal = "Adv. Theor. Math. Phys.",
    volume = "4",
    pages = "1209--1230",
    year = "2000"
}

@article{batyrev1993dualpolyhedramirrorsymmetry,
      title={Dual Polyhedra and Mirror Symmetry for Calabi-Yau Hypersurfaces in Toric Varieties}, 
      author={Victor V. Batyrev},
      year={1993},
      eprint={alg-geom/9310003},
      archivePrefix={arXiv},
      primaryClass={alg-geom},
      url={https://arxiv.org/abs/alg-geom/9310003}, 
}

@article{Long:2021lon,
    author = "Long, Cody and Sheshmani, Artan and Vafa, Cumrun and Yau, Shing-Tung",
    title = "{Non-Holomorphic Cycles and Non-BPS Black Branes}",
    eprint = "2104.06420",
    archivePrefix = "arXiv",
    primaryClass = "hep-th",
    reportNumber = "CIMP-D-21-00618R0",
    doi = "10.1007/s00220-022-04587-4",
    journal = "Commun. Math. Phys.",
    volume = "399",
    number = "3",
    pages = "1991--2043",
    year = "2023"
}

@article{Katz:2020ewz,
    author = "Katz, Sheldon and Kim, Hee-Cheol and Tarazi, Houri-Christina and Vafa, Cumrun",
    title = "{Swampland Constraints on 5d $\mathcal{N}=1$ Supergravity}",
    eprint = "2004.14401",
    archivePrefix = "arXiv",
    primaryClass = "hep-th",
    doi = "10.1007/JHEP07(2020)080",
    journal = "JHEP",
    volume = "07",
    pages = "080",
    year = "2020"
}

@article{Oguiso,
    author = {OGUISO, KEIJI},
    title = {ON ALGEBRAIC FIBER SPACE STRUCTURES ON A CALABI-YAU 3-FOLD},
    journal = {International Journal of Mathematics},
    volume = {04},
    number = {03},
    pages = {439-465},
    year = {1993},
    doi = {10.1142/S0129167X93000248},
    URL = {https://doi.org/10.1142/S0129167X93000248},
}

@article{Blumenhagen:2010pv,
    author = "Blumenhagen, Ralph and Jurke, Benjamin and Rahn, Thorsten and Roschy, Helmut",
    title = "{Cohomology of Line Bundles: A Computational Algorithm}",
    eprint = "1003.5217",
    archivePrefix = "arXiv",
    primaryClass = "hep-th",
    reportNumber = "MPP-2010-32, NSF-KITP-10-031",
    doi = "10.1063/1.3501132",
    journal = "J. Math. Phys.",
    volume = "51",
    pages = "103525",
    year = "2010"
}

@Misc{M2,
          author = {Grayson, Daniel R. and Stillman, Michael E.},
          title = {Macaulay2, a software system for research in algebraic geometry},
          howpublished = {Available at \url{http://www2.macaulay2.com}}
        }

@book{cls,
  title={Toric Varieties},
  author={Cox, D.A. and Little, J.B. and Schenck, H.K.},
  isbn={9780821884263},
  series={Graduate studies in mathematics},
  url={https://books.google.com/books?id=eXLGwYD4pmAC},
  year={2011},
  publisher={American Mathematical Soc.}
}

@book{lazarsfeld2017positivity,
  title={Positivity in algebraic geometry I: Classical setting: line bundles and linear series},
  author={Lazarsfeld, Robert K},
  volume={48},
  year={2017},
  publisher={Springer}
}

@article{tanakakawamata,
  title={Kawamata-Viehweg vanishing for toric varieties},
  author={Tanaka, Hiromu},
  eprint = "2208.09680",
  archivePrefix = "arXiv",
  journal={Pure and Applied Mathematics Quarterly},
  publisher={International Press of Boston},
  year={2025},
}

@article{Witten:1993yc,
    author = "Witten, Edward",
    editor = "Greene, B. and Yau, Shing-Tung",
    title = "{Phases of N=2 theories in two-dimensions}",
    eprint = "hep-th/9301042",
    archivePrefix = "arXiv",
    reportNumber = "IASSNS-HEP-93-3",
    doi = "10.1016/0550-3213(93)90033-L",
    journal = "Nucl. Phys. B",
    volume = "403",
    pages = "159--222",
    year = "1993"
}

@article{Maldacena:1997de,
    author = "Maldacena, Juan Martin and Strominger, Andrew and Witten, Edward",
    title = "{Black hole entropy in M theory}",
    eprint = "hep-th/9711053",
    archivePrefix = "arXiv",
    doi = "10.1088/1126-6708/1997/12/002",
    journal = "JHEP",
    volume = "12",
    pages = "002",
    year = "1997"
}

@article{Demirtas:2022hqf,
    author = "Demirtas, Mehmet and Rios-Tascon, Andres and McAllister, Liam",
    title = "{CYTools: A Software Package for Analyzing Calabi-Yau Manifolds}",
    eprint = "2211.03823",
    archivePrefix = "arXiv",
    primaryClass = "hep-th",
    month = "11",
    year = "2022"
}

@article{mustata_local_2000,
	title = {Local {Cohomology} at {Monomial} {Ideals}},
	volume = {29},
	issn = {0747-7171},
	url = {https://www.sciencedirect.com/science/article/pii/S0747717199903024},
	doi = {10.1006/jsco.1999.0302},
	number = {4},
	urldate = {2025-10-21},
	journal = {Journal of Symbolic Computation},
	author = {Musta\c{t}\u{a}, Mircea},
    eprint = "math/0001153",
    archivePrefix = "arXiv",
	month = may,
	year = {2000},
	pages = {709--720},
}

@article{eisenbud_coho,
author = {Eisenbud, David and Musta\c{t}\u{a}, Mircea and Stillman, Mike},
title = {Cohomology on toric varieties and local cohomology with monomial supports},
eprint = "math/0001159",
archivePrefix = "arXiv",
year = {2000},
issue_date = {April/May 2000},
publisher = {Academic Press, Inc.},
address = {USA},
volume = {29},
number = {4–5},
issn = {0747-7171},
url = {https://doi.org/10.1006/jsco.1999.0326},
doi = {10.1006/jsco.1999.0326},
journal = {J. Symb. Comput.},
month = apr,
pages = {583–600},
numpages = {18}
}

@article{maclagan_multigraded_2004,
	title = {Multigraded {Castelnuovo}-{Mumford} regularity},
    author = {Maclagan, Diane and Smith, Gregory G.},
    eprint = "math/0305214",
    archivePrefix = "arXiv",
	volume = {2004},
	url = {https://doi.org/10.1515/crll.2004.040},
	doi = {doi:10.1515/crll.2004.040},
	number = {571},
	urldate = {2025-10-21},
	journal = {Journal für die reine und angewandte Mathematik},
	year = {2004},
	pages = {179--212},
}

@article{Blumenhagen:2010ed,
    author = "Blumenhagen, Ralph and Jurke, Benjamin and Rahn, Thorsten and Roschy, Helmut",
    title = "{Cohomology of Line Bundles: Applications}",
    eprint = "1010.3717",
    archivePrefix = "arXiv",
    primaryClass = "hep-th",
    reportNumber = "MPP-2010-134, NSF-KITP-10-131",
    doi = "10.1063/1.3677646",
    journal = "J. Math. Phys.",
    volume = "53",
    pages = "012302",
    year = "2012"
}

@article{Rahn:2010fm,
    author = "Rahn, Thorsten and Roschy, Helmut",
    title = "{Cohomology of Line Bundles: Proof of the Algorithm}",
    eprint = "1006.2392",
    archivePrefix = "arXiv",
    primaryClass = "hep-th",
    reportNumber = "MPP-2010-64",
    doi = "10.1063/1.3523318",
    journal = "J. Math. Phys.",
    volume = "51",
    pages = "103520",
    year = "2010"
}

@article{jow_cohomology_2011,
	title = {Cohomology of toric line bundles via simplicial {Alexander} duality},
    eprint = "1006.0780",
    archivePrefix = "arXiv",
	volume = {52},
	issn = {0022-2488},
	url = {https://doi.org/10.1063/1.3562523},
	doi = {10.1063/1.3562523},
	number = {3},
	journal = {Journal of Mathematical Physics},
	author = {Jow, Shin-Yao},
	month = mar,
	year = {2011},
	pages = {033506},
}

@Misc{cohomCalg:Implementation,
   title     = "{cohomCalg package}",
   howpublished  = "Download link",
   url       = "https://github.com/BenjaminJurke/cohomCalg",
   note      = "High-performance line bundle cohomology computation based on \cite{Blumenhagen:2010pv}",
   year      = "2010"}

@article{Arkani-Hamed:2006emk,
    author = "Arkani-Hamed, Nima and Motl, Lubos and Nicolis, Alberto and Vafa, Cumrun",
    title = "{The String landscape, black holes and gravity as the weakest force}",
    eprint = "hep-th/0601001",
    archivePrefix = "arXiv",
    reportNumber = "HUTP-05-A0057",
    doi = "10.1088/1126-6708/2007/06/060",
    journal = "JHEP",
    volume = "06",
    pages = "060",
    year = "2007"
}

@article{Lee:2019wij,
    author = "Lee, Seung-Joo and Lerche, Wolfgang and Weigand, Timo",
    title = "{Emergent strings from infinite distance limits}",
    eprint = "1910.01135",
    archivePrefix = "arXiv",
    primaryClass = "hep-th",
    reportNumber = "CERN-TH-2019-159",
    doi = "10.1007/JHEP02(2022)190",
    journal = "JHEP",
    volume = "02",
    pages = "190",
    year = "2022"
}

@article{vandeHeisteeg:2023dlw,
    author = "van de Heisteeg, Damian and Vafa, Cumrun and Wiesner, Max and Wu, David H.",
    title = "{Species scale in diverse dimensions}",
    eprint = "2310.07213",
    archivePrefix = "arXiv",
    primaryClass = "hep-th",
    doi = "10.1007/JHEP05(2024)112",
    journal = "JHEP",
    volume = "05",
    pages = "112",
    year = "2024"
}

@article{Demirtas:2021gsq,
    author = "Demirtas, Mehmet and Gendler, Naomi and Long, Cody and McAllister, Liam and Moritz, Jakob",
    title = "{PQ axiverse}",
    eprint = "2112.04503",
    archivePrefix = "arXiv",
    primaryClass = "hep-th",
    doi = "10.1007/JHEP06(2023)092",
    journal = "JHEP",
    volume = "06",
    pages = "092",
    year = "2023"
}

@article{Gendler:2023hwg,
    author = "Gendler, Naomi and Janssen, Oliver and Kleban, Matthew and La Madrid, Joan and Mehta, Viraf M.",
    title = "{Axion minima in string theory}",
    eprint = "2309.01831",
    archivePrefix = "arXiv",
    primaryClass = "hep-th",
    doi = "10.1007/JHEP02(2025)134",
    journal = "JHEP",
    volume = "02",
    pages = "134",
    year = "2025"
}

@article{Gendler:2023kjt,
    author = "Gendler, Naomi and Marsh, David J. E. and McAllister, Liam and Moritz, Jakob",
    title = "{Glimmers from the axiverse}",
    eprint = "2309.13145",
    archivePrefix = "arXiv",
    primaryClass = "hep-th",
    reportNumber = "KCL-PH-TH/2023-49",
    doi = "10.1088/1475-7516/2024/09/071",
    journal = "JCAP",
    volume = "09",
    pages = "071",
    year = "2024"
}

@article{Gendler:2024adn,
    author = "Gendler, Naomi and Marsh, David J. E.",
    title = "{Possible Implications of QCD Axion Dark Matter Constraints from Helioscopes and Haloscopes for the String Theory Landscape}",
    eprint = "2407.07143",
    archivePrefix = "arXiv",
    primaryClass = "hep-th",
    doi = "10.1103/PhysRevLett.134.081602",
    journal = "Phys. Rev. Lett.",
    volume = "134",
    number = "8",
    pages = "081602",
    year = "2025"
}

@article{Sheridan:2024vtt,
    author = "Sheridan, Elijah and Carta, Federico and Gendler, Naomi and Jain, Mudit and Marsh, David J. E. and McAllister, Liam and Righi, Nicole and Rogers, Keir K. and Schachner, Andreas",
    title = "{Fuzzy axions and associated relics}",
    eprint = "2412.12012",
    archivePrefix = "arXiv",
    primaryClass = "hep-th",
    reportNumber = "KCL-PH-TH/2024-75, KCL-PH-TH/2024-75",
    doi = "10.1007/JHEP09(2025)016",
    journal = "JHEP",
    volume = "09",
    pages = "016",
    year = "2025"
}

@article{Cheng:2025ggf,
    author = "Cheng, Junyi and Gendler, Naomi",
    title = "{Universality in the axiverse}",
    eprint = "2507.12516",
    archivePrefix = "arXiv",
    primaryClass = "hep-th",
    doi = "10.1007/JHEP11(2025)012",
    journal = "JHEP",
    volume = "11",
    pages = "012",
    year = "2025"
}

@article{Yin:2025amn,
    author = "Yin, Ziwen and Cheng, Hanyu and Di Valentino, Eleonora and Gendler, Naomi and Marsh, David J. E. and Visinelli, Luca",
    title = "{Constraining the axiverse with reionization}",
    eprint = "2507.03535",
    archivePrefix = "arXiv",
    primaryClass = "hep-ph",
    reportNumber = "CA21106; CA21136",
    month = "7",
    year = "2025"
}

@article{McAllister:2024lnt,
    author = "McAllister, Liam and Moritz, Jakob and Nally, Richard and Schachner, Andreas",
    title = "{Candidate de Sitter vacua}",
    eprint = "2406.13751",
    archivePrefix = "arXiv",
    primaryClass = "hep-th",
    reportNumber = "CERN-TH-2024-090",
    doi = "10.1103/PhysRevD.111.086015",
    journal = "Phys. Rev. D",
    volume = "111",
    number = "8",
    pages = "086015",
    year = "2025"
}

@article{Demirtas:2021ote,
    author = "Demirtas, Mehmet and Kim, Manki and McAllister, Liam and Moritz, Jakob and Rios-Tascon, Andres",
    title = "{Exponentially Small Cosmological Constant in String Theory}",
    eprint = "2107.09065",
    archivePrefix = "arXiv",
    primaryClass = "hep-th",
    doi = "10.1103/PhysRevLett.128.011602",
    journal = "Phys. Rev. Lett.",
    volume = "128",
    number = "1",
    pages = "011602",
    year = "2022"
}

@article{Demirtas:2021nlu,
    author = "Demirtas, Mehmet and Kim, Manki and McAllister, Liam and Moritz, Jakob and Rios-Tascon, Andres",
    title = "{Small cosmological constants in string theory}",
    eprint = "2107.09064",
    archivePrefix = "arXiv",
    primaryClass = "hep-th",
    doi = "10.1007/JHEP12(2021)136",
    journal = "JHEP",
    volume = "12",
    pages = "136",
    year = "2021"
}

@article{Mehta:2021pwf,
    author = "Mehta, Viraf M. and Demirtas, Mehmet and Long, Cody and Marsh, David J. E. and McAllister, Liam and Stott, Matthew J.",
    title = "{Superradiance in string theory}",
    eprint = "2103.06812",
    archivePrefix = "arXiv",
    primaryClass = "hep-th",
    doi = "10.1088/1475-7516/2021/07/033",
    journal = "JCAP",
    volume = "07",
    pages = "033",
    year = "2021"
}

@book{weibel1994introduction,
  title={An introduction to homological algebra},
  author={Weibel, Charles A},
  number={38},
  year={1994},
  publisher={Cambridge university press}
}

@article{Cicoli:2012sz,
    author = "Cicoli, Michele and Goodsell, Mark and Ringwald, Andreas",
    title = "{The type IIB string axiverse and its low-energy phenomenology}",
    eprint = "1206.0819",
    archivePrefix = "arXiv",
    primaryClass = "hep-th",
    reportNumber = "DESY-12-058, CERN-PH-TH-2012-153",
    doi = "10.1007/JHEP10(2012)146",
    journal = "JHEP",
    volume = "10",
    pages = "146",
    year = "2012"
}

@article{hu2000mori,
  title={Mori dream spaces and GIT.},
  author={Hu, Yi and Keel, Sean},
  journal={Michigan Mathematical Journal},
  volume={48},
  number={1},
  pages={331--348},
  year={2000},
  publisher={University of Michigan, Department of Mathematics},
  eprint = {math/0004017},
  arxivPrefix = {arXiv},
}

@article{kawamataCone,
author = {Kawamata, Yujiro},
title = {On the Cone of Divisors of Calabi–Yau Fiber Spaces},
journal = {International Journal of Mathematics},
volume = {08},
number = {05},
pages = {665-687},
year = {1997},
doi = {10.1142/S0129167X97000354},

URL = { 
    
        https://doi.org/10.1142/S0129167X97000354
    
    

},
eprint = {alg-geom/9701006},
arxivPrefix = {arXiv},
}

@misc{NormalToricVarietiesSource,
  title = {{NormalToricVarieties: routines for working with normal toric varieties and related objects. Version~1.9}},
  author = {Gregory G. Smith},
  howpublished = {A \emph{Macaulay2} package available at
    \url{https://github.com/Macaulay2/M2/tree/master/M2/Macaulay2/packages}}
}

@misc{StringToricsSource,
  title = {{StringTorics: toric variety functions useful for investigations in string theory}},
  author = {Michael Stillman},
  howpublished = {A \emph{Macaulay2} package available at
    \url{https://github.com/Macaulay2/Workshop-2024-Utah/tree/CYTools/CYToolsM2/StringTorics}}
}

@article{Reece:2025thc,
    author = "Reece, Matthew",
    title = "{Extra-dimensional axion expectations}",
    eprint = "2406.08543",
    archivePrefix = "arXiv",
    primaryClass = "hep-ph",
    doi = "10.1007/JHEP07(2025)130",
    journal = "JHEP",
    volume = "07",
    pages = "130",
    year = "2025"
}

@article{Maharana:2012tu,
    author = "Maharana, Anshuman and Palti, Eran",
    title = "{Models of Particle Physics from Type IIB String Theory and F-theory: A Review}",
    eprint = "1212.0555",
    archivePrefix = "arXiv",
    primaryClass = "hep-th",
    reportNumber = "HRI-ST-1212, CPHT-RR088.1112",
    doi = "10.1142/S0217751X13300056",
    journal = "Int. J. Mod. Phys. A",
    volume = "28",
    pages = "1330005",
    year = "2013"
}

@article{Marchesano:2024gul,
    author = "Marchesano, Fernando and Shiu, Gary and Weigand, Timo",
    title = "{The Standard Model from String Theory: What Have We Learned?}",
    eprint = "2401.01939",
    archivePrefix = "arXiv",
    primaryClass = "hep-th",
    reportNumber = "IFT-UAM/CSIC-24-01, ZMP-HH/24-01",
    doi = "10.1146/annurev-nucl-102622-012235",
    journal = "Ann. Rev. Nucl. Part. Sci.",
    volume = "74",
    pages = "113--140",
    year = "2024"
}

@article{vandeHeisteeg:2023uxj,
    author = "van de Heisteeg, Damian and Vafa, Cumrun and Wiesner, Max and Wu, David H.",
    title = "{Bounds on field range for slowly varying positive potentials}",
    eprint = "2305.07701",
    archivePrefix = "arXiv",
    primaryClass = "hep-th",
    doi = "10.1007/JHEP02(2024)175",
    journal = "JHEP",
    volume = "02",
    pages = "175",
    year = "2024"
}

@article{Banks:2010zn,
    author = "Banks, Tom and Seiberg, Nathan",
    title = "{Symmetries and Strings in Field Theory and Gravity}",
    eprint = "1011.5120",
    archivePrefix = "arXiv",
    primaryClass = "hep-th",
    doi = "10.1103/PhysRevD.83.084019",
    journal = "Phys. Rev. D",
    volume = "83",
    pages = "084019",
    year = "2011"
}

@article{Polchinski:2003bq,
    author = "Polchinski, Joseph",
    editor = "Baer, H. and Belyaev, A.",
    title = "{Monopoles, duality, and string theory}",
    eprint = "hep-th/0304042",
    archivePrefix = "arXiv",
    doi = "10.1142/S0217751X0401866X",
    journal = "Int. J. Mod. Phys. A",
    volume = "19S1",
    pages = "145--156",
    year = "2004"
}

@article{Alim:2021vhs,
    author = "Alim, Murad and Heidenreich, Ben and Rudelius, Tom",
    title = "{The Weak Gravity Conjecture and BPS Particles}",
    eprint = "2108.08309",
    archivePrefix = "arXiv",
    primaryClass = "hep-th",
    reportNumber = "ACFI-T21-09",
    doi = "10.1002/prop.202100125",
    journal = "Fortsch. Phys.",
    volume = "69",
    number = "11-12",
    pages = "2100125",
    year = "2021"
}

@article{Douglas:2020hpv,
    author = "Douglas, Michael R. and Lakshminarasimhan, Subramanian and Qi, Yidi",
    title = "{Numerical Calabi-Yau metrics from holomorphic networks}",
    eprint = "2012.04797",
    archivePrefix = "arXiv",
    primaryClass = "hep-th",
    month = "12",
    year = "2020"
}

@article{Larfors:2021pbb,
    author = "Larfors, Magdalena and Lukas, Andre and Ruehle, Fabian and Schneider, Robin",
    title = "{Learning Size and Shape of Calabi-Yau Spaces}",
    eprint = "2111.01436",
    archivePrefix = "arXiv",
    primaryClass = "hep-th",
    reportNumber = "UUITP-53/21",
    month = "11",
    year = "2021"
}

@article{Gerdes:2022nzr,
    author = "Gerdes, Mathis and Krippendorf, Sven",
    title = "{CYJAX: A package for Calabi-Yau metrics with JAX}",
    eprint = "2211.12520",
    archivePrefix = "arXiv",
    primaryClass = "hep-th",
    doi = "10.1088/2632-2153/acdc84",
    journal = "Mach. Learn. Sci. Tech.",
    volume = "4",
    number = "2",
    pages = "025031",
    year = "2023"
}

@article{Ashmore:2019wzb,
    author = "Ashmore, Anthony and He, Yang-Hui and Ovrut, Burt A.",
    title = "{Machine Learning Calabi{\textendash}Yau Metrics}",
    eprint = "1910.08605",
    archivePrefix = "arXiv",
    primaryClass = "hep-th",
    doi = "10.1002/prop.202000068",
    journal = "Fortsch. Phys.",
    volume = "68",
    number = "9",
    pages = "2000068",
    year = "2020"
}

@article{Butbaia:2024xgj,
    author = {Butbaia, Giorgi and Mayorga Pe{\~n}a, Dami{\'a}n and Tan, Justin and Berglund, Per and H{\"u}bsch, Tristan and Jejjala, Vishnu and Mishra, Challenger},
    title = "{cymyc: Calabi-Yau Metrics, Yukawas, and Curvature}",
    eprint = "2410.19728",
    archivePrefix = "arXiv",
    primaryClass = "hep-th",
    doi = "10.1007/JHEP03(2025)028",
    journal = "JHEP",
    volume = "03",
    pages = "028",
    year = "2025"
}

@article{Witten:1996bn,
    author = "Witten, Edward",
    title = "{Nonperturbative superpotentials in string theory}",
    eprint = "hep-th/9604030",
    archivePrefix = "arXiv",
    reportNumber = "IASSNS-HEP-96-29",
    doi = "10.1016/0550-3213(96)00283-0",
    journal = "Nucl. Phys. B",
    volume = "474",
    pages = "343--360",
    year = "1996"
}

@article{mckernan2010mori,
  title={Mori dream spaces},
  author={McKernan, James},
  doi = "10.1007/s11537-010-0944-7",
  journal = "Jpn. J. Math.",
  volume = "5",
  pages = "127--151",
  year={2010},
  publisher={Springer},
}

@article{birkar2010existence,
  title={On existence of log minimal models},
  author={Birkar, Caucher},
  eprint = "0706.1792",
  archivePrefix = "arXiv",
  journal={Compositio Mathematica},
  volume={146},
  number={4},
  pages={919--928},
  year={2010},
  publisher={London Mathematical Society}
}

@incollection{AST_1993__218__243_0,
     author = {Morrison, David R.},
     title = {Compactifications of moduli spaces inspired by mirror symmetry},
     booktitle = {Journ\'ees de g\'eom\'etrie alg\'ebrique d'Orsay - Juillet 1992},
     series = {Ast\'erisque},
     pages = {243--271},
     year = {1993},
     publisher = {Soci\'et\'e math\'ematique de France},
     number = {218},
     language = {en},
     url = {https://www.numdam.org/item/AST_1993__218__243_0/},
     eprint = {alg-geom/9304007},
     arxivPrefix = "arXiv",
}

@article{Heidenreich:2016aqi,
    author = "Heidenreich, Ben and Reece, Matthew and Rudelius, Tom",
    title = "{Evidence for a sublattice weak gravity conjecture}",
    eprint = "1606.08437",
    archivePrefix = "arXiv",
    primaryClass = "hep-th",
    doi = "10.1007/JHEP08(2017)025",
    journal = "JHEP",
    volume = "08",
    pages = "025",
    year = "2017"
}

@article{Andriolo:2018lvp,
    author = "Andriolo, Stefano and Junghans, Daniel and Noumi, Toshifumi and Shiu, Gary",
    title = "{A Tower Weak Gravity Conjecture from Infrared Consistency}",
    eprint = "1802.04287",
    archivePrefix = "arXiv",
    primaryClass = "hep-th",
    reportNumber = "KOBE-COSMO-18-01, MAD-TH-17-07",
    doi = "10.1002/prop.201800020",
    journal = "Fortsch. Phys.",
    volume = "66",
    number = "5",
    pages = "1800020",
    year = "2018"
}

@article{Heidenreich:2015nta,
    author = "Heidenreich, Ben and Reece, Matthew and Rudelius, Tom",
    title = "{Sharpening the Weak Gravity Conjecture with Dimensional Reduction}",
    eprint = "1509.06374",
    archivePrefix = "arXiv",
    primaryClass = "hep-th",
    doi = "10.1007/JHEP02(2016)140",
    journal = "JHEP",
    volume = "02",
    pages = "140",
    year = "2016"
}

@article{Reece:2025zva,
    author = "Reece, Matthew and Rudelius, Tom and Tudball, Christopher",
    title = "{Co-scaling and alignment of electric and magnetic towers}",
    eprint = "2505.22713",
    archivePrefix = "arXiv",
    primaryClass = "hep-th",
    doi = "10.1007/JHEP09(2025)146",
    journal = "JHEP",
    volume = "09",
    pages = "146",
    year = "2025"
}

@article{Etheredge:2025rkn,
    author = "Etheredge, Muldrow and Heidenreich, Ben and Pittman, Nicholas and Rauch, Sebastian and Reece, Matthew and Rudelius, Tom",
    title = "{Confined monopoles and failure of the Lattice Weak Gravity Conjecture}",
    eprint = "2502.14951",
    archivePrefix = "arXiv",
    primaryClass = "hep-th",
    reportNumber = "ACFI-T25-01",
    doi = "10.1007/JHEP10(2025)186",
    journal = "JHEP",
    volume = "10",
    pages = "186",
    year = "2025"
}

@article{Harlow:2018tng,
    author = "Harlow, Daniel and Ooguri, Hirosi",
    title = "{Symmetries in quantum field theory and quantum gravity}",
    eprint = "1810.05338",
    archivePrefix = "arXiv",
    primaryClass = "hep-th",
    doi = "10.1007/s00220-021-04040-y",
    journal = "Commun. Math. Phys.",
    volume = "383",
    number = "3",
    pages = "1669--1804",
    year = "2021"
}

@article{Heidenreich:2021xpr,
    author = "Heidenreich, Ben and McNamara, Jacob and Montero, Miguel and Reece, Matthew and Rudelius, Tom and Valenzuela, Irene",
    title = "{Non-invertible global symmetries and completeness of the spectrum}",
    eprint = "2104.07036",
    archivePrefix = "arXiv",
    primaryClass = "hep-th",
    reportNumber = "ACFI-T21-03",
    doi = "10.1007/JHEP09(2021)203",
    journal = "JHEP",
    volume = "09",
    pages = "203",
    year = "2021"
}

@article{Rudelius:2020orz,
    author = "Rudelius, Tom and Shao, Shu-Heng",
    title = "{Topological Operators and Completeness of Spectrum in Discrete Gauge Theories}",
    eprint = "2006.10052",
    archivePrefix = "arXiv",
    primaryClass = "hep-th",
    doi = "10.1007/JHEP12(2020)172",
    journal = "JHEP",
    volume = "12",
    pages = "172",
    year = "2020"
}

@article{McNamara:2021cuo,
    author = "McNamara, Jacob",
    title = "{Gravitational Solitons and Completeness}",
    eprint = "2108.02228",
    archivePrefix = "arXiv",
    primaryClass = "hep-th",
    month = "8",
    year = "2021"
}

@article{Gopakumar:1998jq,
    author = "Gopakumar, Rajesh and Vafa, Cumrun",
    title = "{M theory and topological strings. 2.}",
    eprint = "hep-th/9812127",
    archivePrefix = "arXiv",
    reportNumber = "HUTP-98-A070",
    month = "12",
    year = "1998"
}

@article{Gendler:2020dfp,
    author = "Gendler, Naomi and Valenzuela, Irene",
    title = "{Merging the weak gravity and distance conjectures using BPS extremal black holes}",
    eprint = "2004.10768",
    archivePrefix = "arXiv",
    primaryClass = "hep-th",
    doi = "10.1007/JHEP01(2021)176",
    journal = "JHEP",
    volume = "01",
    pages = "176",
    year = "2021"
}

@article{maulik2006gromovi,
  title={Gromov--witten theory and donaldson--thomas theory, i},
  author={Maulik, Davesh and Nekrasov, Nikita and Okounkov, Andrei and Pandharipande, Rahul},
  eprint = "math/0312059",
  archivePrefix = "arXiv",
  journal={Compositio Mathematica},
  volume={142},
  number={5},
  pages={1263--1285},
  year={2006},
  publisher={London Mathematical Society}
}

@article{maulik2006gromovii,
  title={Gromov--witten theory and donaldson--thomas theory, ii},
  author={Maulik, Davesh and Nekrasov, Nikita and Okounkov, Andrei and Pandharipande, Rahul},
  eprint = "math/0406092",
  archivePrefix = "arXiv",
  journal={Compositio Mathematica},
  volume={142},
  number={5},
  pages={1286--1304},
  year={2006},
  publisher={London Mathematical Society}
}

@inpreparation{constantin_inprep,
    author = {Constantin, Andrei and Lukas, Andre and Sheridan, Elijah},
    title = "{Line Bundle Cohomology Formulae via the Minimal Model Program: Mori Dream Spaces and Calabi--Yau Threefolds}",
    year="\emph{in preparation}"
}

@article{Vafa:2005ui,
    author = "Vafa, Cumrun",
    title = "{The String landscape and the swampland}",
    eprint = "hep-th/0509212",
    archivePrefix = "arXiv",
    reportNumber = "HUTP-05-A043",
    month = "9",
    year = "2005"
}

@article{Grimm:2020cda,
    author = "Grimm, Thomas W.",
    title = "{Moduli space holography and the finiteness of flux vacua}",
    eprint = "2010.15838",
    archivePrefix = "arXiv",
    primaryClass = "hep-th",
    doi = "10.1007/JHEP10(2021)153",
    journal = "JHEP",
    volume = "10",
    pages = "153",
    year = "2021"
}

@article{Hamada:2021yxy,
    author = "Hamada, Yuta and Montero, Miguel and Vafa, Cumrun and Valenzuela, Irene",
    title = "{Finiteness and the swampland}",
    eprint = "2111.00015",
    archivePrefix = "arXiv",
    primaryClass = "hep-th",
    doi = "10.1088/1751-8121/ac6404",
    journal = "J. Phys. A",
    volume = "55",
    number = "22",
    pages = "224005",
    year = "2022"
}

@article{Tarazi:2021duw,
    author = "Tarazi, Houri-Christina and Vafa, Cumrun",
    title = "{On The Finiteness of 6d Supergravity Landscape}",
    eprint = "2106.10839",
    archivePrefix = "arXiv",
    primaryClass = "hep-th",
    month = "6",
    year = "2021"
}

@article{Kim:2024eoa,
    author = "Kim, Hee-Cheol and Vafa, Cumrun and Xu, Kai",
    title = "{Finite Landscape of 6d N=(1,0) Supergravity}",
    eprint = "2411.19155",
    archivePrefix = "arXiv",
    primaryClass = "hep-th",
    doi = "10.21468/SciPostPhys.20.1.016",
    journal = "SciPost Phys.",
    volume = "20",
    pages = "016",
    year = "2026"
}

@article{Delgado:2024skw,
    author = "Delgado, Matilda and van de Heisteeg, Damian and Raman, Sanjay and Torres, Ethan and Vafa, Cumrun and Xu, Kai",
    title = "{Finiteness and the emergence of dualities}",
    eprint = "2412.03640",
    archivePrefix = "arXiv",
    primaryClass = "hep-th",
    reportNumber = "MPP-2024-224, CERN-TH-2024-204",
    doi = "10.21468/SciPostPhys.19.2.047",
    journal = "SciPost Phys.",
    volume = "19",
    number = "2",
    pages = "047",
    year = "2025"
}

@article{Agmon:2022thq,
    author = "Agmon, Nathan Benjamin and Bedroya, Alek and Kang, Monica Jinwoo and Vafa, Cumrun",
    title = "{Lectures on the string landscape and the Swampland}",
    eprint = "2212.06187",
    archivePrefix = "arXiv",
    primaryClass = "hep-th",
    month = "12",
    year = "2022"
}

@article{Yau1990,
     ISSN = {08940347, 10886834},
     URL = {http://www.jstor.org/stable/1990928},
     author = {G. Tian and Shing Tung Yau},
     journal = {Journal of the American Mathematical Society},
     number = {3},
     pages = {579--609},
     publisher = {American Mathematical Society},
     title = {Complete Kähler Manifolds with Zero Ricci Curvature. I},
     urldate = {2025-12-01},
     volume = {3},
     year = {1990}
}

@article{Yau1991,
    author = {Yau, Shing Tung and Tian, Gang},
    journal = {Inventiones mathematicae},
    number = {1},
    pages = {27-60},
    title = {Complete Kähler manifolds with zero Ricci curvature. II.},
    url = {http://eudml.org/doc/143931},
    volume = {106},
    year = {1991},
}

@article{Yau1978,
author = {Yau, Shing-Tung},
title = {On the ricci curvature of a compact kähler manifold and the complex monge-ampére equation, I},
journal = {Communications on Pure and Applied Mathematics},
volume = {31},
number = {3},
pages = {339-411},
doi = {https://doi.org/10.1002/cpa.3160310304},
year = {1978}
}

@article{Jain:2025vfh,
    author = "Jain, Mudit and Sheridan, Elijah and Marsh, David J. E. and Heyes, Elli and Rogers, Keir K. and Schachner, Andreas",
    title = "{Bayesian inference on Calabi--Yau moduli spaces and the axiverse: experimental data meets string theory}",
    eprint = "2512.00144",
    archivePrefix = "arXiv",
    primaryClass = "hep-th",
    month = "11",
    year = "2025"
}

@article{katz1992gorenstein,
  title={Gorenstein threefold singularities with small resolutions via invariant theory for Weyl groups},
  author={Katz, Sheldon and Morrison, David R},
  eprint = "alg-geom/9202002",
  archivePrefix = "arXiv",
  primaryClass = "alg-geom",
  year={1992}
}

@book{Hori:2003ic,
    author = "Hori, K. and Katz, S. and Klemm, A. and Pandharipande, R. and Thomas, R. and Vafa, C. and Vakil, R. and Zaslow, E.",
    title = "{Mirror symmetry}",
    publisher = "AMS",
    address = "Providence, USA",
    series = "Clay mathematics monographs",
    volume = "1",
    year = "2003"
}

@article{Grimm:2023lrf,
    author = "Grimm, Thomas W. and Monnee, Jeroen",
    title = "{Finiteness theorems and counting conjectures for the flux landscape}",
    eprint = "2311.09295",
    archivePrefix = "arXiv",
    primaryClass = "hep-th",
    doi = "10.1007/JHEP08(2024)039",
    journal = "JHEP",
    volume = "08",
    pages = "039",
    year = "2024"
}

@article{Grimm:2021vpn,
    author = "Grimm, Thomas W.",
    title = "{Taming the landscape of effective theories}",
    eprint = "2112.08383",
    archivePrefix = "arXiv",
    primaryClass = "hep-th",
    doi = "10.1007/JHEP11(2022)003",
    journal = "JHEP",
    volume = "11",
    pages = "003",
    year = "2022"
}

@article{kawamata1982generalization,
  title={A generalization of Kodaira-Ramanujam's vanishing theorem},
  author={Kawamata, Yujiro},
  journal={Mathematische Annalen},
  volume={261},
  number={1},
  pages={43--46},
  year={1982},
  publisher={Springer}
}

@article{viehweg1982vanishing,
  title={Vanishing theorems.},
  author={Viehweg, Eckart},
  journal = "Journal für die reine und angewandte Mathematik",
  volume = "335",
  pages = "1--8",
  year={1982},
  publisher={Walter de Gruyter, Berlin/New York Berlin, New York}
}

@Article{BagnaraHZ08SCP, 
  Author = "R. Bagnara and P. M. Hill and E. Zaffanella", 
  Title = "The {Parma Polyhedra Library}: Toward a Complete Set of Numerical 
             Abstractions for the Analysis and Verification 
             of Hardware and Software Systems", 
  Journal = "Science of Computer Programming", 
  Volume = 72, Number = "1--2", 
  Pages = "3--21", 
  Year = 2008,
}

@article{mitchell2011pulp,
  title={Pulp: a linear programming toolkit for python},
  author={Mitchell, Stuart and OSullivan, Michael and Dunning, Iain},
  journal={The University of Auckland, Auckland, New Zealand},
  volume={65},
  pages={25},
  year={2011}
}

@article{kaloghiros2016finite,
  title={Finite generation and geography of models},
  author={Kaloghiros, Anne-Sophie and K{\"u}ronya, Alex and Lazic, Vladimir},
  eprint = "1202.1164",
  archivePrefix = "arXiv",
  journal={Minimal models and extremal rays (Kyoto, 2011)},
  volume={70},
  pages={215--245},
  year={2016}
}

@article{MacFadden:2025ssx,
    author = "MacFadden, Nate and Sheridan, Elijah",
    title = "{Calabi-Yau Threefolds from Vex Triangulations}",
    eprint = "2512.14817",
    archivePrefix = "arXiv",
    primaryClass = "hep-th",
    month = "12",
    year = "2025"
}

@article{keel1994log,
  title={Log abundance theorem for threefolds},
  author={Keel, Sean and Matsuki, Kenji and McKernan, James},
  doi = "10.1215/S0012-7094-94-07504-2",
  journal = "Duke Math. J.",
  volume = "75",
  number = "1",
  pages = "99--119",
  year={1994}
}

@book{KM,
  title={Birational geometry of algebraic varieties},
  author={Koll{\'a}r, J{\'a}nos and Mori, Shigefumi},
  volume={134},
  year={1998},
  publisher={Cambridge university press}
}

@inproceedings{McAllister:2025qwq,
    author = "McAllister, Liam and Schachner, Andreas",
    title = "{TASI Lectures on de Sitter Vacua}",
    eprint = "2512.17095",
    archivePrefix = "arXiv",
    primaryClass = "hep-th",
    month = "12",
    year = "2025"
}

@article{Chen:2025rkb,
    author = "Chen, Shi and van de Heisteeg, Damian and Vafa, Cumrun",
    title = "{Symmetries and M-theory-like vacua in four dimensions}",
    eprint = "2503.16599",
    archivePrefix = "arXiv",
    primaryClass = "hep-th",
    doi = "10.1007/JHEP07(2025)258",
    journal = "JHEP",
    volume = "07",
    pages = "258",
    year = "2025"
}

@article{Mohseni:2025tig,
    author = "Mohseni, Amineh and Vafa, Cumrun",
    title = "{Symmetry points of $ \mathcal{N}=1 $ modular geometry}",
    eprint = "2510.19927",
    archivePrefix = "arXiv",
    primaryClass = "hep-th",
    doi = "10.1007/JHEP02(2026)202",
    journal = "JHEP",
    volume = "02",
    pages = "202",
    year = "2026"
}

@article{Becker:2022hse,
    author = "Becker, Katrin and Gonzalo, Eduardo and Walcher, Johannes and Wrase, Timm",
    title = "{Fluxes, vacua, and tadpoles meet Landau-Ginzburg and Fermat}",
    eprint = "2210.03706",
    archivePrefix = "arXiv",
    primaryClass = "hep-th",
    doi = "10.1007/JHEP12(2022)083",
    journal = "JHEP",
    volume = "12",
    pages = "083",
    year = "2022"
}

@article{Becker:2024ijy,
    author = "Becker, Katrin and Rajaguru, Muthusamy and Sengupta, Anindya and Walcher, Johannes and Wrase, Timm",
    title = "{Stabilizing massless fields with fluxes in Landau-Ginzburg models}",
    eprint = "2406.03435",
    archivePrefix = "arXiv",
    primaryClass = "hep-th",
    doi = "10.1007/JHEP08(2024)069",
    journal = "JHEP",
    volume = "08",
    pages = "069",
    year = "2024"
}

@article{Rajaguru:2024emw,
    author = "Rajaguru, Muthusamy and Sengupta, Anindya and Wrase, Timm",
    title = "{Fully stabilized Minkowski vacua in the 2$^{6}$ Landau-Ginzburg model}",
    eprint = "2407.16756",
    archivePrefix = "arXiv",
    primaryClass = "hep-th",
    doi = "10.1007/JHEP10(2024)095",
    journal = "JHEP",
    volume = "10",
    pages = "095",
    year = "2024"
}

@article{Bardzell:2022jfh,
    author = "Bardzell, Jacob and Gonzalo, Eduardo and Rajaguru, Muthusamy and Smith, Danielle and Wrase, Timm",
    title = "{Type IIB flux compactifications with h$^{1,1}$ = 0}",
    eprint = "2203.15818",
    archivePrefix = "arXiv",
    primaryClass = "hep-th",
    doi = "10.1007/JHEP06(2022)166",
    journal = "JHEP",
    volume = "06",
    pages = "166",
    year = "2022"
}

@article{Reece:2026hfk,
    author = "Reece, Matthew and Rudelius, Tom",
    title = "{Nonabelian Lattice Weak Gravity Conjecture and Monopole Confinement}",
    eprint = "2603.04494",
    archivePrefix = "arXiv",
    primaryClass = "hep-th",
    month = "3",
    year = "2026"
}

@article{Bedroya:2025fie,
    author = "Bedroya, Alek and Steinhardt, Paul J.",
    title = "{Holographic Constraints on the String Landscape}",
    eprint = "2511.15784",
    archivePrefix = "arXiv",
    primaryClass = "hep-th",
    month = "11",
    year = "2025"
}

@article{Bedroya:2025ltj,
    author = "Bedroya, Alek and Steinhardt, Paul J.",
    title = "{Holography vs. Scale Separation}",
    eprint = "2509.25313",
    archivePrefix = "arXiv",
    primaryClass = "hep-th",
    month = "9",
    year = "2025"
}

\end{document}